\title[The Vlasov-Born-Infeld system]
{Low dimensional Born-Infeld equations coupled with a collisionless matter model}
\author{Ho Lee}
\address{Department of Mathematics, Kyung Hee University, Seoul, 130-701, Republic of Korea}
\email{holee@khu.ac.kr}
\begin{document}

\newtheorem{theorem}{Theorem}[section]
\newtheorem{lemma}{Lemma}[section]
\newtheorem{corollary}{Corollary}[section]
\newtheorem{proposition}{Proposition}[section]
\newtheorem{remark}{Remark}[section]
\newtheorem{definition}{Definition}[section]

\renewcommand{\theequation}{\thesection.\arabic{equation}}
\renewcommand{\thetheorem}{\thesection.\arabic{theorem}}
\renewcommand{\thelemma}{\thesection.\arabic{lemma}}
\newcommand{\bbr}{\mathbb R}
\def\charf {\mbox{{\text 1}\kern-.24em {\text l}}}

\keywords{Vlasov; Born-Infeld; one and one-half dimensions}
\subjclass{35L60; 35Q83}

\maketitle

\begin{abstract} We consider the Born-Infeld nonlinear electromagnetic field equations
and study its Cauchy problem in the case that the Vlasov equation is considered
as a matter model. In the present paper, the Vlasov equation is considered on the so-called
one and one-half dimensional phase space,
and in consequence the Born-Infeld equations are reduced to a quasilinear
hyperbolic system with two unknowns.
A transformation is introduced in order to make the field equations
easy to handle, and suitable assumptions are made on initial data
so that the nonlinearity of the field is controlled.
\end{abstract}

%%%%%%%%%%%%%%%%%%%%%%%%%%%%%%%%%%%
%
% Section 1. Introduction
%
%%%%%%%%%%%%%%%%%%%%%%%%%%%%%%%%%%%
\section{Introduction}
The Born-Infeld (BI) electromagnetic theory \cite{BI34} was originally proposed as a nonlinear
correction of the Maxwell theory in order to
overcome the problem of infinities in the classical electrodynamics of point particles.
The underlying idea was to simply modify the classical theory not to have physical quantities of infinities,
that is the principle of finiteness.
It was to replace the original Lagrangian density for the Maxwell electrodynamics
with a square root form
with a parameter $b$, by which the finiteness of electric fields is ensured.
This was
the same with the way the special relativity has taken into account the finiteness of the speed of light $c$,
i.e., $-mc^2\sqrt{1-v^2/c^2}$ replaced $\frac{1}{2}mv^2$.
The exact form of the BI Lagrangian density appears in the next section.

This theory in recent years again has received much attention since the string theory found its relevance to
the BI theory \cite{Po,Zw}: the dynamics of electromagnetic fields on D-branes is described by
the BI theory, and finiteness of electric fields is naturally observed.
Be that as it may, this paper is not going to the string theory, but the BI theory
will be considered rather as a nonlinear version of the Maxwell theory
as it was the original standpoint of Born and Infeld.
Hence, it makes sense to choose the Vlasov equation as a matter model since
this equation describes the dynamics of classical particles.

In the present paper, we study the BI equations coupled with the Vlasov equation
as a matter model, and this induces inhomogeneities to the BI equations.
For homogeneous cases, we can find several results on the Cauchy problem.
In \cite{CH03}, Chae and Huh proved the existence of global classical solutions for small initial data.
They considered the BI equations as quasilinear wave equations and crucially used the null form structure
of the nonlinear terms. On the other hand, Brenier \cite{Br04} considered the BI equations
as a system of hyperbolic conservation laws.
He supplemented additional conservation laws to obtain an augmented BI system
and discussed conditions for existence of global solutions in one dimensional case.
The similar frameworks can be found in \cite{Pe06,Pe07,PR07,Se,Se04}, and for different approaches
we refer to \cite{FOP02,Gi98,Ya00}.
In this paper, the BI equations will be considered as a system of hyperbolic equations as in \cite{Br04},
but it differs in that inhomogeneous terms are taken into account.

On the other hand, the Vlasov equation has been widely used to describe matters in connection with
other field equations: electrostatic or electromagnetic fields, non-relativistic or fully relativistic
gravitational fields, and so on. For detailed discussions on the Vlasov-type equations,
we refer to \cite{An02,Gl} and the references therein.
The analysis of the Vlasov-Maxwell system is applied to our case,
and as a simple but nontrivial case we consider the ``one and one-half dimensional'' case as in \cite{GSc90}.
The Vlasov equation in this dimensional case is introduced in Section 2.

Before we proceed, we briefly discuss some points of this paper.

1. To make a coupled system of the BI and the Vlasov equations, we should determine an analogous equation
corresponding to the Lorentz equation in the Maxwell case. In other words, we have to
derive an equation of motions of particles interacting with the BI electromagnetic field.
However, as it was pointed out in \cite{Chr98},
it is impossible to derive separate equations for the charged particles due to the nonlinearity of the field.
Instead, we use the ``generalized Lorentz force'',
which is available for small and almost constant fields \cite{Che99,Che05},
and this will be discussed in Section 2.

2. As we mentioned above, the BI equations will be considered as a system of hyperbolic equations.
To better understand its hyperbolic structure, we introduce a transformation in Section 3.
This transformation is quite tricky but simple. By this transformation,
the linear degeneracy of the BI system is then easily seen, and moreover it makes the system easy to handle,
for instance the proof of Lemma 5.6 would be much more complicated without the transformation.

3. The main difference from the Maxwell case is that we have to control
the characteristics of the field equations as well as of the Vlasov equation.
In Glassey-Strauss' result \cite{GSt86}, spatial and temporal derivatives were split
into linear combinations of the Maxwell and the Vlasov characteristics called $T$ and $S$ derivatives respectively.
This was possible because
the Maxwell field always propagates with the speed of light, while the Vlasov particles do not move
with the speed of light.
However, it will be shown in Section 3 that the BI field
propagates with speed $-\cos\beta$ or $\cos\alpha$, which may have values less than the
speed of light $c=1$. Hence, we should make a suitable assumption on initial data
in order to apply the argument of \cite{GSt86}, and this will be discussed in Section 4.

4. In contrast to Glassey-Schaeffer's result \cite{GSc90}, we obtain a local-in-time result.
In that paper, the Maxwell electromagnetic field is estimated as $|E|+|B|\leq c <\infty$,
(see Lemma 1 and Corollary 1 in \cite{GSc90}),
and this estimate implies that any particles considered have momentum $v$ with growth order $v\sim t$.
However, this is not enough to be applied to our case.
If a momentum grows, and consequently tends to infinity,
then its corresponding velocity will tend to the speed of light,
and finally a resonance between the particle velocity and the BI field propagation
may appear in a finite time.
This will be discussed in Section 5.2.
It seems that more qualitative analysis for
the BI and the Vlasov characteristics
is required to obtain a global-in-time result.\bigskip

\noindent{\bf Notations.} We collect some notations which are used in this paper.
\begin{itemize}
\item The speed of light $c$, and mass $m$ and charge $q$ of each particle are assumed to be unity:
\[
c=m=q=1.
\]
\item Greek indices run from $0$ to $3$ and Latin indices from $1$ to $3$.
The indices are raised or lowered by multiplication with the Minkowski metric
$\eta_{\alpha\beta}=\mbox{diag}(-1,1,1,1)$, and we use the Einstein summation convention such as
\[
{\tt x}_\alpha {\tt y}^\alpha={\tt x}^\alpha {\tt y}_\alpha
=\eta_{\alpha\beta}{\tt x}^\alpha{\tt y}^\beta
=-{\tt x}^0{\tt y}^0+{\tt x}^1{\tt y}^1+{\tt x}^2{\tt y}^2+{\tt x}^3{\tt y}^3.
\]
This notation will appear only in Section 2.1.
\item For a scalar or vector valued function ${\tt v}\in\bbr^d$, $d=1,2,3$, we define
\[
\hat{{\tt v}}:=\frac{{\tt v}}{\sqrt{1+|{\tt v}|^2}}.
\]
Then for a momentum $v\in\bbr^2$, its corresponding velocity is defined by $\hat{v}$.
\item The usual $L^p$ norms, $1\leq p\leq\infty$, are used. For a function ${\tt f}={\tt f}({\tt x})$ defined on
\[
{\tt x}=({\tt x}_1,\cdots,{\tt x}_r)\in\bbr^{d_1}\times\cdots\times\bbr^{d_r},
\]
its norms are defined as follows: for $1\leq p<\infty$ and each $i=1,\cdots,r$,
\begin{align*}
||{\tt f}||_p^p&=\int_{\bbr^{d_1}\times\cdots\times\bbr^{d_r}}|{\tt f}({\tt x})|^p\,
d{\tt x}_1\, \cdots\, d{\tt x}_r,\\
||{\tt f}({\tt x}_i)||_p^p&=\int_{\bbr^{d_1}\times\cdots\times\bbr^{d_{i-1}}\times
\bbr^{d_{i+1}}\times\cdots\times\bbr^{d_r}}|{\tt f}({\tt x})|^p\,
d{\tt x}_1\, \cdots\, d{\tt x}_{i-1}\,d{\tt x}_{i+1}\, \cdots\, d{\tt x}_r,
\end{align*}
and similarly for $p=\infty$ and each $i=1,\cdots,r$,
\begin{align*}
||{\tt f}||_\infty&=\sup\{|{\tt f}({\tt y})|:
{\tt y}\in\bbr^{d_1}\times\cdots\times\bbr^{d_r}\},\\
||{\tt f}({\tt x}_i)||_\infty&=\sup\{|{\tt f}({\tt y})|:
{\tt y}\in\bbr^{d_1}\times\cdots\times\bbr^{d_r},{\tt y}_i={\tt x}_i\},
\end{align*}
and so on.
\end{itemize}

\bigskip

This paper is organized as follows: we first set up the Cauchy problem for the coupled system in Section 2,
where the Born-Infeld equations will be properly coupled to the Vlasov equation in the one and one-half
dimensional case. In Section 3, we introduce a transformation which transforms the Born-Infeld equations
into a quasilinear system in a diagonal form. In Section 4, we state the main result of the present paper.
Suitable assumptions will be made on initial data in this section.
Section 5 and 6 are devoted to the proof of the main theorem.
%%%%%%%%%%%%%%%%%%%%%%%%%%%%%%%%%%%%%%%%%%%%%%%%%%%%%%%%%%%%%%%%%%%%%%%%%%%%%%
%
% Section 2. Problem setting: coupled system in one and one-half dimensions
%
%%%%%%%%%%%%%%%%%%%%%%%%%%%%%%%%%%%%%%%%%%%%%%%%%%%%%%%%%%%%%%%%%%%%%%%%%%%%%%
\section{Problem setting: coupled system in one and one-half dimensions}\setcounter{equation}{0}
In this section, we study the Born-Infeld equations and the Vlasov equation to consider
their coupled system in a low dimensional case which is
called the one and one-half dimensions. In the first part we briefly review the Born-Infeld electromagnetic theory,
and then in the second part we consider the one and one-half dimensional
case for the coupled system.
%%%%%%%%%%%%%%%%%%%%%%%%%%%%%%%%%%%%%%%%%%%%%%%%%%%%%%%
%
% Subsection 2.1. The Born-Infeld equations
%
%%%%%%%%%%%%%%%%%%%%%%%%%%%%%%%%%%%%%%%%%%%%%%%%%%%%%%%
\subsection{The Born-Infeld equations}
One can obtain field equations by constructing a suitable Lagrangian density $\mathcal{L}(F)$ and
then applying the Euler-Lagrange equations to it.
In the case that charged particles are given as source,
the interaction between the particles and field must be considered in
the construction of a Lagrangian density.
In this paper, we simply construct a Lagrangian density
by adding the standard interaction term $A_\alpha j^\alpha$
to the Lagrangian as follows:
\[
\mathcal{L}^{total}=\mathcal{L}(F)+A_\alpha j^\alpha,
\]
where $\mathcal{L}(F)$ is the Lagrangian density for the field itself,
and $A_\alpha$ is a four-potential from which the electromagnetic field tensor is defined by
$F=F_{\alpha\beta}=\partial_\alpha A_\beta-\partial_\beta A_\alpha$.
In the Born-Infeld theory, the Lagrangian density $\mathcal{L}(F)$ is defined as follows:
\[
\mathcal{L}(F)=b^2\left(1-\sqrt{-\det\left(\eta_{\alpha\beta}+\frac{1}{b}F_{\alpha\beta}\right)}\right),
\]
where $\eta_{\alpha\beta}=\mbox{diag}(-1,1,1,1)$ is the Minkowski metric, and $b>0$ is a parameter
which measures the nonlinearity of the field.
For simplicity, we take $b=1$ in the present paper. The Euler-Lagrange equations now give the equations of field:
\[
\partial_\beta G^{\alpha\beta}=j^\alpha,\quad\mbox{where}\quad G^{\alpha\beta}=-\frac{\partial\mathcal{L}(F)}{\partial F_{\alpha\beta}}.
\]
To have explicit formulae for the field, instead of the tensor form, we use $E$, $B$, $D$, and $H$ in place of $F_{\alpha\beta}$ and
$G_{\alpha\beta}$ as follows:
\[
F_{i0}=E_i,\quad F_{ij}=\sum_{k=1}^3\epsilon_{ijk}B_k,\quad
G_{i0}=D_i,\quad\mbox{and}\quad G_{ij}=\sum_{k=1}^3\epsilon_{ijk}H_k,
\]
where $\epsilon_{ijk}$ is the completely antisymmetric tensor such that $\epsilon_{123}=1$, $\epsilon_{213}=-1$, and so on.
Then, we get the following system of PDEs:
\begin{equation}\label{2.1}
\begin{aligned}
&\nabla\times E=-\partial_tB,\quad \nabla\cdot B=0,\\
&\nabla\cdot D=\rho,\quad \nabla\times H=\partial_t D+j,
\end{aligned}
\end{equation}
where $\rho=j^0$ and $j=(j^1,j^2,j^3)$ are the charge density and the current density respectively.
Note that we obtain Maxwell's equations by setting $D=E$ and $H=B$.
The main difference from the Maxwell theory is the following nonlinear constitutive
relations between $E$, $B$, $D$, and $H$:
\[
\begin{aligned}
D&=\frac{\partial\mathcal{L}}{\partial E}=\frac{E+(E\cdot B)B}{\sqrt{1-|E|^2+|B|^2-(E\cdot B)^2}},\\
H&=-\frac{\partial\mathcal{L}}{\partial B}=\frac{B-(E\cdot B)E}{\sqrt{1-|E|^2+|B|^2-(E\cdot B)^2}}.
\end{aligned}
\]
In the above relations, $D$ and $H$ are given as functions of $E$ and $B$. However,
we convert them into a more convenient form for later use, that is,
we rewrite $E$ and $H$ as functions of $D$ and $B$ as follows:
\begin{align}
E&=\frac{D-(D\times B)\times B}{\sqrt{1+|D|^2+|B|^2+|D\times B|^2}},\label{2.2}\\
H&=\frac{B+(D\times B)\times D}{\sqrt{1+|D|^2+|B|^2+|D\times B|^2}},\label{2.3}
\end{align}
which can be verified by direct calculations (see Chapter 20 of \cite{Zw}). As a result, we obtain the field equations \eqref{2.1}--\eqref{2.3}
for given charge and current density.\bigskip

We have now obtained the equations of field for given sources $\rho$ and $j$. On the other hand, in order to make a self-consistent
coupled system, we have to find the equations of motions of particles for given fields $D$ and $B$.
In the linear electromagnetic theory, it is well-known that particle trajectories are determined by the Lorentz equation.
Let $p$ be a momentum of a particle with charge $q$ and $v$ the corresponding velocity. Then, the Lorentz force exerted
on the particle is given as follows:
\[
\frac{dp}{dt}=q(E+v\times B),\quad\mbox{where}\quad v=\frac{p}{\sqrt{1+|p|^2}}.
\]
However, it is not easy to derive explicit equations of motions of particles
when they are interacting with the Born-Infeld electromagnetic field due to the nonlinearity.
We refer to \cite{Chr98,CK96,Ki94} for alternative methods.
In the case that a small electromagnetic field is considered, we can use the ``generalized Lorentz force'' as in
\cite{Che99,Che05}, where the author studied the dynamics of a particle ``dyon'' interacting with the Born-Infeld field.
Dyon is a hypothetical particle which has both electric and magnetic charges. Let $q_e$ be electric charge
and $q_m$ magnetic charge of a particle. Then we have the following generalized Lorentz force:
\[
\frac{dp}{dt}=q_e(D+v\times B)+q_m(B-v\times D).
\]
In the present paper, we will not consider the particles
having magnetic charges, hence we take the equation of motions of particles
as the following equation:
\begin{equation}\label{2.4}
\frac{dp}{dt}=q_e(D+v\times B).
\end{equation}
%%%%%%%%%%%%%%%%%%%%%%%%%%%%%%%%%%%%%%%%%%%%%%%%%%%%%%%%%%%%%%%%%%%%%%%%
%
% Subsection 2.2. One and one-half dimensional case
%
%%%%%%%%%%%%%%%%%%%%%%%%%%%%%%%%%%%%%%%%%%%%%%%%%%%%%%%%%%%%%%%%%%%%%%%%
\subsection{One and one-half dimensional case}
We now set up the problem for the one and one-half dimensional case by following the framework of \cite{GSc90}, where
the authors studied the Vlasov-Maxwell system.
In this case, the spatial variable is $x=(x,0,0)\in\bbr^1$ and the momentum variables are $v=(v_1,v_2,0)\in\bbr^2$.
Hence, the electric field $D$ and the magnetic field $B$ are given by
\[
D=(D_1,D_2,0)\in\bbr^2\quad\mbox{and}\quad B=(0,0,B)\in\bbr^1,
\]
where $D_i,B:I\times\bbr^1\mapsto\bbr^1$, $i=1,2$, for a suitable time interval $I\subset[0,\infty)$,
and the charge density $\rho=\rho(t,x)$ and the current density $j=j(t,x)$ are given by
\[
\rho\in\bbr^1\quad\mbox{and}\quad j=(j_1,j_2,0)\in\bbr^2.
\]
Then, the field equations \eqref{2.1} are reduced to
\[
\begin{aligned}
&\partial_tD_1=-j_1,\\
&\partial_tD_2=-\partial_xH-j_2,\\
&\partial_xD_1=\rho,\\
&\partial_tB=-\partial_xE_2,
\end{aligned}
\]
and we note that $D_1$ is obtained from the first and the third equations:
\[
D_1(t,x)=-\int_0^tj_1(s,x)\,ds\quad\mbox{or}\quad
D_1(t,x)=\int_{-\infty}^x\rho(t,y)\,dy.
\]
These are consistent with each other because the densities $\rho$ and $j$ will be induced by the Vlasov equation, from which
the continuity equation ``$\partial_t\rho+\partial_xj_1=0$'' is naturally obtained, which makes the consistency between
them. We will use the latter formula to define $D_1$, and now we have only two equations of fields.
The constitutive relations \eqref{2.2}--\eqref{2.3} are reduced to
\[
\begin{aligned}
E&=\frac{(1+B^2)D}{\sqrt{1+|D|^2+B^2+|D|^2B^2}}=\sqrt{1+B^2}\frac{D}{\sqrt{1+|D|^2}},\\
H&=\frac{(1+|D|^2)B}{\sqrt{1+|D|^2+B^2+|D|^2B^2}}=\sqrt{1+|D|^2}\frac{B}{\sqrt{1+B^2}}.
\end{aligned}
\]
We now consider the Vlasov equation. In contrast with \cite{GSc90}, where the Lorentz force
was used, we take the generalized Lorentz force \eqref{2.4} to have the following equation:
\[
\partial_tf+\hat{v}_1\partial_xf+(D_1+\hat{v}_2B,D_2-\hat{v}_1B)\cdot\nabla_vf=0,
\]
where $v$ denotes the momentum of a particle, and $\hat{v}$ is the corresponding velocity: $\hat{v}=v/(1+|v|^2)^{1/2}$.
The charge density $\rho$ and the current density $j$ are induced by $f$ as follows:
\[
\rho(t,x)=\int_{\bbr^2}f(t,x,v)\,dv-n(x)\quad\mbox{and}\quad
j(t,x)=\int_{\bbr^2}\hat{v}f(t,x,v)\,dv,
\]
where $n(x)$ is a neutralizing background density in the sense that
\[
\int_{\bbr}\int_{\bbr^2}f^{in}(x,v)\,dv\,dx=\int_{\bbr}n(x)\,dx,
\]
where $f^{in}$ is an initial data of $f$, and the above quantity is preserved in time by the mass conservation of the Vlasov
equation.\bigskip

As a result, we will study the Cauchy problem of the following
system of PDEs. For $t\geq 0$, $x\in\bbr^1$, and $v\in\bbr^2$,
consider the distribution function $f=f(t,x,v)$, the electric field $D=D(t,x)\in\bbr^2$,
and the magnetic field $B=B(t,x)\in\bbr^1$ satisfying
\begin{align}
\partial_tf+\hat{v}_1\partial_xf+(D_1+\hat{v}_2B,D_2-\hat{v}_1B)\cdot\nabla_vf&=0,\label{2.5}\\
\partial_tD_2+\partial_x\left[\frac{(1+|D|^2)B}{\sqrt{1+|D|^2+B^2+|D|^2B^2}}\right]&=-j_2,\label{2.6}\\
\partial_tB+\partial_x\left[\frac{(1+B^2)D_2}{\sqrt{1+|D|^2+B^2+|D|^2B^2}}\right]&=0,\label{2.7}
\end{align}
where the charge density $\rho=\rho(t,x)$ and the current density $j=j(t,x)$ are given by
\begin{equation}\label{2.8}
\rho(t,x)=\int_{\bbr^2}f(t,x,v)\,dv-n(x)\quad\mbox{and}\quad
j(t,x)=\int_{\bbr^2}\hat{v}f(t,x,v)\,dv,
\end{equation}
and the first component of $D$ field is defined by $\rho$ as follows:
\begin{equation}\label{2.9}
D_1(t,x)=\int_{-\infty}^x\rho(t,y)\,dy.
\end{equation}
\begin{remark}
The system \eqref{2.5}--\eqref{2.9} can be thought of as a nonlinear version of the one and one-half dimensional Vlasov-Maxwell
system \cite{GSc90}. Maxwell's equations have been replaced by a system of
hyperbolic equations \eqref{2.6}--\eqref{2.7}.
\end{remark}
%%%%%%%%%%%%%%%%%%%%%%%%%%%%%%%%%%%%%%%%%%%%%%%%%
%
% Section 3. Transformed system
%
%%%%%%%%%%%%%%%%%%%%%%%%%%%%%%%%%%%%%%%%%%%%%%%%%
\section{Transformed system}\setcounter{equation}{0}
In this section, we study the field equations \eqref{2.6}--\eqref{2.7} for given charge density
$\rho$ and current density $j$. We introduce a useful transformation by which we can analyze \eqref{2.6}--\eqref{2.7}
more effectively. Note that \eqref{2.6}--\eqref{2.7} is a hyperbolic system with two unknowns $D_2$ and $B$
for given $\rho$ and $j$ since $\rho$ induces the first component of $D$ by \eqref{2.9}.
%%%%%%%%%%%%%%%%%%%%%%%%%%%%%%%%%%%%%%%%%%%%%%%%%%%%%%%%%%
%
% Subsection 3.1. Transformation of the field equations
%
%%%%%%%%%%%%%%%%%%%%%%%%%%%%%%%%%%%%%%%%%%%%%%%%%%%%%%%%%%
\subsection{Transformation of the field equations}
As a first step, we rewrite the hyperbolic system
\eqref{2.6}--\eqref{2.7} into a quasilinear form.
%%%%%%%%%%%%%%%%%%%%%%%%%%%%%%%%%%%%%%
%
% Lemma 3.1
%
%%%%%%%%%%%%%%%%%%%%%%%%%%%%%%%%%%%%%%
\begin{lemma}
Let $D_2=D_2(t,x)$ and $B=B(t,x)$ be $\mathcal{C}^1$-solutions to the hyperbolic system \eqref{2.6}--\eqref{2.7}
for given $\mathcal{C}^1$-functions $\rho$ and $j$
with $D_1$ defined by \eqref{2.9}.
Then, the hyperbolic system is rewritten as the following quasilinear form:
\begin{equation}\label{3.1}
\partial_t\left(
\begin{array}{c}
D_2 \\ B
\end{array}
\right)
+\left(
\begin{array}{cc}
A_{11} & A_{12} \\ A_{21} & A_{22}
\end{array}
\right)
\partial_x\left(
\begin{array}{c}
D_2 \\ B
\end{array}
\right)
=\left(
\begin{array}{c}
C_1\rho-j_2 \\ C_2\rho
\end{array}
\right),
\end{equation}
where the components of $A=(A_{ij})_{i,j=1,2}$ are given by
\[
\begin{aligned}
&A_{11}=\frac{D_2B}{\sqrt{1+|D|^2}\sqrt{1+B^2}},\quad
A_{12}=\sqrt{1+|D|^2}\frac{1}{(1+B^2)^{3/2}},\\
&A_{21}=\sqrt{1+B^2}\frac{(1+D_1^2)}{(1+|D|^2)^{3/2}},\quad
A_{22}=\frac{D_2B}{\sqrt{1+|D|^2}\sqrt{1+B^2}},
\end{aligned}
\]
and $C_i$, $i=1,2$, are given by
\[
C_1=-\frac{D_1B}{\sqrt{1+|D|^2}\sqrt{1+B^2}}\quad\mbox{and}\quad
C_2=\sqrt{1+B^2}\frac{D_1D_2}{(1+|D|^2)^{3/2}}.
\]
\end{lemma}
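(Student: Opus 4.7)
The proof is a direct computation, exploiting the factorization of the denominator appearing in the nonlinear constitutive relations and the fact that $\partial_x D_1=\rho$ via \eqref{2.9}. My plan is as follows.

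First, I would record the key algebraic simplification
\[
1+|D|^2+B^2+|D|^2B^2=(1+|D|^2)(1+B^2),
\]
so that the fluxes in \eqref{2.6}--\eqref{2.7} collapse to
\[
H=\sqrt{1+|D|^2}\,\frac{B}{\sqrt{1+B^2}},\qquad E_2=\sqrt{1+B^2}\,\frac{D_2}{\sqrt{1+|D|^2}}.
\]
This separation of variables is what makes the subsequent partial derivatives clean.

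Next, I would view $H$ and $E_2$ as smooth functions of the three scalars $(D_1,D_2,B)$ and expand $\partial_x H$ and $\partial_x E_2$ by the chain rule. A short differentiation gives
\[
\partial_{D_1}H=\tfrac{D_1 B}{\sqrt{1+|D|^2}\sqrt{1+B^2}},\quad \partial_{D_2}H=\tfrac{D_2 B}{\sqrt{1+|D|^2}\sqrt{1+B^2}},\quad \partial_B H=\tfrac{\sqrt{1+|D|^2}}{(1+B^2)^{3/2}},
\]
and, using the identity $1+|D|^2-D_2^2=1+D_1^2$,
\[
\partial_{D_1}E_2=-\sqrt{1+B^2}\,\tfrac{D_1 D_2}{(1+|D|^2)^{3/2}},\ \ \partial_{D_2}E_2=\sqrt{1+B^2}\,\tfrac{1+D_1^2}{(1+|D|^2)^{3/2}},\ \ \partial_B E_2=\tfrac{D_2 B}{\sqrt{1+|D|^2}\sqrt{1+B^2}}.
\]
The only nontrivial step here is the one involving $1+D_1^2$, which produces the numerator of $A_{21}$.

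Finally, I would use the crucial relation $\partial_x D_1(t,x)=\rho(t,x)$, which holds in the $\mathcal{C}^1$ setting because $D_1$ is defined by \eqref{2.9}. Substituting the chain rule expansions into \eqref{2.6}--\eqref{2.7} and moving the $\partial_x D_1=\rho$ contributions to the right-hand side yields
\[
\partial_t D_2+A_{11}\partial_x D_2+A_{12}\partial_x B=-\tfrac{D_1 B}{\sqrt{1+|D|^2}\sqrt{1+B^2}}\rho-j_2,
\]
\[
\partial_t B+A_{21}\partial_x D_2+A_{22}\partial_x B=\sqrt{1+B^2}\,\tfrac{D_1 D_2}{(1+|D|^2)^{3/2}}\rho,
\]
and reading off the coefficients matches exactly the stated formulas for $A_{ij}$ and $C_i$, completing the proof.

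There is no real obstacle; the only things to be careful about are the factorization of the denominator, keeping track of the $D_1$-derivatives (which become source terms via $\partial_x D_1=\rho$), and the algebraic identity $1+|D|^2-D_2^2=1+D_1^2$ that yields $A_{21}$.
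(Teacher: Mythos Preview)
Your proof is correct and follows exactly the approach the paper takes: the paper simply declares the computation elementary and notes that the $x$-derivatives of $D_1$ are replaced by $\rho$, which is precisely what you carry out in detail. There is nothing to add.
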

\begin{proof}
The proof is an elementary calculation, so we only remark the fact that $x$-derivatives of $D_1$ have been replaced by
$\rho$. Hence, we obtain a quasilinear form with respect to $D_2$ and $B$ for given $\rho$ and $j$.
\end{proof}
We next compute the eigenvectors and the eigenvalues
of the matrix $A$.
%%%%%%%%%%%%%%%%%%%%%%%%%%%%%%%%%%%%%%
%
% Lemma 3.2
%
%%%%%%%%%%%%%%%%%%%%%%%%%%%%%%%%%%%%%%
\begin{lemma}
Consider the matrix $A$ in \eqref{3.1}:
\[
A=\left(
\begin{array}{cc}
\displaystyle\frac{D_2B}{\sqrt{1+|D|^2}\sqrt{1+B^2}} &
\displaystyle\sqrt{1+|D|^2}\frac{1}{(1+B^2)^{3/2}}\\
\displaystyle\sqrt{1+B^2}\frac{(1+D_1^2)}{(1+|D|^2)^{3/2}}&
\displaystyle\frac{D_2B}{\sqrt{1+|D|^2}\sqrt{1+B^2}}
\end{array}
\right).
\]
Then, its left eigenvectors can be chosen by
\[
l_1=\left(\frac{\sqrt{1+D_1^2}}{1+|D|^2},\frac{-1}{1+B^2}\right)\quad\mbox{and}\quad
l_2=\left(\frac{\sqrt{1+D_1^2}}{1+|D|^2},\frac{1}{1+B^2}\right),
\]
and the corresponding eigenvalues are given by
\[
\lambda_1=\frac{D_2B-\sqrt{1+D_1^2}}{\sqrt{1+|D|^2}\sqrt{1+B^2}}\quad\mbox{and}\quad
\lambda_2=\frac{D_2B+\sqrt{1+D_1^2}}{\sqrt{1+|D|^2}\sqrt{1+B^2}}.
\]
\end{lemma}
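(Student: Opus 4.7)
The statement is a purely algebraic claim about a specific $2\times 2$ matrix, so the plan is just a direct computation with one structural observation.

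The key structural point is that $A$ has equal diagonal entries, $A_{11}=A_{22}=\frac{D_2B}{\sqrt{1+|D|^2}\sqrt{1+B^2}}$. For any matrix of the form $\left(\begin{smallmatrix} a & b \\ c & a\end{smallmatrix}\right)$ the eigenvalues are $a\pm\sqrt{bc}$ and the left eigenvectors are proportional to $(\sqrt{c},\pm\sqrt{b})$. So first I would compute the single quantity
\[
A_{12}A_{21}=\frac{\sqrt{1+|D|^2}}{(1+B^2)^{3/2}}\cdot\frac{\sqrt{1+B^2}\,(1+D_1^2)}{(1+|D|^2)^{3/2}}=\frac{1+D_1^2}{(1+|D|^2)(1+B^2)},
\]
which gives
\[
\sqrt{A_{12}A_{21}}=\frac{\sqrt{1+D_1^2}}{\sqrt{1+|D|^2}\sqrt{1+B^2}}.
\]
Adding and subtracting this from $A_{11}$ yields exactly the stated $\lambda_1$ and $\lambda_2$.

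For the left eigenvectors I would set $l=(\alpha,\beta)$ and impose $lA=\lambda_i l$. Since $A_{11}-\lambda_i=\mp\sqrt{A_{12}A_{21}}$, this reduces to the single scalar relation $\beta/\alpha=\mp\sqrt{A_{12}/A_{21}}$. A short calculation gives
\[
\sqrt{A_{12}/A_{21}}=\frac{1+|D|^2}{(1+B^2)\sqrt{1+D_1^2}},
\]
so choosing $\alpha=\sqrt{1+D_1^2}/(1+|D|^2)$ forces $\beta=\mp 1/(1+B^2)$, which matches the stated $l_1$ and $l_2$. One could equivalently just verify directly by substitution that $l_iA=\lambda_i l_i$; I would probably present the proof in this verification form since it is cleaner and avoids choosing signs of square roots.

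There is no real obstacle here; the only thing to be careful about is keeping track of the exponents on $(1+|D|^2)$ and $(1+B^2)$ when simplifying $A_{12}A_{21}$ and $A_{12}/A_{21}$, since those are where typos most easily enter. Accordingly I would write the proof as a one-paragraph verification: state that the eigenvalue identity $(l_iA)_j=\lambda_i (l_i)_j$ for $j=1,2$ reduces in each case to the single identity $A_{12}A_{21}=(1+D_1^2)/[(1+|D|^2)(1+B^2)]$ computed above, and note this is an immediate consequence of the definitions of $A_{12}$ and $A_{21}$.
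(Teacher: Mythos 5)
Your proposal is correct, and it supplies exactly the ``direct calculations'' the paper declines to write out (the paper's proof of this lemma is a one-line ``we skip the proof''). Both the structural shortcut (equal diagonal entries, so $\lambda = A_{11}\pm\sqrt{A_{12}A_{21}}$ and left eigenvectors with component ratio $\mp\sqrt{A_{12}/A_{21}}$) and the final verification form you propose are sound, and I checked that $\sqrt{A_{12}A_{21}}=\sqrt{1+D_1^2}\big/\big(\sqrt{1+|D|^2}\sqrt{1+B^2}\big)$ and $\sqrt{A_{12}/A_{21}}=(1+|D|^2)\big/\big((1+B^2)\sqrt{1+D_1^2}\big)$ come out as you state.

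One small slip worth fixing in the sketch: you write ``$A_{11}-\lambda_i=\mp\sqrt{A_{12}A_{21}}$,'' but with $\lambda_1=A_{11}-\sqrt{A_{12}A_{21}}$ this should read $A_{11}-\lambda_i=\pm\sqrt{A_{12}A_{21}}$ (plus for $i=1$). Plugging $\alpha(A_{11}-\lambda_i)+\beta A_{21}=0$ then gives $\beta/\alpha=\mp\sqrt{A_{12}/A_{21}}$, which is what you wanted, so the conclusion is unaffected --- but the intermediate line as written is internally inconsistent with it. Since you already flag the sign bookkeeping as the only hazard and propose presenting the argument as a direct substitution check of $l_iA=\lambda_i l_i$, that presentation both sidesteps the issue and matches the intent of the paper's (omitted) proof.
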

\begin{proof}
This lemma is proved by direct calculations, so we skip the proof.
\end{proof}
%%%%%%%%%%%%%%%%%%%%%%%%%%%%%%%%%%%%%%
%
% Remark 3.1
%
%%%%%%%%%%%%%%%%%%%%%%%%%%%%%%%%%%%%%%
\begin{remark}
If the fields $D$ and $B$ do not blow up at finite time, then the eigenvectors $l_1$ and $l_2$ are linearly independent, and
$\lambda_2$ is always strictly greater than $\lambda_1$. Thus,
we can see that the system \eqref{3.1} is strictly hyperbolic as long as its solution exists.
\end{remark}
Note that \eqref{3.1} is a quasilinear hyperbolic system with two unknown functions $D_2$ and $B$,
and this system is transformed into a diagonal form by the following transformation $\Phi_1$ and $\Phi_2$:
\begin{align*}
\Phi_1:\bbr^2\longrightarrow \left(-\frac{\pi}{2},\frac{\pi}{2}\right)\times\left(-\frac{\pi}{2},\frac{\pi}{2}\right)
\quad\mbox{and}\quad
\Phi_2:\bbr\longrightarrow\left(-\frac{\pi}{2},\frac{\pi}{2}\right)
\end{align*}
such that they are defined as follows: for $x=(x_1,x_2)\in\bbr^2$ and $y\in\bbr^1$, we define
\begin{align*}
\Phi_{1i}(x_1,x_2):=\arcsin\frac{x_i}{\sqrt{1+x_1^2+x_2^2}}\quad\mbox{and}\quad
\Phi_2(y):=\arcsin\frac{y}{\sqrt{1+y^2}}
\end{align*}
for $i=1,2$, where $\Phi_1=(\Phi_{11},\Phi_{12})$. Note that $\Phi_1$ and $\Phi_2$ are smooth
mappings and have smooth inverses on their images, $\Phi_1(\bbr^2)=\{(\phi_1,\phi_2):|\phi_1|+|\phi_2|<\pi/2\}$
and $\Phi_2(\bbr)=(-\pi/2,\pi/2)$ respectively.
We now define the transformed variables as follows:
\[
\theta_1:=\Phi_{11}(D_1,D_2),\quad
\theta_2:=\Phi_{12}(D_1,D_2),\quad\mbox{and}\quad
\theta_B:=\Phi_2(B).
\]
In other word, the following relations hold:
\begin{equation}\label{3.2}
\sin\theta_1=\frac{D_1}{\sqrt{1+|D|^2}},\quad\sin\theta_2=\frac{D_2}{\sqrt{1+|D|^2}},\quad
\sin\theta_B=\frac{B}{\sqrt{1+B^2}},
\end{equation}
which are illustrated in Figure \ref{f1}.

\begin{figure}[h]
\centering
\includegraphics[width=11.2cm]{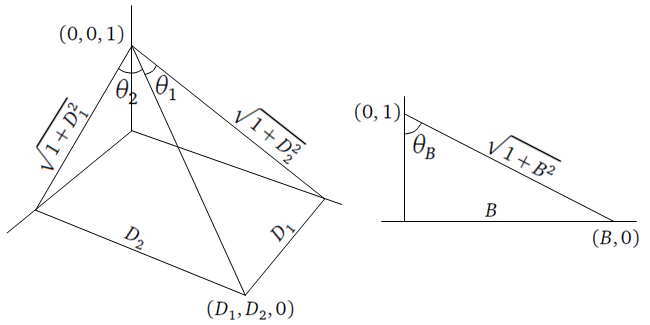}
\caption{The unknowns $-\pi/2<\theta_i<\pi/2$, $i\in\{1,2,B\}$, are well-defined
for finite $D_1$, $D_2$, and $B$. Note that if $|\theta_1|+|\theta_2|\rightarrow\pi/2$,
then either $D_1$, $D_2$, or both of them tend to infinity, which is also the case for $\theta_B$.
Hence, we can see that $\theta_i\neq\pm\pi/2$ unless the fields blow up.}
\label{f1}
\end{figure}

On the other hand, since $D_1$ is a given function of $\rho$,
we get the inverses as follows:
\[
D_2=\sqrt{1+D_1^2}\tan\theta_2\quad\mbox{and}\quad
B=\tan\theta_B.
\]
Using the above transformation, we will rewrite \eqref{3.1} as a hyperbolic system with two unknown functions
$\theta_2$ and $\theta_B$. To do that, $\theta_1$ is regarded as a function of $\theta_2$ and $D_1$, i.e.,
the following identities will be used in the proof of Lemma 3.3:
\begin{equation}\label{3.3}
\sin\theta_1=\frac{D_1}{\sqrt{1+D_1^2}}\cos\theta_2\quad\mbox{and}\quad
\frac{1}{\sqrt{1+|D|^2}}=\frac{1}{\sqrt{1+D_1^2}}\cos\theta_2.
\end{equation}
By the following lemma, the quasilinear hyperbolic system \eqref{3.1} is reduced to a diagonal form.
%%%%%%%%%%%%%%%%%%%%%%%%%%%%%%%%%%%%%%
%
% Lemma 3.3
%
%%%%%%%%%%%%%%%%%%%%%%%%%%%%%%%%%%%%%%
\begin{lemma}
Let $D_2$ and $B$ be $\mathcal{C}^1$-solutions to the quasilinear hyperbolic system \eqref{3.1},
and suppose that $\rho$ and $j$ are given $\mathcal{C}^1$-functions satisfying
\begin{equation}\label{3.4}
\partial_tD_1=-j_1\quad\mbox{and}\quad\partial_xD_1=\rho
\end{equation}
for $D_1$ defined by \eqref{2.9}.
Consider the transformation \eqref{3.2}. Then, the quasilinear system \eqref{3.1} is transformed to the following
inhomogeneous linearly degenerate system of a diagonal form for $\alpha=\alpha(t,x)$ and $\beta=\beta(t,x)$:
\begin{equation}\label{3.5}
\partial_t\left(
\begin{array}{c}
\alpha \\ \beta
\end{array}\right)
+\left(\begin{array}{cc}
-\cos\beta & 0 \\ 0 & \cos\alpha
\end{array}\right)
\partial_x\left(
\begin{array}{c}
\alpha \\ \beta
\end{array}\right)
=
\left(
\begin{array}{c}
\cos\theta_2(k_0\sin\theta_B+k_1\sin\theta_2+k_2\cos\theta_2) \\
\cos\theta_2(k_0\sin\theta_B+k_1\sin\theta_2+k_2\cos\theta_2)
\end{array}\right),
\end{equation}
where $\alpha=\theta_2-\theta_B$ and $\beta=\theta_2+\theta_B$, and $k_i$, $i=0,1,2$, are given by
\[
k_0=-\frac{\rho D_1}{1+D_1^2},\quad k_1=\frac{j_1D_1}{1+D_1^2},\quad k_2=-\frac{j_2}{\sqrt{1+D_1^2}}.
\]
\end{lemma}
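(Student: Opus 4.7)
The plan is to diagonalize \eqref{3.1} in the standard way, by multiplying on the left by the eigenvectors from Lemma~3.2, and then to show that the transformation \eqref{3.2} converts the resulting characteristic form into the system \eqref{3.5}. The first observation that motivates this is that, after using \eqref{3.3} together with $D_2/\sqrt{1+|D|^2}=\sin\theta_2$ and $B/\sqrt{1+B^2}=\sin\theta_B$, the eigenvalues $\lambda_1$ and $\lambda_2$ of Lemma~3.2 become
\[
\lambda_1=\sin\theta_2\sin\theta_B-\cos\theta_2\cos\theta_B=-\cos(\theta_2+\theta_B)=-\cos\beta,
\]
\[
\lambda_2=\sin\theta_2\sin\theta_B+\cos\theta_2\cos\theta_B=\cos(\theta_2-\theta_B)=\cos\alpha,
\]
so the ``speeds'' on the left-hand side of \eqref{3.5} appear naturally. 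Analogously, the left eigenvectors reduce to $l_1=(\cos^2\theta_2/\sqrt{1+D_1^2},-\cos^2\theta_B)$ and $l_2=(\cos^2\theta_2/\sqrt{1+D_1^2},\cos^2\theta_B)$.

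Next I would compute $\partial_t D_2,\partial_x D_2,\partial_t B,\partial_x B$ in terms of derivatives of $\theta_2,\theta_B$. From $D_2=\sqrt{1+D_1^2}\tan\theta_2$ and $B=\tan\theta_B$, together with \eqref{3.4}, one obtains
\[
\partial_t D_2=-\tfrac{D_1 j_1}{\sqrt{1+D_1^2}}\tan\theta_2+\sqrt{1+D_1^2}\sec^2\theta_2\,\partial_t\theta_2,
\qquad \partial_t B=\sec^2\theta_B\,\partial_t\theta_B,
\]
and analogous formulas for $\partial_x$ with $j_1$ replaced by $-\rho$. Contracting with $l_1$ and $l_2$ causes the Jacobian factors $\sqrt{1+D_1^2}\sec^2\theta_2$ and $\sec^2\theta_B$ to cancel cleanly, producing $l_1\!\cdot\!\partial_t(D_2,B)^T=\partial_t\alpha-\frac{D_1 j_1\cos\theta_2\sin\theta_2}{1+D_1^2}$ and $l_1\!\cdot\!\partial_x(D_2,B)^T=\partial_x\alpha+\frac{D_1\rho\cos\theta_2\sin\theta_2}{1+D_1^2}$, and similarly for $l_2$ with $\beta$ in place of $\alpha$. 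Combined with multiplication by $\lambda_1=-\cos\beta$ (resp.\ $\lambda_2=\cos\alpha$), the left-hand side of the characteristic projection of \eqref{3.1} reads $\partial_t\alpha-\cos\beta\,\partial_x\alpha$ (resp.\ $\partial_t\beta+\cos\alpha\,\partial_x\beta$) up to explicit lower-order terms in $\rho$ and $j_1$.

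It remains to assemble the right-hand side. Substituting the trigonometric forms of $C_1,C_2$ (where $C_1=-\sin\theta_1\sin\theta_B$ with $\sin\theta_1=D_1\cos\theta_2/\sqrt{1+D_1^2}$, and $C_2$ is worked out analogously) and collecting the source terms that arose in the previous step, I would combine the $\rho$-contributions through the angle-addition identity $\sin\theta_2\cos\theta_B\pm\cos\theta_2\sin\theta_B=\sin(\theta_2\pm\theta_B)$. For the $\alpha$-equation the combination yields a factor $\sin(\theta_2-\beta)=-\sin\theta_B$ and for the $\beta$-equation a factor $\sin(\alpha-\theta_2)=-\sin\theta_B$, so remarkably both equations produce the same source
\[
\cos\theta_2\Bigl[-\tfrac{D_1\rho}{1+D_1^2}\sin\theta_B+\tfrac{D_1 j_1}{1+D_1^2}\sin\theta_2-\tfrac{j_2}{\sqrt{1+D_1^2}}\cos\theta_2\Bigr],
\]
which is exactly the right-hand side of \eqref{3.5} with the stated $k_0,k_1,k_2$.

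The main obstacle is the bookkeeping: there are many terms, and a sign error or misapplied identity would destroy the precise cancellation into $\sin(\theta_2\pm\theta_B)$. The crucial structural point that keeps the computation tractable is the identity $\sqrt{1+D_1^2}/\sqrt{1+|D|^2}=\cos\theta_2$ from \eqref{3.3}: it is what turns every occurrence of $\sqrt{1+D_1^2}$ produced by $D_2=\sqrt{1+D_1^2}\tan\theta_2$ into a quantity that combines with $\cos\theta_B$ and $\sin\theta_B$ through classical trigonometric identities, and thereby ensures that both characteristic equations end up with the same source term.
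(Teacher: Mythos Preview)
Your proposal is correct and follows essentially the same approach as the paper: multiply \eqref{3.1} on the left by the eigenvectors $l_1,l_2$, rewrite the eigenvalues as $-\cos\beta,\cos\alpha$, and use the identity $\sqrt{1+D_1^2}/\sqrt{1+|D|^2}=\cos\theta_2$ together with angle-addition formulas to collapse the $\rho$-terms into a common $-\sin\theta_B$ factor. The only organizational difference is that the paper computes the partials $\partial\theta_2/\partial D_1,\partial\theta_2/\partial D_2,\partial\theta_B/\partial B$ and then adds the missing chain-rule term $\frac{\partial\theta_2}{\partial D_1}\partial_tD_1+\lambda_i\frac{\partial\theta_2}{\partial D_1}\partial_xD_1$ to both sides, whereas you work directly with the inverse relations $D_2=\sqrt{1+D_1^2}\tan\theta_2$, $B=\tan\theta_B$ and let the $D_1$-dependence appear automatically; the two computations are equivalent and lead to the same trigonometric simplifications.
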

\begin{proof}
We first transform the eigenvalues $\lambda_i$, $i=1,2$, as follows:
\[
\begin{aligned}
&\lambda_1=\sin\theta_2\sin\theta_B-\cos\theta_2\cos\theta_B=-\cos(\theta_2+\theta_B)=-\cos\beta,\\
&\lambda_2=\sin\theta_2\sin\theta_B+\cos\theta_2\cos\theta_B=\cos(\theta_2-\theta_B)=\cos\alpha.
\end{aligned}
\]
We now consider the eigenvectors $l_i$, $i=1,2$. From the well-known transformation of polar coordinates,
$(x,y)\in\bbr^2\leftrightarrow(r,\theta)\in [0,\infty)\times[0,2\pi)$,
where $x=r\cos\theta$ and $y=r\sin\theta$, we know that
\[
\partial_x\theta=-\frac{1}{r}\sin\theta\quad\mbox{and}\quad
\partial_y\theta=\frac{1}{r}\cos\theta.
\]
Using the above relations, we first calculate the following quantities.
As we can see from Figure \ref{f1}, $\theta_2$ can be thought of a function of $\sqrt{1+D_1^2}$ and $D_2$.
\begin{equation}\label{3.6}
\begin{aligned}
\frac{\partial\theta_2}{\partial D_1}&=\frac{\partial\theta_2}{\partial(\sqrt{1+D_1^2})}\frac{\partial(\sqrt{1+D_1^2})}{\partial D_1}\\
&=-\frac{1}{\sqrt{1+|D|^2}}\sin\theta_2\frac{D_1}{\sqrt{1+D_1^2}}
=-\frac{D_1}{1+D_1^2}\cos\theta_2\sin\theta_2,
\end{aligned}
\end{equation}
where we used \eqref{3.3}.
\begin{equation}\label{3.7}
\frac{\partial\theta_2}{\partial D_2}=\frac{1}{\sqrt{1+|D|^2}}\cos\theta_2=\frac{\sqrt{1+D_1^2}}{1+|D|^2},
\end{equation}
\begin{equation}\label{3.8}
\frac{\partial\theta_B}{\partial B}=\frac{1}{\sqrt{1+B^2}}\cos\theta_B=\frac{1}{1+B^2}.
\end{equation}
Hence, the eigenvectors $l_i$, $i=1,2$, are transformed to
\[
l_1=\left(\frac{\partial\theta_2}{\partial D_2},-\frac{\partial\theta_B}{\partial B}\right)\quad\mbox{and}\quad
l_2=\left(\frac{\partial\theta_2}{\partial D_2},\frac{\partial\theta_B}{\partial B}\right).
\]
We now multiply the transformed eigenvector $l_1$ to \eqref{3.1} and obtain
\begin{eqnarray*}
\lefteqn{\left(\frac{\partial\theta_2}{\partial D_2}\frac{\partial D_2}{\partial t}
-\frac{\partial\theta_B}{\partial B}\frac{\partial B}{\partial t}\right)
+\lambda_1\left(\frac{\partial\theta_2}{\partial D_2}\frac{\partial D_2}{\partial x}
-\frac{\partial\theta_B}{\partial B}\frac{\partial B}{\partial x}\right)}\cr
&=&-\frac{\partial\theta_2}{\partial D_2}\frac{D_1B}{\sqrt{1+|D|^2}\sqrt{1+B^2}}\rho
-\frac{\partial\theta_2}{\partial D_2}j_2
-\frac{\partial\theta_B}{\partial B}\sqrt{1+B^2}\frac{D_1D_2}{(1+|D|^2)^{3/2}}\rho,
\end{eqnarray*}
and then we add the following quantity on both sides:
\[
\frac{\partial\theta_2}{\partial D_1}\frac{\partial D_1}{\partial t}
+\lambda_1\frac{\partial\theta_2}{\partial D_1}\frac{\partial D_1}{\partial x}
\]
to get the desired result by \eqref{3.6}, \eqref{3.7} and \eqref{3.8}.
\begin{eqnarray*}
\lefteqn{\left(\frac{\partial\theta_2}{\partial t}-\frac{\partial\theta_B}{\partial t}\right)
+\lambda_1\left(\frac{\partial\theta_2}{\partial x}-\frac{\partial\theta_B}{\partial x}\right)}\cr
&=&-\frac{\partial\theta_2}{\partial D_2}\frac{D_1B}{\sqrt{1+|D|^2}\sqrt{1+B^2}}\rho
-\frac{\partial\theta_2}{\partial D_2}j_2
-\frac{\partial\theta_B}{\partial B}\sqrt{1+B^2}\frac{D_1D_2}{(1+|D|^2)^{3/2}}\rho\cr
&&{}+\frac{\partial\theta_2}{\partial D_1}\frac{\partial D_1}{\partial t}
+\lambda_1\frac{\partial\theta_2}{\partial D_1}\frac{\partial D_1}{\partial x}\cr
&=&-\frac{\sqrt{1+D_1^2}}{1+|D|^2}\frac{D_1B}{\sqrt{1+|D|^2}\sqrt{1+B^2}}\rho
-\frac{\sqrt{1+D_1^2}}{1+|D|^2}j_2
-\frac{1}{1+B^2}\sqrt{1+B^2}\frac{D_1D_2}{(1+|D|^2)^{3/2}}\rho\cr
&&{}+\frac{D_1}{1+D_1^2}\cos\theta_2\sin\theta_2j_1+\cos(\theta_2+\theta_B)\frac{D_1}{1+D_1^2}\cos\theta_2\sin\theta_2\rho,
\end{eqnarray*}
where we used \eqref{3.4} for the last two quantities.
We now use \eqref{3.3} together with the transformation \eqref{3.2} to get
\begin{eqnarray*}
\lefteqn{\left(\frac{\partial\theta_2}{\partial t}-\frac{\partial\theta_B}{\partial t}\right)
+\lambda_1\left(\frac{\partial\theta_2}{\partial x}-\frac{\partial\theta_B}{\partial x}\right)}\cr
&=&-\frac{1}{\sqrt{1+D_1^2}}\cos^2\theta_2\sin\theta_1\sin\theta_B\rho
-\frac{1}{\sqrt{1+D_1^2}}\cos^2\theta_2j_2
-\cos\theta_B\sin\theta_1\sin\theta_2\frac{1}{\sqrt{1+|D|^2}}\rho\cr
&&{}+\frac{D_1}{1+D_1^2}\cos\theta_2\sin\theta_2j_1+\cos(\theta_2+\theta_B)\frac{D_1}{1+D_1^2}\cos\theta_2\sin\theta_2\rho\cr
&=&-\frac{D_1}{1+D_1^2}\cos^3\theta_2\sin\theta_B\rho
-\frac{1}{\sqrt{1+D_1^2}}\cos^2\theta_2j_2
-\cos\theta_B\frac{D_1}{\sqrt{1+D_1^2}}\cos\theta_2\sin\theta_2\frac{1}{\sqrt{1+D_1^2}}\cos\theta_2\rho\cr
&&{}+\frac{D_1}{1+D_1^2}\cos\theta_2\sin\theta_2j_1+\cos(\theta_2+\theta_B)\frac{D_1}{1+D_1^2}\cos\theta_2\sin\theta_2\rho\cr
&=&-\frac{\rho D_1}{1+D_1^2}\cos^2\theta_2(\cos\theta_2\sin\theta_B+\sin\theta_2\cos\theta_B)
-\frac{j_2}{\sqrt{1+D_1^2}}\cos^2\theta_2\cr
&&{}+\frac{j_1D_1}{1+D_1^2}\cos\theta_2\sin\theta_2+\cos(\theta_2+\theta_B)\frac{\rho D_1}{1+D_1^2}\cos\theta_2\sin\theta_2\cr
&=&\frac{j_1D_1}{1+D_1^2}\cos\theta_2\sin\theta_2
-\frac{j_2}{\sqrt{1+D_1^2}}\cos^2\theta_2\cr
&&{}-\frac{\rho D_1}{1+D_1^2}\cos^2\theta_2\sin(\theta_2+\theta_B)
+\cos(\theta_2+\theta_B)\frac{\rho D_1}{1+D_1^2}\cos\theta_2\sin\theta_2\cr
&=&\frac{j_1D_1}{1+D_1^2}\cos\theta_2\sin\theta_2
-\frac{j_2}{\sqrt{1+D_1^2}}\cos^2\theta_2\cr
&&{}-\frac{\rho D_1}{1+D_1^2}\cos\theta_2
\Big(\cos\theta_2\sin(\theta_2+\theta_B)
-\cos(\theta_2+\theta_B)\sin\theta_2\Big)\cr
&=&\frac{j_1D_1}{1+D_1^2}\cos\theta_2\sin\theta_2
-\frac{j_2}{\sqrt{1+D_1^2}}\cos^2\theta_2
-\frac{\rho D_1}{1+D_1^2}\cos\theta_2\sin\theta_B,
\end{eqnarray*}
where we used trigonometric identities several times, and the first equation of \eqref{3.5}
is obtained.
\[
\partial_t\alpha-(\cos\beta)\partial_x\alpha
=\cos\theta_2\left(\frac{j_1D_1}{1+D_1^2}\sin\theta_2
-\frac{j_2}{\sqrt{1+D_1^2}}\cos\theta_2
-\frac{\rho D_1}{1+D_1^2}\sin\theta_B\right).
\]
The second equation of \eqref{3.5} is obtained by the same calculation.
Multiply the transformed eigenvector $l_2$ to \eqref{3.1} to get
\begin{eqnarray*}
\lefteqn{\left(\frac{\partial\theta_2}{\partial D_2}\frac{\partial D_2}{\partial t}
+\frac{\partial\theta_B}{\partial B}\frac{\partial B}{\partial t}\right)
+\lambda_2\left(\frac{\partial\theta_2}{\partial D_2}\frac{\partial D_2}{\partial x}
+\frac{\partial\theta_B}{\partial B}\frac{\partial B}{\partial x}\right)}\cr
&=&-\frac{\partial\theta_2}{\partial D_2}\frac{D_1B}{\sqrt{1+|D|^2}\sqrt{1+B^2}}\rho
-\frac{\partial\theta_2}{\partial D_2}j_2
+\frac{\partial\theta_B}{\partial B}\sqrt{1+B^2}\frac{D_1D_2}{(1+|D|^2)^{3/2}}\rho,
\end{eqnarray*}
and add the following quantity on both sides:
\[
\frac{\partial\theta_2}{\partial D_1}\frac{\partial D_1}{\partial t}
+\lambda_2\frac{\partial\theta_2}{\partial D_1}\frac{\partial D_1}{\partial x}
=-\frac{\partial\theta_2}{\partial D_1}j_1+\lambda_2\frac{\partial\theta_2}{\partial D_1}\rho
\]
to get
\begin{eqnarray*}
\lefteqn{\left(\frac{\partial\theta_2}{\partial t}+\frac{\partial\theta_B}{\partial t}\right)
+\lambda_2\left(\frac{\partial\theta_2}{\partial x}+\frac{\partial\theta_B}{\partial x}\right)}\cr
&=&-\frac{\partial\theta_2}{\partial D_2}\frac{D_1B}{\sqrt{1+|D|^2}\sqrt{1+B^2}}\rho
-\frac{\partial\theta_2}{\partial D_2}j_2
+\frac{\partial\theta_B}{\partial B}\sqrt{1+B^2}\frac{D_1D_2}{(1+|D|^2)^{3/2}}\rho\cr
&&{}-\frac{\partial\theta_2}{\partial D_1}j_1
+\lambda_2\frac{\partial\theta_2}{\partial D_1}\rho\cr
&=&-\frac{\sqrt{1+D_1^2}}{1+|D|^2}\frac{D_1B}{\sqrt{1+|D|^2}\sqrt{1+B^2}}\rho
-\frac{\sqrt{1+D_1^2}}{1+|D|^2}j_2
+\frac{1}{1+B^2}\sqrt{1+B^2}\frac{D_1D_2}{(1+|D|^2)^{3/2}}\rho\cr
&&{}+\frac{D_1}{1+D_1^2}\cos\theta_2\sin\theta_2j_1
-\cos(\theta_2-\theta_B)\frac{D_1}{1+D_1^2}\cos\theta_2\sin\theta_2\rho\cr
&=&-\frac{1}{\sqrt{1+D_1^2}}\cos^2\theta_2\sin\theta_1\sin\theta_B\rho
-\frac{1}{\sqrt{1+D_1^2}}\cos^2\theta_2j_2
+\cos\theta_B\sin\theta_1\sin\theta_2\frac{1}{\sqrt{1+|D|^2}}\rho\cr
&&{}+\frac{D_1}{1+D_1^2}\cos\theta_2\sin\theta_2j_1-\cos(\theta_2-\theta_B)\frac{D_1}{1+D_1^2}\cos\theta_2\sin\theta_2\rho\cr
&=&-\frac{D_1}{1+D_1^2}\cos^3\theta_2\sin\theta_B\rho
-\frac{1}{\sqrt{1+D_1^2}}\cos^2\theta_2j_2
+\cos\theta_B\frac{D_1}{\sqrt{1+D_1^2}}\cos\theta_2\sin\theta_2\frac{1}{\sqrt{1+D_1^2}}\cos\theta_2\rho\cr
&&{}+\frac{D_1}{1+D_1^2}\cos\theta_2\sin\theta_2j_1-\cos(\theta_2-\theta_B)\frac{D_1}{1+D_1^2}\cos\theta_2\sin\theta_2\rho\cr
&=&-\frac{\rho D_1}{1+D_1^2}\cos^2\theta_2(\cos\theta_2\sin\theta_B-\sin\theta_2\cos\theta_B)
-\frac{j_2}{\sqrt{1+D_1^2}}\cos^2\theta_2\cr
&&{}+\frac{j_1D_1}{1+D_1^2}\cos\theta_2\sin\theta_2-\cos(\theta_2-\theta_B)\frac{\rho D_1}{1+D_1^2}\cos\theta_2\sin\theta_2\cr
&=&\frac{j_1D_1}{1+D_1^2}\cos\theta_2\sin\theta_2
-\frac{j_2}{\sqrt{1+D_1^2}}\cos^2\theta_2\cr
&&{}+\frac{\rho D_1}{1+D_1^2}\cos^2\theta_2\sin(\theta_2-\theta_B)
-\cos(\theta_2-\theta_B)\frac{\rho D_1}{1+D_1^2}\cos\theta_2\sin\theta_2\cr
&=&\frac{j_1D_1}{1+D_1^2}\cos\theta_2\sin\theta_2
-\frac{j_2}{\sqrt{1+D_1^2}}\cos^2\theta_2\cr
&&{}+\frac{\rho D_1}{1+D_1^2}\cos\theta_2
\Big(\cos\theta_2\sin(\theta_2-\theta_B)
-\cos(\theta_2-\theta_B)\sin\theta_2\Big)\cr
&=&\frac{j_1D_1}{1+D_1^2}\cos\theta_2\sin\theta_2
-\frac{j_2}{\sqrt{1+D_1^2}}\cos^2\theta_2
-\frac{\rho D_1}{1+D_1^2}\cos\theta_2\sin\theta_B.
\end{eqnarray*}
Hence, we obtain the second equation of \eqref{3.5} as follows:
\[
\partial_t\beta+(\cos\alpha)\partial_x\beta
=\cos\theta_2\left(\frac{j_1D_1}{1+D_1^2}\sin\theta_2
-\frac{j_2}{\sqrt{1+D_1^2}}\cos\theta_2
-\frac{\rho D_1}{1+D_1^2}\sin\theta_B\right).
\]
Note that the inhomogeneous term is the same with the first equation of \eqref{3.5},
and this completes the proof.
\end{proof}
To summarize, the hyperbolic system \eqref{2.6}--\eqref{2.7} was rewritten as a quasilinear form \eqref{3.1},
and we transformed it to a diagonal form \eqref{3.5} by using the transformation \eqref{3.2}.
%%%%%%%%%%%%%%%%%%%%%%%%%%%%%%%%%%%%%%%%%%%%%%%%%%%%%%%%%%%%%%%%%
%
% Subsection 3.2. Quasilinear hyperbolic systems.
%
%%%%%%%%%%%%%%%%%%%%%%%%%%%%%%%%%%%%%%%%%%%%%%%%%%%%%%%%%%%%%%%%%
\subsection{Quasilinear hyperbolic systems}
Quasilinear hyperbolic systems, especially those which can be written as hyperbolic conservation laws,
have been extensively studied for last several decades. We refer to some good textbooks \cite{Br,Le,Se,Sm}
and the references contained therein. In this paper, we will use a classical result of Hartman-Wintner \cite{HW52}
where local-in-time existence of $\mathcal{C}^1$ solutions to inhomogeneous quasilinear hyperbolic systems was proved.
%%%%%%%%%%%%%%%%%%%%%%%%%%%%%%%%%%%%%%%%%%%%%%%%
%
% Proposition 3.1
%
%%%%%%%%%%%%%%%%%%%%%%%%%%%%%%%%%%%%%%%%%%%%%%%%
\begin{proposition}\cite{Do52,HW52}
Let $t\geq 0$, $x\in\bbr$, and $u=u(t,x)\in\bbr^n$.
Consider the Cauchy problem for the following inhomogeneous quasilinear hyperbolic system
with $\mathcal{C}^1$ initial data $u^{in}$:
\begin{equation}\label{3.9}
u_t=Fu_x+G,\quad
u(0,x)=u^{in}(x)
\end{equation}
on the following $(t,x,u)$-region:
\begin{equation}\label{3.10}
0\leq t\leq t',\quad
|x|\leq b,
\end{equation}
\begin{equation}\label{3.11}
|u_1|\leq c,\cdots,|u_n|\leq c.
\end{equation}
Suppose that the system \eqref{3.9} is strictly hyperbolic on \eqref{3.10}--\eqref{3.11}, where
$F=F(t,x,u)$ is a $\mathcal{C}^1$ matrix function and $G(t,x,u)$ is a continuous vector function
on \eqref{3.10}--\eqref{3.11},
and $G$ is of class $\mathcal{C}^1$ with respect to $x$, $u_1,\cdots,u_n$.
Then there exists a positive $t_*\leq t'$ such that the system \eqref{3.9}
has a unique $\mathcal{C}^1$ solution on the region
\begin{equation}\label{3.12}
0\leq t\leq t_*\quad\mbox{and}\quad |x|+Kt\leq b,
\end{equation}
where $K$ is chosen such that $|\lambda_i|\leq K$, $i=1,\cdots,n$, on \eqref{3.10}.
\end{proposition}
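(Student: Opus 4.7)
The plan is to reduce the strictly hyperbolic quasilinear system \eqref{3.9} to a family of scalar transport equations along characteristic curves, and then construct the solution by Picard iteration on a short time interval. Strict hyperbolicity of $F$ on the compact region \eqref{3.10}--\eqref{3.11} provides $n$ distinct real eigenvalues $\lambda_1(t,x,u)<\cdots<\lambda_n(t,x,u)$, all bounded in absolute value by $K$, together with an associated basis of left eigenvectors $l_i(t,x,u)$ depending $\mathcal{C}^1$-ly on their arguments. Multiplying \eqref{3.9} on the left by $l_i$ converts the system into
\begin{equation*}
l_i(t,x,u)\cdot\bigl(\partial_t u+\lambda_i(t,x,u)\,\partial_x u\bigr)=l_i(t,x,u)\cdot G(t,x,u),
\end{equation*}
i.e., $n$ equations each involving only the derivative of $u$ along the single characteristic direction $\partial_t+\lambda_i\partial_x$.

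For each index $i$ and each base point $(t,x)$ in the cone $|x|+Kt\leq b$, let $X_i(\tau;t,x)$ denote the $i$-th characteristic through $(t,x)$, defined by $dX_i/d\tau=\lambda_i(\tau,X_i,u(\tau,X_i))$ with $X_i(t)=x$; since $|\lambda_i|\leq K$, this characteristic remains inside $|x|\leq b$ all the way down to $\tau=0$, which is exactly the role of the cone condition in \eqref{3.12}. Integrating the characteristic form above along $X_i$ from $0$ to $t$ and inverting the matrix whose rows are the $l_i$'s (nonsingular by strict hyperbolicity) yields a closed nonlinear integral equation $u=\mathcal{T}[u]$ expressing $u(t,x)$ in terms of the values of $u^{in}$ at the $n$ characteristic feet and an integral of $G$-type data along the characteristics. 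I would then set $u^{(0)}=u^{in}$ and iterate $u^{(k+1)}=\mathcal{T}[u^{(k)}]$, aiming to show convergence in $\mathcal{C}^1$ on a short time strip.

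The main technical obstacle is to establish uniform $\mathcal{C}^1$ bounds on the iterates, since quasilinear hyperbolic systems generically develop gradient blow-up in finite time and a naive contraction argument in $\mathcal{C}^0$ alone cannot control $\partial_x u^{(k)}$. Formally differentiating the characteristic equation in $x$ produces a Riccati-type equation for $\partial_x u^{(k)}$ along characteristics with a quadratic nonlinearity in $\partial_x u^{(k)}$; a Gronwall/comparison argument then yields an a priori bound for $\|u^{(k)}\|_{\mathcal{C}^1}$ on an interval $[0,t_*]$ whose length depends only on $\|u^{in}\|_{\mathcal{C}^1}$ and the $\mathcal{C}^1$-norms of $F$ and $G$ on the box \eqref{3.10}--\eqref{3.11}. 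After possibly shrinking $t_*$ further, $\mathcal{T}$ becomes a contraction in the $\mathcal{C}^0$-norm on the corresponding function space, giving a continuous limit $u$; the uniform $\mathcal{C}^1$ bound combined with an Ascoli-type compactness argument upgrades this to $\mathcal{C}^1$-convergence, so the limit is a classical solution of \eqref{3.9} on the cone \eqref{3.12}. Uniqueness follows by applying the same characteristic integration to the difference of two putative solutions and invoking Gronwall on the resulting sup-norm estimate.
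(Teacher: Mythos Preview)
The paper does not actually prove this proposition: its entire proof is the single sentence ``We refer to \cite{HW52} for the proof.'' So there is no substantive argument in the paper to compare against; the statement is quoted as a classical result of Hartman--Wintner (and Douglis) and invoked as a black box.

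Your sketch follows the standard route one finds in those references: diagonalize via left eigenvectors, integrate along the $n$ characteristic families to obtain a fixed-point integral operator, iterate, and control $\partial_x u$ through the differentiated system. This is the right architecture. A couple of points deserve care if you were to flesh it out. First, the $\mathcal{C}^1$ dependence of the eigenvectors $l_i$ on $(t,x,u)$ requires the eigenvalues to stay uniformly separated on the compact box, which strict hyperbolicity provides; you use this implicitly. Second, for a genuine system the differentiated equation along the $i$-th characteristic couples all components of $\partial_x u$, so the ``Riccati-type'' structure is a coupled quadratic ODE system rather than a scalar Riccati equation; the a priori $\mathcal{C}^1$ bound still follows by a Gronwall argument, but the quadratic term must be handled uniformly in $i$. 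Finally, Hartman--Wintner carry out the iteration directly in $\mathcal{C}^1$ rather than proving $\mathcal{C}^0$-contraction plus Ascoli compactness; either route works, but the direct $\mathcal{C}^1$ iteration avoids the subsequence issue inherent in compactness arguments and gives uniqueness more cleanly.
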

\begin{proof}
We refer to \cite{HW52} for the proof.
\end{proof}
By applying Proposition 3.1 to our case \eqref{3.5}, we obtain the following corollary.
A lower bound $t_*$ can be determined by following the proof of \cite{HW52};
moreover, determining it is much easier than that of \cite{HW52}
because the $2\times 2$ matrix in \eqref{3.5} is already diagonal.
%%%%%%%%%%%%%%%%%%%%%%%%%%%%%%%%%%%%%%%%%%%%%%%%%
%
% Corollary 3.1
%
%%%%%%%%%%%%%%%%%%%%%%%%%%%%%%%%%%%%%%%%%%%%%%%%%
\begin{corollary}
Consider the following inhomogeneous quasilinear hyperbolic system \eqref{3.5}:
\[
\partial_t\left(
\begin{array}{c}
\alpha \\ \beta
\end{array}\right)
+\left(\begin{array}{cc}
-\cos\beta & 0 \\ 0 & \cos\alpha
\end{array}\right)
\partial_x\left(
\begin{array}{c}
\alpha \\ \beta
\end{array}\right)
=
\left(
\begin{array}{c}
\cos\theta_2(k_0\sin\theta_B+k_1\sin\theta_2+k_2\cos\theta_2) \\
\cos\theta_2(k_0\sin\theta_B+k_1\sin\theta_2+k_2\cos\theta_2)
\end{array}\right)
\]
on the following region:
\begin{equation}\label{3.13}
|\theta_2(t,x)|+|\theta_B(t,x)|<\frac{\pi}{2},
\end{equation}
where $\alpha=\theta_2-\theta_B$ and $\beta=\theta_2+\theta_B$.
Suppose that $\alpha^{in}$ and $\beta^{in}$ are of class $\mathcal{C}^1$ satisfying \eqref{3.13} at $t=0$ and
compactly supported in $\{x\in\bbr: |x|\leq P\}$.
Let $k_i=k_i(t,x)$, $i=0,1,2$, be continuous and of class $\mathcal{C}^1$ with respect to $x$ variable,
and assume that there exists a positive nondecreasing function $\tilde{k}$ such that
\[
||k_0(t)||_\infty+||k_1(t)||_\infty+||k_2(t)||_\infty\leq \tilde{k}(t).
\]
Then there exists a unique $\mathcal{C}^1$ solution on a time interval $[0,t_*)$ where $t_*$ is defined as follows:
\[
t_*:=\sup\left\{t:
||\theta_2^{in}||_\infty+||\theta_B^{in}||_\infty
+\int_0^t\tilde{k}(\tau)\,d\tau<\frac{\pi}{2}\right\}.
\]
\end{corollary}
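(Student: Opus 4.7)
The plan is to exploit the diagonal structure of \eqref{3.5}: along each of the two characteristic families the system reduces to a scalar transport ODE that can be integrated directly, giving sharp $L^\infty$ bounds with essentially no effort. Existence on $[0,t_*)$ then follows by combining these a priori bounds with Proposition 3.1 via a standard continuation argument.

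Concretely, define the characteristics $X_\alpha(\tau;t,x)$ and $X_\beta(\tau;t,x)$ by $\dot X_\alpha=-\cos\beta(\tau,X_\alpha)$ and $\dot X_\beta=\cos\alpha(\tau,X_\beta)$ with terminal condition $X_\alpha(t)=X_\beta(t)=x$. Along these curves $\alpha$ and $\beta$ satisfy ODEs whose right-hand sides are identical, namely $\cos\theta_2(k_0\sin\theta_B+k_1\sin\theta_2+k_2\cos\theta_2)$; inside the region \eqref{3.13} each trigonometric factor is bounded by $1$, so the right-hand side is dominated pointwise by $|k_0|+|k_1|+|k_2|\leq\tilde{k}(\tau)$. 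Integrating yields $|\alpha(t,x)|\leq\|\alpha^{in}\|_\infty+\int_0^t\tilde{k}(\tau)\,d\tau$ and $|\beta(t,x)|\leq\|\beta^{in}\|_\infty+\int_0^t\tilde{k}(\tau)\,d\tau$. From $\alpha=\theta_2-\theta_B$ and $\beta=\theta_2+\theta_B$ one has $\|\alpha^{in}\|_\infty,\|\beta^{in}\|_\infty\leq\|\theta_2^{in}\|_\infty+\|\theta_B^{in}\|_\infty$ together with the elementary pointwise identity $|\theta_2|+|\theta_B|=\max(|\alpha|,|\beta|)$; combining,
\[
|\theta_2(t,x)|+|\theta_B(t,x)| \ \leq\ \|\theta_2^{in}\|_\infty+\|\theta_B^{in}\|_\infty+\int_0^t\tilde{k}(\tau)\,d\tau,
\]
which is strictly less than $\pi/2$ for every $t<t_*$. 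Hence the solution never leaves the strict-hyperbolicity region \eqref{3.13}, the eigenvalues $-\cos\beta<0<\cos\alpha$ remain separated, and since the characteristic speeds stay bounded by $1$, compactly supported data produce a solution supported in a fixed compact set on each $[0,T]\subset[0,t_*)$.

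Existence up to $t_*$ is then a continuation argument. Proposition 3.1 furnishes a unique $\mathcal{C}^1$ solution on some initial interval and permits extension as long as the solution stays bounded in $\mathcal{C}^1$ and strict hyperbolicity is preserved. The $L^\infty$ bound just obtained rules out loss of strict hyperbolicity on every $[0,T]\subset[0,t_*)$, so the step requiring the most care is the $\mathcal{C}^1$ (rather than merely $L^\infty$) control needed to reapply Proposition 3.1 at times close to the maximal existence time. This is where linear degeneracy is essential: differentiating \eqref{3.5} in $x$ produces no $(\partial_x\alpha)^2$ or $(\partial_x\beta)^2$ terms, because each eigenvalue depends only on the opposite characteristic variable, and so one obtains a linear system of ODEs for $(\partial_x\alpha,\partial_x\beta)$ along their characteristics whose coefficients are controlled by the already-established bounds on $\alpha,\beta$ and by $\partial_x k_i$. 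A Gronwall estimate then bounds the $\mathcal{C}^1$ norm on each compact subinterval of $[0,t_*)$, and iterating Proposition 3.1 contradicts any hypothetical $T_{\max}<t_*$, yielding existence on the full interval $[0,t_*)$.
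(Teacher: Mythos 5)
Your $L^\infty$ estimate is exactly the paper's proof of this corollary: the paper also integrates (3.5) along the two characteristic families (the formulae (5.6)--(5.7)), bounds the right-hand side by $\|k_0\|_\infty+\|k_1\|_\infty+\|k_2\|_\infty\le\tilde k$, uses $\|\alpha^{in}\|_\infty,\|\beta^{in}\|_\infty\le\|\theta_2^{in}\|_\infty+\|\theta_B^{in}\|_\infty$ together with the equivalence between $|\theta_2|+|\theta_B|<\pi/2$ and $\max(|\alpha|,|\beta|)<\pi/2$, and concludes that the strict-hyperbolicity region is preserved on $[0,t_*)$, existence itself coming from Proposition 3.1 (Hartman--Wintner). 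Up to that point you and the paper coincide.

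The continuation argument you add, however, contains a genuine error in its key claim. Differentiating (3.5) in $x$ does \emph{not} give a linear system for $(\partial_x\alpha,\partial_x\beta)$ with coefficients controlled by the $L^\infty$ bounds: from $\partial_x(-\cos\beta\,\partial_x\alpha)$ and $\partial_x(\cos\alpha\,\partial_x\beta)$ you pick up the cross terms $\sin\beta\,(\partial_x\beta)(\partial_x\alpha)$ and $-\sin\alpha\,(\partial_x\alpha)(\partial_x\beta)$. Linear degeneracy (each speed depending only on the opposite Riemann invariant) removes the pure squares $(\partial_x\alpha)^2$, $(\partial_x\beta)^2$, but the bilinear coupling survives, so the system for the derivatives is quadratic and your Gronwall step produces a Riccati-type inequality that does not close up to an arbitrary $T<t_*$. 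This is precisely why the paper, when it does need gradient control (in the coupled Vlasov setting, Lemma 5.6), introduces the weighted quantities $u=(-\cos\alpha-\cos\beta)\partial_x\alpha$ and $w=(\cos\alpha+\cos\beta)\partial_x\beta$ in the spirit of Li Ta-tsien, for which the quadratic terms recombine into terms linear in $u,w$; the naive derivative equation is not enough. Two further hypotheses your Gronwall would require are also absent from the corollary: a uniform bound on $\|\partial_x k_i(t)\|_\infty$ over $[0,T]\times\bbr$ (only continuity, $\mathcal{C}^1$ in $x$, and $\|k_i(t)\|_\infty\le\tilde k(t)$ are assumed), and compact support of the solution (the source $\cos\theta_2(k_0\sin\theta_B+k_1\sin\theta_2+k_2\cos\theta_2)$ need not be compactly supported, since at $\alpha=\beta=0$ it equals $k_2$, so $\alpha,\beta$ do not stay supported in a fixed compact set). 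If you want to go beyond the paper's brief argument and justify continuation all the way to $t_*$, you should either restrict to what Proposition 3.1 actually delivers, as the paper does, or run the Li-type weighted estimate of Lemma 5.6 under explicit additional assumptions on $\partial_x k_i$.
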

\begin{proof}
We note that the system \eqref{3.5} is strictly hyperbolic on the region \eqref{3.13}:
\[
\lambda_2-\lambda_1=\cos\alpha+\cos\beta=2\cos\theta_2\cos\theta_B>0.
\]
The other conditions in Proposition 3.1 hold, hence we have local-in-time existence of $\mathcal{C}^1$ solutions
by applying the proposition.

In contrast with homogeneous cases, the solution $u$ of \eqref{3.9} will grow
due to the inhomogeneous term as time evolves.
For this reason, the domain \eqref{3.10} should be
restricted as \eqref{3.12},
i.e., the solution $u$ satisfies \eqref{3.11} only on the restricted region \eqref{3.12},
and thus we have to find a time interval where the solutions $\alpha$ and $\beta$ satisfy \eqref{3.13}.
We use the integral formula \eqref{5.5}--\eqref{5.6} in Section 5 and estimate them as follows.
We first note that
\[
|\theta_2(t,x)|+|\theta_B(t,x)|<\frac{\pi}{2}\quad\Longleftrightarrow\quad
|\alpha(t,x)|<\frac{\pi}{2}\quad\mbox{and}\quad
|\beta(t,x)|<\frac{\pi}{2}.
\]
By the integral formula \eqref{5.5}, we have
\[
\begin{aligned}
|\alpha(t,x)|&\leq||\alpha^{in}||_\infty
+\int_0^t||k_0(\tau)||_\infty+||k_1(\tau)||_\infty+||k_2(\tau)||_\infty\,d\tau\\
&\leq||\theta_2^{in}||_\infty+||\theta_B^{in}||_\infty
+\int_0^t||k_0(\tau)||_\infty+||k_1(\tau)||_\infty+||k_2(\tau)||_\infty\,d\tau.
\end{aligned}
\]
Similarly, we have the same estimate for $\beta$.
\[
|\beta(t,x)|\leq||\theta_2^{in}||_\infty+||\theta_B^{in}||_\infty
+\int_0^t||k_0(\tau)||_\infty+||k_1(\tau)||_\infty+||k_2(\tau)||_\infty\,d\tau.
\]
Therefore, \eqref{3.13} holds on the time interval $[0,t_*)$ by the definition of $t_*$,
and the solutions exist on this time interval.
\end{proof}
%%%%%%%%%%%%%%%%%%%%%%%%%%%%%%%%%%%%%%%%%%%%%%%%%
%
% Section 4. Main result
%
%%%%%%%%%%%%%%%%%%%%%%%%%%%%%%%%%%%%%%%%%%%%%%%%%
\section{Main result}\setcounter{equation}{0}
In the previous section, we derived the system of equations of our interests.
The system of \eqref{2.5}--\eqref{2.9} is transformed by \eqref{3.2} as follows:
\begin{align}
&\partial_tf+\hat{v}_1\partial_xf+(D_1+\hat{v}_2B,D_2-\hat{v}_1B)\cdot\nabla_vf=0,\label{4.1}\\
&\partial_t\alpha-(\cos\beta)\partial_x\alpha=\cos\theta_2(k_0\sin\theta_B+k_1\sin\theta_2+k_2\cos\theta_2),\label{4.2}\\
&\partial_t\beta+(\cos\alpha)\partial_x\beta=\cos\theta_2(k_0\sin\theta_B+k_1\sin\theta_2+k_2\cos\theta_2),\label{4.3}
\end{align}
where $\alpha=\theta_2-\theta_B$ and $\beta=\theta_2+\theta_B$, and $k_i$, $i=0,1,2$, are given by
\begin{equation}\label{4.4}
k_0=-\frac{\rho D_1}{1+D_1^2},\quad
k_1=\frac{j_1D_1}{1+D_1^2},\quad
k_2=-\frac{j_2}{\sqrt{1+D_1^2}}.
\end{equation}
They are coupled to each other as follows:
\begin{equation}\label{4.5}
\begin{aligned}
&\rho(t,x)=\int_{\bbr^2}f(t,x,v)\,dv-n(x)\quad\mbox{and}\quad D_1(t,x)=\int_{-\infty}^x\rho(t,y)\,dy,\\
&j(t,x)=\int_{\bbr^2}\hat{v}f(t,x,v)\,dv,
\end{aligned}
\end{equation}
and the electromagnetic fields $D_2$ and $B$ are given by
\begin{equation}\label{4.6}
D_2=\sqrt{1+D_1^2}\tan\theta_2\quad\mbox{and}\quad
B=\tan\theta_B.
\end{equation}
Let $f^{in}$, $\theta_2^{in}$, and $\theta_B^{in}$ be initial data
for the unknowns $f$, $\theta_2$, and $\theta_B$, or equivalently
$f$, $\alpha$, and $\beta$, and we consider the Cauchy problem for \eqref{4.1}--\eqref{4.6}.\bigskip

\noindent{\bf Assumptions on initial data.}
To study the Cauchy problem for \eqref{4.1}--\eqref{4.6}, suitable assumptions on initial data should be taken.
We will assume that initial data $f^{in}$, $\theta_2^{in}$, and $\theta_B^{in}$ satisfy the following conditions A1--3:
\begin{itemize}
\item[A1.] They are compactly supported $\mathcal{C}^1$ functions:
\[
f^{in}\in\mathcal{C}^1_c(\bbr\times\bbr^2),\quad
\theta_2^{in}\in\mathcal{C}^1_c(\bbr),\quad
\theta_B^{in}\in\mathcal{C}^1_c(\bbr),\quad n\in\mathcal{C}^1_c(\bbr),
\]
and $P$ denotes the size of their support, i.e., they are supported in
\[
\left\{(x,v):|x|, |v|\leq P\right\}\quad \mbox{or}\quad \left\{x:|x|\leq P\right\}.
\]
\item[A2.] The fields $\theta_2^{in}$ and $\theta_B^{in}$ are bounded such that
\[
|\theta_2^{in}(x)|+ |\theta_B^{in}(x)|<\frac{\pi}{2}\quad\mbox{for}\quad |x|\leq P.
\]
\item[A3.] The fields are small in the sense that
\[
||\theta_2^{in}||_\infty+||\theta_B^{in}||_\infty<\arctan\frac{1}{P}.
\]
\end{itemize}
Note that there are two kinds of characteristic curves,
one from the Vlasov equation \eqref{4.1} and the other ones from the field equations \eqref{4.2}--\eqref{4.3}.
The particles described by the Vlasov equation have velocities $\hat{v}$, while the Born-Infeld field propagates
with speed $-\cos\beta$ or $\cos\alpha$, which may be less than the speed of light $c=1$.
Hence, there may be a resonance between a particle trajectory and field propagation, and
in this case we cannot apply the arguments of \cite{GSt86}:
for the Vlasov-Maxwell case, the Vlasov and the Maxwell characteristics are always linearly independent,
so $x$ or $t$ derivatives can be decomposed into linear combinations of the Vlasov and the Maxwell characteristics
(see \cite{Gl,GSt86} for details).
This argument can be applied to our case \eqref{4.1}--\eqref{4.3} when we take the third assumption above.
The assumption A3 implies that the Vlasov and the Born-Infeld characteristics are separated at $t=0$. Note that
\[
||\theta_2^{in}||_\infty+||\theta_B^{in}||_\infty<\arctan\frac{1}{P}\quad\Longleftrightarrow\quad
\hat{P}<\cos\left(||\theta_2^{in}||_\infty+||\theta_B^{in}||_\infty\right),
\]
and this implies that under the assumptions A1 and A2 we have
\[
|\hat{v}_1|<\cos\left(\max\left\{|\alpha^{in}(x)|,|\beta^{in}(x)|\right\}\right)
=\min\left\{\cos\alpha^{in}(x),\cos\beta^{in}(x)\right\}
\]
for any $x$ and $v$ satisfying $|x|\leq P$ and $|v|\leq P$.
Remind that $\alpha^{in}=\theta_2^{in}-\theta_B^{in}$ and $\beta^{in}=\theta_2^{in}+\theta_B^{in}$,
and the cosine function is decreasing on $[0,\pi/2]$.
Hence, the particle velocity and the field propagation speed are initially separated as follows:
\begin{equation}
-\cos\beta^{in}(x)<\hat{v}_1<\cos\alpha^{in}(x).
\end{equation}
We now state the main theorem of the present paper.
%%%%%%%%%%%%%%%%%%%%%%%%%%%%%%%%%%%%%%%%%%%%%%%%%%%%%%%%%%%%%%%%%%%%%%%%%%%%%
%
% Theorem 4.1
%
%%%%%%%%%%%%%%%%%%%%%%%%%%%%%%%%%%%%%%%%%%%%%%%%%%%%%%%%%%%%%%%%%%%%%%%%%%%%%
\begin{theorem}
Consider the Cauchy problem for the system \eqref{4.1}--\eqref{4.6} When initial data
$f^{in}$, $\theta_2^{in}$, and $\theta_B^{in}$ satisfy the assumptions A1--3, then the system
\eqref{4.1}--\eqref{4.6} has a unique $\mathcal{C}^1$ solutions $f$, $\theta_2$, and $\theta_B$ locally in time.
\end{theorem}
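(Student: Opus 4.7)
The plan is a standard Picard-type iteration designed to reuse the tools already developed in Sections 2--3. Initialize with $(f^{(0)},\alpha^{(0)},\beta^{(0)}) \equiv (f^{in},\alpha^{in},\beta^{in})$. At step $n$, recover $(\theta_2^{(n)},\theta_B^{(n)})$ from $(\alpha^{(n)},\beta^{(n)})$, and thus $(D_2^{(n)},B^{(n)})$ via \eqref{4.6}, while $\rho^{(n)}$, $j^{(n)}$, $D_1^{(n)}$ come from $f^{(n)}$ via \eqref{4.5}. Freezing these fields in \eqref{4.1}, solve the Vlasov equation along its characteristics $\dot X = \hat V_1$, $\dot V = (D_1 + \hat V_2 B,\,D_2 - \hat V_1 B)$ to obtain $f^{(n+1)}$; then compute $k_i^{(n+1)}$ via \eqref{4.4} and invoke Corollary 3.1 on the diagonal system \eqref{4.2}--\eqref{4.3} to produce $(\alpha^{(n+1)},\beta^{(n+1)})$.

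The crux is closing uniform-in-$n$ bounds on a common interval $[0,T]$. Three interlinked quantities must be controlled: the $v$-support $P_v^{(n)}(t)$ of $f^{(n)}$, the field norms $\|\theta_2^{(n)}(t)\|_\infty + \|\theta_B^{(n)}(t)\|_\infty$, and the non-resonance margin between $|\hat{v}_1|$ on $\mathrm{supp}\, f^{(n)}$ and $\min(\cos\alpha^{(n)},\cos\beta^{(n)})$. A Gronwall argument along the Vlasov characteristics, with force bounded by $\|D^{(n)}\|_\infty + \|B^{(n)}\|_\infty$ (itself controlled by $\|\theta_2^{(n)}\|_\infty + \|\theta_B^{(n)}\|_\infty$ via \eqref{4.6}), yields a bound $P_v^{(n)}(t) \leq P_v(t)$ independent of $n$. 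Since $f$ is conserved along characteristics, this translates to majorants on $\|\rho^{(n+1)}(t)\|_\infty$ and $\|j^{(n+1)}(t)\|_\infty$; the factor $1/(1+D_1^2)$ in \eqref{4.4} then provides the uniform $L^\infty$ bound on the $k_i$, and Corollary 3.1 furnishes the field bound on a time interval depending only on these majorants, which for sufficiently small $T$ are uniform in $n$.

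The main obstacle is the non-resonance condition $-\cos\beta < \hat{v}_1 < \cos\alpha$, whose preservation is what permits the Glassey-Strauss decomposition of $\partial_t$ and $\partial_x$ into Vlasov and Born-Infeld characteristic derivatives required for propagation of $\mathcal{C}^1$ regularity. Assumption A3 furnishes strictly positive slack at $t=0$:
\begin{equation*}
\cos\bigl(\|\theta_2^{in}\|_\infty + \|\theta_B^{in}\|_\infty\bigr) \; - \; \hat{P} \; > \; 0.
\end{equation*}
Both $\|\theta_2^{(n)}(t)\|_\infty + \|\theta_B^{(n)}(t)\|_\infty$ and $\max |\hat{V}_1^{(n)}(t)|$ on the support grow continuously from their initial values, so by the uniform estimates above this slack survives on a short time interval $[0,T]$ uniformly in $n$. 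Quantifying this continuity argument---and in particular propagating the uniform bound on the first-order $x$- and $v$-derivatives of $f^{(n)}$ and the $x$-derivatives of $(\alpha^{(n)},\beta^{(n)})$ through the coupled system---is where the bulk of the technical work lies; the lifespan $T$ emerges in terms of $P$, $\|f^{in}\|_\infty$, and the initial margin.

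Convergence is then obtained via a contraction estimate on the differences $(f^{(n+1)} - f^{(n)},\alpha^{(n+1)} - \alpha^{(n)},\beta^{(n+1)} - \beta^{(n)})$ in a weaker norm (e.g., $L^\infty$), using the uniform $\mathcal{C}^1$ bounds as Lipschitz constants for the nonlinear coupling through \eqref{4.4}--\eqref{4.6} and along the perturbed characteristics; standard manipulations reduce the iteration to Gronwall form on $[0,T']$ with possibly $T' \leq T$. Passing to the limit yields a solution on $[0,T']$, uniqueness follows by applying the same difference estimate to two candidate solutions, and $\mathcal{C}^1$ regularity of the limit is recovered from the uniform $\mathcal{C}^1$ bounds of the iterates by a standard compactness argument.
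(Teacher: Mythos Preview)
Your outline matches the paper's strategy almost exactly: the same iteration scheme (Section~6.1), the same three controlled quantities (momentum support, field angles, non-resonance margin), the same role for A3, and the same appeal to the Glassey--Strauss splitting for the derivative estimates. The paper's key technical device you do not name is the passage to the auxiliary quantities $u=(-\cos\alpha-\cos\beta)\partial_x\alpha$ and $w=(\cos\alpha+\cos\beta)\partial_x\beta$ (Lemma~5.6), which is what makes the Glassey--Strauss argument close for $\partial_x\alpha$, $\partial_x\beta$; your sketch gestures at this step without isolating it.

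There is, however, a genuine gap in your last sentence. Uniform $\mathcal{C}^1$ bounds on the iterates do \emph{not} yield $\mathcal{C}^1$ regularity of the limit by compactness: Arzel\`a--Ascoli applied with only $\|\nabla f^{(n)}\|_\infty\le C$ gives precompactness in $\mathcal{C}^0$, so the limit is merely Lipschitz. To land in $\mathcal{C}^1$ you need either uniform equicontinuity of the derivatives or a direct Cauchy estimate for $\{\nabla_z f^{(n)}\}$, $\{\partial_x\alpha^{(n)}\}$, $\{\partial_x\beta^{(n)}\}$. The paper does both: it first proves equicontinuity of the derivative sequences (Lemma~6.12, built on Lemmas~6.8--6.11), and then uses that equicontinuity to upgrade the $\mathcal{C}^0$ Cauchy property to a $\mathcal{C}^1$ Cauchy property (Lemmas~6.13--6.14). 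This step is not ``standard'' in the sense of being automatic; it requires tracking how the derivatives of the characteristics depend on the derivatives of the fields, and is where a substantial fraction of Section~6 is spent.
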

\begin{proof}
The uniqueness is proved by following standard arguments for the Vlasov equations,
and we only prove the existence part of this main theorem.
The following sections are devoted to the proof of existence.
We refer to Section 5 and 6 for the proof.
\end{proof}
%%%%%%%%%%%%%%%%%%%%%%%%%%%%%%%%%%%%%%%%%%%%%%%%%%%%%%%%%%%%%%%%%%%%%%%%%%
%
% Section 5. A priori estimates
%
%%%%%%%%%%%%%%%%%%%%%%%%%%%%%%%%%%%%%%%%%%%%%%%%%%%%%%%%%%%%%%%%%%%%%%%%%%
\section{A priori estimates}\setcounter{equation}{0}
In this section, we obtain several a priori estimates for $f$, $\alpha$, $\beta$, and their derivatives.
The arguments basically follow the well-known results \cite{GSc90,GSt86}.
It turns out that
the nonlinearity of the field, i.e., the necessity of controlling the field characteristics,
complicates the problem and requires some restrictions, which will be seen in the proofs of lemmas.

To obtain a priori estimates for the system \eqref{4.1}--\eqref{4.6}, we consider a slightly modified version
of it in this section: for given $D^*$ and $B^*$, we consider the following Vlasov equation:
\begin{equation}\label{5.1}
\partial_tf+\hat{v}_1\partial_xf+(D_1^*+\hat{v}_2B^*,D_2^*-\hat{v}_1B^*)\cdot\nabla_vf=0
\end{equation}
together with \eqref{4.2}--\eqref{4.6}.
We first consider the characteristic equations of the Vlasov and the field equations.
Suppose that $D^*$ and $B^*$ fields are $\mathcal{C}^1$ on a time interval $I$.
Then, the following system of ODEs has a unique solution $X(s;t,x,v)$ and $V(s;t,x,v)$ which are $\mathcal{C}^1$
on $I\times I\times\bbr\times\bbr^2$:
\begin{equation}\label{5.2}
\begin{aligned}
&\frac{d}{ds}X(s)=\hat{V}_1(s),\quad X(t)=x,\\
&\frac{d}{ds}V(s)=D^*(s,X(s))+(\hat{V}_2(s),-\hat{V}_1(s))B^*(s,X(s)),\quad V(t)=v,
\end{aligned}
\end{equation}
where $X(s)=X(s;t,x,v)$ and $V(s)=V(s;t,x,v)$. The Vlasov equation \eqref{5.1} is then solved as follows:
\begin{equation}\label{5.3}
f(t,x,v)=f^{in}(X(0),V(0)).
\end{equation}
It is well known that the following map is measure preserving:
\[
(x,v)\mapsto (X(s;t,x,v),V(s;t,x,v)),
\]
which implies that $||f(t)||_p=||f^{in}||_p$ for any $1\leq p\leq \infty$ and $t\in I$,
and we refer to \cite{GSc90} for details.

Similarly, if $\alpha$ and $\beta$ are $\mathcal{C}^1$ on $I$,
then the following two ODEs have $\mathcal{C}^1$ solutions $\xi(\tau;t,x)$ and $\eta(\tau;t,x)$
on $I\times I\times\bbr$:
\begin{align}
&\frac{d}{d\tau}\xi(\tau)=-\cos\beta(\tau,\xi(\tau)),\quad \xi(t)=x,\label{5.4}\\
&\frac{d}{d\tau}\eta(\tau)=\cos\alpha(\tau,\eta(\tau)),\quad \eta(t)=x.\label{5.5}
\end{align}
Therefore, $\alpha$ and $\beta$ in \eqref{4.2}--\eqref{4.3} satisfy the following integral formulae on $I\times\bbr$:
\begin{equation}\label{5.6}
\begin{aligned}
\alpha(t,x)=\alpha^{in}(\xi(0))
+\int_0^t\Big(&k_0(\tau,\xi(\tau))\sin\theta_B(\tau,\xi(\tau))
+k_1(\tau,\xi(\tau))\sin\theta_2(\tau,\xi(\tau))\\ &{}+k_2(\tau,\xi(\tau))\cos\theta_2(\tau,\xi(\tau))\Big)\cos\theta_2(\tau,\xi(\tau))\,d\tau,
\end{aligned}
\end{equation}
\begin{equation}\label{5.7}
\begin{aligned}
\beta(t,x)=\beta^{in}(\eta(0))
+\int_0^t\Big(&k_0(\tau,\eta(\tau))\sin\theta_B(\tau,\eta(\tau))
+k_1(\tau,\eta(\tau))\sin\theta_2(\tau,\eta(\tau))\\ &{}+k_2(\tau,\eta(\tau))\cos\theta_2(\tau,\eta(\tau))\Big)\cos\theta_2(\tau,\eta(\tau))\,d\tau.
\end{aligned}
\end{equation}
%%%%%%%%%%%%%%%%%%%%%%%%%%%%%%%%%%%%%%%%%%%%%%%%%%%%%%%%%%%
%
% Subsection 5.1. A priori estimates for the unknowns and the momentum support.
%
%%%%%%%%%%%%%%%%%%%%%%%%%%%%%%%%%%%%%%%%%%%%%%%%%%%%%%%%%%%
\subsection{A priori estimates for the unknowns and momentum support}
We first define the momentum support $P(t)$ of $f$ as follows:
\[
P(t):=\sup\left\{|v|:f(s,x,v)\neq 0,\ 0\leq s\leq t,\ x\in\bbr,\ v\in\bbr^2\right\}.
\]
Note that $P(0)=P$.
For $\mathcal{C}^1$ functions $D^*$ and $B^*$ which are defined on a time interval $I$,
the momentum support $P(t)$ is well defined on $I$ because the characteristic equation \eqref{5.2}
gives
\begin{equation}\label{5.8}
\sup_{f^{in}(x,v)\neq 0}|V(t;0,x,v)|\leq P +\int_0^t ||D^*(s)||_\infty+||B^*(s)||_\infty\,ds,\quad t\in I.
\end{equation}
The time interval $I$ will be determined later, and we present some useful lemmas.
The following two lemmas are easily obtained by using \eqref{4.4}, \eqref{4.5}, and \eqref{5.3}.
%%%%%%%%%%%%%%%%%%%%%%%%%%%%%
%
% Lemma 5.1
%
%%%%%%%%%%%%%%%%%%%%%%%%%%%%%
\begin{lemma}
Suppose that $D^*$ and $B^*$ are $\mathcal{C}^1$ functions on a time interval $I$,
and consider the equations \eqref{5.1} and \eqref{4.5}.
Then $f$, $\rho$, $j$, and $D_1$ are well defined and $\mathcal{C}^1$ on $I$ and satisfy
the following estimates on $I\times\bbr\times\bbr^2$ or $I\times\bbr$:
\begin{align*}
(i)&\quad f(t,x,v)\leq ||f^{in}||_\infty.\cr
(ii)&\quad |\rho(t,x)|\leq \pi||f^{in}||_\infty P^2(t) +||n||_\infty.\cr
(iii)&\quad |j(t,x)|\leq \pi||f^{in}||_\infty P^2(t).\cr
(iv)&\quad |D_1(t,x)|\leq ||f^{in}||_1+||n||_1.
\end{align*}
\end{lemma}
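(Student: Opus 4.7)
The plan is to derive each of the four estimates directly from the representation formulas already established in the excerpt, together with the fact that the characteristic flow $(x,v)\mapsto(X(s;t,x,v),V(s;t,x,v))$ is measure preserving. The lemma is essentially bookkeeping once one has the momentum support $P(t)$ in hand.

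First I would establish (i). Since $D^*$, $B^*$ are $\mathcal{C}^1$ on $I$, the ODE system \eqref{5.2} has unique $\mathcal{C}^1$ solutions, and hence \eqref{5.3} gives $f(t,x,v)=f^{in}(X(0;t,x,v),V(0;t,x,v))$. Taking the supremum over $(x,v)$ and using the composition with a bijection immediately yields $f(t,x,v)\leq\|f^{in}\|_\infty$. This argument also confirms that $f$ is $\mathcal{C}^1$ on $I\times\bbr\times\bbr^2$ by smooth dependence on parameters.

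Next, for (ii) and (iii), I would use the definition of the momentum support $P(t)$: $f(t,x,v)=0$ whenever $|v|>P(t)$. Then
\[
|\rho(t,x)|\leq \int_{|v|\leq P(t)} |f(t,x,v)|\,dv + |n(x)| \leq \pi P^2(t)\,\|f^{in}\|_\infty + \|n\|_\infty,
\]
using (i) and the fact that the disk $\{|v|\leq P(t)\}\subset\bbr^2$ has area $\pi P^2(t)$. For $j$, I exploit the bound $|\hat v|<1$ from the definition $\hat v = v/\sqrt{1+|v|^2}$, so $|j(t,x)|\leq \int_{|v|\leq P(t)} |f(t,x,v)|\,dv \leq \pi P^2(t)\,\|f^{in}\|_\infty$. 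The $\mathcal{C}^1$ regularity of $\rho$ and $j$ follows from that of $f$ and the compact $v$-support.

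Finally for (iv), I would start from \eqref{2.9}, expand $\rho$ using \eqref{4.5}, and interchange integrals via Fubini:
\[
|D_1(t,x)|\leq \int_{-\infty}^x\!\!\int_{\bbr^2} f(t,y,v)\,dv\,dy + \int_{-\infty}^x |n(y)|\,dy \leq \|f(t)\|_1 + \|n\|_1.
\]
The key point is that the measure-preserving character of the characteristic flow (stated right after \eqref{5.3}) gives $\|f(t)\|_1=\|f^{in}\|_1$, yielding the stated bound. There is no significant obstacle here; the only subtlety is that all four bounds are \emph{conditional} on the momentum support $P(t)$ being finite, which is guaranteed on any interval $I$ where $D^*,B^*$ are $\mathcal{C}^1$ via the integral inequality \eqref{5.8}, so the estimates are self-consistent on $I$.
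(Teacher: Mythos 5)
Your proof matches the paper's: (i) from \eqref{5.3} and positivity of the flow map as a bijection, (ii)--(iii) by integrating the pointwise bound over the disk $\{|v|\le P(t)\}$ of area $\pi P^2(t)$ (with $|\hat v|<1$ for $j$), and (iv) by bounding $|D_1|$ with $\|\rho(t)\|_1$ and using the measure-preserving property of the characteristic flow to get $\|f(t)\|_1=\|f^{in}\|_1$. The reasoning and route are essentially identical to the paper's own proof.
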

\begin{proof}
We obtain the first estimate by using \eqref{5.3}. The second estimate is obtained by the definition of
$\rho$.
\begin{align*}
|\rho(t,x)|&\leq\int_{\bbr^2}f(t,x,v)\,dv +|n(x)|\\
&\leq
||f^{in}||_\infty\int_{|v|\leq P(t)}\,dv +||n||_\infty\leq
\pi ||f^{in}||_\infty P^2(t)+||n||_\infty.
\end{align*}
The third estimate is similarly proved.
For $(iv)$, we use the measure preserving property of the map
$(x,v)\mapsto(X(0),V(0))$ as follows:
\begin{align*}
|D_1(t,x)|&\leq\int_{-\infty}^\infty|\rho(t,y)|\,dy\cr
&\leq\int_{-\infty}^\infty\int_{\bbr^2} f(t,y,v)\,dv\,dy +||n||_1= ||f^{in}||_1+||n||_1.
\end{align*}
\end{proof}
%%%%%%%%%%%%%%%%%%%%%%%%%%%%%
%
% Lemma 5.2
%
%%%%%%%%%%%%%%%%%%%%%%%%%%%%%
\begin{lemma}
Suppose that $D^*$ and $B^*$ are $\mathcal{C}^1$ functions on a time interval $I$,
and consider the equation \eqref{4.4}.
Then $k_i$, $i=0,1,2$, are well defined and $\mathcal{C}^1$ on $I$ and satisfy
the following estimates on $I\times\bbr$:
\begin{align*}
(i)&\quad |k_0(t,x)|\leq \pi||f^{in}||_\infty P^2(t)+||n||_\infty.\cr
(ii)&\quad |k_1(t,x)|\leq \pi||f^{in}||_\infty P^2(t).\cr
(iii)&\quad |k_2(t,x)|\leq \pi||f^{in}||_\infty P^2(t).
\end{align*}
\end{lemma}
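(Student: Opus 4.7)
The plan is to recognize Lemma 5.2 as essentially an immediate corollary of Lemma 5.1. The quantities $k_0, k_1, k_2$ from \eqref{4.4} each have the form (source term)$\times$(scalar function of $D_1$), where the sources are $\rho, j_1, j_2$ and the scalar factors are, respectively, $-D_1/(1+D_1^2)$, $D_1/(1+D_1^2)$, and $-1/\sqrt{1+D_1^2}$. The key observation is the elementary uniform bounds
\[
\frac{|D_1|}{1+D_1^2}\leq 1\quad\text{and}\quad\frac{1}{\sqrt{1+D_1^2}}\leq 1,
\]
valid for every real $D_1$. (In fact the first one is at most $1/2$ by AM-GM, but the looser bound already gives the stated estimates.)

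Using these, I would write the pointwise estimates
\[
|k_0(t,x)|\leq |\rho(t,x)|,\quad |k_1(t,x)|\leq |j_1(t,x)|\leq |j(t,x)|,\quad |k_2(t,x)|\leq |j_2(t,x)|\leq |j(t,x)|,
\]
and then substitute the bounds $(ii)$ and $(iii)$ from Lemma 5.1 to obtain $(i)$, $(ii)$, $(iii)$ of Lemma 5.2 directly. The regularity claim (that each $k_i$ is $\mathcal{C}^1$ on $I$) follows because Lemma 5.1 provides $\mathcal{C}^1$ regularity of $\rho$, $j$, $D_1$, while the scalar functions $y\mapsto y/(1+y^2)$ and $y\mapsto 1/\sqrt{1+y^2}$ are smooth on $\bbr$.

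There is essentially no obstacle; the real content has already been extracted in Lemma 5.1, where the measure-preserving property of the characteristic flow and the definition of the momentum support $P(t)$ were used. The proof of the present lemma is a one-line reduction. The only thing worth being a little careful about is to remember that the bound on $\rho$ inherits the additional $\Vert n\Vert_\infty$ contribution from the neutralizing background (hence the $\Vert n\Vert_\infty$ term appears only in the estimate for $k_0$), whereas $j$ has no background term and thus $k_1, k_2$ do not pick up such an additive constant.
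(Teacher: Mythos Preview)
Your proposal is correct and matches the paper's own proof essentially line for line: the paper simply notes $|k_0|\leq|\rho|$, $|k_1|\leq|j_1|$, $|k_2|\leq|j_2|$ from the obvious bounds on the $D_1$-dependent factors and then invokes Lemma~5.1. Your additional remarks on the $\mathcal{C}^1$ regularity and on why only $k_0$ carries the $\|n\|_\infty$ term are accurate and slightly more explicit than the paper, but there is no substantive difference.
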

\begin{proof}
In \eqref{4.4}, we can see that
\[
|k_0(t,x)|\leq|\rho(t,x)|,\quad
|k_1(t,x)|\leq|j_1(t,x)|,\quad
|k_2(t,x)|\leq|j_2(t,x)|.
\]
Hence, the lemma is proved by Lemma 5.1.
\end{proof}
In the next lemma, we use the integral formulae \eqref{5.6}--\eqref{5.7} and Lemma 5.2 to estimate $D_2$ and $B$ fields.
Note that $\alpha=\theta_2-\theta_B$ and $\beta=\theta_2+\theta_B$.
%%%%%%%%%%%%%%%%%%%%%%%%%%%%%%%%%
%
% Lemma 5.3
%
%%%%%%%%%%%%%%%%%%%%%%%%%%%%%%%%%
\begin{lemma}
Suppose that $D^*$ and $B^*$ are $\mathcal{C}^1$ functions on a time interval $I$.
Then the system \eqref{4.2}--\eqref{4.3} has unique $\mathcal{C}^1$ solutions on $I\cap [0,T_1)$,
which is defined by
\[
T_1:=\sup\left\{t: ||\theta_2^{in}||_\infty+||\theta_B^{in}||_\infty
+\int_0^t ||n||_\infty+3\pi ||f^{in}||_\infty P^2(\tau)\,d\tau<\frac{\pi}{2}\right\},
\]
and $D_2$ and $B$ satisfy the following estimates on $I\cap [0,T_1)\times\bbr$:
\[
\begin{aligned}
&|D_2(t,x)|\leq \sqrt{1+(||f^{in}||_1+||n||_1)^2}\tan\left(
||\theta_2^{in}||_\infty+||\theta_B^{in}||_\infty
+\int_0^t ||n||_\infty+3\pi ||f^{in}||_\infty P^2(\tau)\,d\tau\right),\\
&|B(t,x)|\leq \tan\left(
||\theta_2^{in}||_\infty+||\theta_B^{in}||_\infty
+\int_0^t ||n||_\infty+3\pi ||f^{in}||_\infty P^2(\tau)\,d\tau\right).
\end{aligned}
\]
\end{lemma}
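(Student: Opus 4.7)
The plan is to apply Corollary 3.1 directly to the diagonal system \eqref{4.2}--\eqref{4.3}, which is identical to \eqref{3.5}, and then convert the resulting bounds on $\alpha$ and $\beta$ back into bounds on $D_2$ and $B$ via \eqref{4.6}. Since $D^*$ and $B^*$ are $\mathcal{C}^1$ on $I$, the characteristic solutions \eqref{5.2} and the representation \eqref{5.3} show that $f$, $\rho$, $j$, and $D_1$ are $\mathcal{C}^1$ on $I$, so the coefficients $k_0, k_1, k_2$ in \eqref{4.4} are $\mathcal{C}^1$ in $x$ and continuous in $t$, as required by Corollary 3.1.

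The next step is to supply the hypothesized positive nondecreasing majorant $\tilde{k}(t)$. Lemma 5.2 gives
\[
\|k_0(t)\|_\infty + \|k_1(t)\|_\infty + \|k_2(t)\|_\infty \leq \|n\|_\infty + 3\pi\|f^{in}\|_\infty P^2(t),
\]
and the right-hand side is nondecreasing in $t$ because $P(t)$ is nondecreasing (by \eqref{5.8}). Assumptions A1 and A2 ensure that $\alpha^{in} = \theta_2^{in} - \theta_B^{in}$ and $\beta^{in} = \theta_2^{in} + \theta_B^{in}$ are compactly supported $\mathcal{C}^1$ functions with $|\theta_2^{in}| + |\theta_B^{in}| < \pi/2$. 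Corollary 3.1 therefore yields unique $\mathcal{C}^1$ solutions $\alpha, \beta$ on $I \cap [0, T_1)$, with $T_1$ as defined in the statement.

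For the pointwise estimates, I would use the characteristic integral formulae \eqref{5.6}--\eqref{5.7}. Bounding each trigonometric factor by $1$ and applying the estimate above gives
\[
|\alpha(t,x)|, \, |\beta(t,x)| \leq \|\theta_2^{in}\|_\infty + \|\theta_B^{in}\|_\infty + \int_0^t \bigl(\|n\|_\infty + 3\pi\|f^{in}\|_\infty P^2(\tau)\bigr)\,d\tau.
\]
Because $\theta_2 = (\alpha+\beta)/2$ and $\theta_B = (\beta-\alpha)/2$, the same bound controls $|\theta_2|$ and $|\theta_B|$, and on $[0, T_1)$ it stays strictly below $\pi/2$, so the tangent function is well defined and monotone on the resulting interval. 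Combining with the estimate $|D_1| \leq \|f^{in}\|_1 + \|n\|_1$ from Lemma 5.1(iv) and the inversion formulae $B = \tan\theta_B$ and $D_2 = \sqrt{1+D_1^2}\tan\theta_2$ in \eqref{4.6} yields exactly the stated bounds.

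I do not expect any real obstacle: the essential existence theory has already been done in Corollary 3.1, and this lemma is primarily bookkeeping that inserts the density bounds of Lemma 5.2 into that corollary and then translates the conclusions through the transformation \eqref{3.2}. The only point requiring minor care is verifying that $T_1$ defined here coincides with the threshold $t_*$ of Corollary 3.1 — this is immediate since the majorant $\tilde{k}$ constructed above matches, term-for-term, the integrand appearing in the definition of $T_1$.
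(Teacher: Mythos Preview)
Your proposal is correct and follows essentially the same approach as the paper: apply Corollary~3.1 with the majorant supplied by Lemma~5.2, then use the integral formulae \eqref{5.6}--\eqref{5.7} to bound $\alpha$, $\beta$ (and hence $\theta_2$, $\theta_B$), and finally convert via \eqref{4.6} and Lemma~5.1(iv). One tiny remark: the monotonicity of $P(t)$ follows directly from its definition as a supremum over $0\le s\le t$, not from \eqref{5.8}.
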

\begin{proof}
The existence interval $[0,T_1)$ is given by Lemma 5.2 followed by Corollary 3.1.
Since $\theta_2=(\alpha+\beta)/2$ and $\theta_B=(\beta-\alpha)/2$, we have the following
estimates from \eqref{5.6}--\eqref{5.7} and Lemma 5.2:
\[
\begin{aligned}
|\theta_2(t,x)|&\leq \frac{|\alpha^{in}(\xi(0))|+|\beta^{in}(\eta(0))|}{2}
+\int_0^t||k_0(\tau)||_\infty+||k_1(\tau)||_\infty+||k_2(\tau)||_\infty\,d\tau\\
&\leq ||\theta_2^{in}||_\infty+||\theta_B^{in}||_\infty
+\int_0^t||k_0(\tau)||_\infty+||k_1(\tau)||_\infty+||k_2(\tau)||_\infty\,d\tau\\
&\leq ||\theta_2^{in}||_\infty+||\theta_B^{in}||_\infty
+\int_0^t ||n||_\infty+3\pi ||f^{in}||_\infty P^2(\tau)\,d\tau,
\end{aligned}
\]
and similarly we have for $\theta_B$
\[
|\theta_B(t,x)|\leq ||\theta_2^{in}||_\infty+||\theta_B^{in}||_\infty
+\int_0^t ||n||_\infty+3\pi ||f^{in}||_\infty P^2(\tau)\,d\tau.
\]
Hence, the lemma is proved by \eqref{4.6} as follows:
\[
\begin{aligned}
|D_2(t,x)|&\leq\sqrt{1+D_1^2(t,x)}\tan(|\theta_2(t,x)|)\\
&\leq\sqrt{1+(||f^{in}||_1+||n||_1)^2}\tan\left(
||\theta_2^{in}||_\infty+||\theta_B^{in}||_\infty
+\int_0^t ||n||_\infty+3\pi ||f^{in}||_\infty P^2(\tau)\,d\tau\right),
\end{aligned}
\]
and
\[
\begin{aligned}
|B(t,x)|&\leq\tan(|\theta_2(t,x)|)\\
&\leq\tan\left(
||\theta_2^{in}||_\infty+||\theta_B^{in}||_\infty
+\int_0^t ||n||_\infty+3\pi ||f^{in}||_\infty P^2(\tau)\,d\tau\right).
\end{aligned}
\]
\end{proof}
%%%%%%%%%%%%%%%%%%%%%%%%%%%%%%%%%%%%
%
% Remark 5.1
%
%%%%%%%%%%%%%%%%%%%%%%%%%%%%%%%%%%%%
\begin{remark}
By the same argument with the proof of Lemma 5.3,
we can see from \eqref{5.6}--\eqref{5.7}
that $\alpha$, $\beta$, $\theta_2$, and $\theta_B$ are bounded by the same quantity as follows:
\[
\max\{|\alpha(t,x)|,|\beta(t,x)|\}\leq||\theta_2^{in}||_\infty+||\theta_B^{in}||_\infty
+\int_0^t ||n||_\infty+3\pi ||f^{in}||_\infty P^2(\tau)\,d\tau.
\]
Since $\alpha=\theta_2-\theta_B$ and $\beta=\theta_2+\theta_B$, the above inequality gives the following estimate:
\[
|\theta_2(t,x)|+|\theta_B(t,x)|\leq||\theta_2^{in}||_\infty+||\theta_B^{in}||_\infty
+\int_0^t ||n||_\infty+3\pi ||f^{in}||_\infty P^2(\tau)\,d\tau.
\]
In other words, on $[0,T_1)$ we have
\[
\max\Big\{|\theta_2(t,x)|,|\theta_B(t,x)|,|\alpha(t,x)|,|\beta(t,x)|\Big\}
\leq |\theta_2(t,x)|+|\theta_B(t,x)|<\frac{\pi}{2}.
\]
\end{remark}
%%%%%%%%%%%%%%%%%%%%%%%%%%%%%%%%%%%%
%
% Lemma 5.4
%
%%%%%%%%%%%%%%%%%%%%%%%%%%%%%%%%%%%%
\begin{lemma}
Suppose that $D^*$ and $B^*$ are $\mathcal{C}^1$ functions on $[0,T_1)$,
and consider the following Vlasov equation for $f^\dag$:
\begin{align*}
\partial_tf^\dag+\hat{v}_1\partial_xf^\dag+(D_1+\hat{v}_2B,D_2-\hat{v}_1B)\cdot\nabla_vf^\dag=0,
\end{align*}
where $D$ and $B$ are given by the system \eqref{5.1} and \eqref{4.2}--\eqref{4.6},
with the same initial data $f^\dag(0)=f^{in}$.
Then its momentum support $P^\dag(t)$ satisfies the following integral inequality on $[0,T_1)$:
\[
\begin{aligned}
&P^\dag(t)\leq P+2\sqrt{1+(||f^{in}||_1+||n||_1)^2}\\
&\hspace{3cm}\times\int_0^t
\cos^{-1}\left(||\theta_2^{in}||_\infty+||\theta_B^{in}||_\infty
+\int_0^s||n||_\infty+3\pi||f^{in}||_\infty P^2(\tau)\,d\tau\right)\,ds,
\end{aligned}
\]
where $P(t)$ is the momentum support of $f$ in \eqref{5.1}.
\end{lemma}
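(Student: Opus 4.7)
\emph{Plan.} The strategy is to imitate the simple momentum-support bound \eqref{5.8} for the modified Vlasov equation, and then to convert the resulting integral of $\|D(s)\|_\infty+\|B(s)\|_\infty$ into a $\sec\Theta(s)$ bound by exploiting the transformed variables introduced in Section~3.

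\textbf{Step 1 (characteristics for $f^\dag$).} I first write the characteristic system for $f^\dag$ exactly as in \eqref{5.2}, but with $D^*,B^*$ replaced by the genuine fields $D,B$ coming from \eqref{5.1} and \eqref{4.2}--\eqref{4.6}. Since $|(\hat V^\dag_2,-\hat V^\dag_1)|\le 1$, the momentum equation yields
\[
\Bigl|\tfrac{d}{ds}V^\dag(s)\Bigr|\le \|D(s)\|_\infty+\|B(s)\|_\infty,
\]
so integrating along any characteristic issuing from $\operatorname{supp} f^{in}$ gives
\[
P^\dag(t)\le P+\int_0^t\!\bigl(\|D(s)\|_\infty+\|B(s)\|_\infty\bigr)\,ds.
\]
This reduces the task to an $L^\infty$ bound on $D$ and $B$ on $[0,T_1)$.

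\textbf{Step 2 (fields via transformed variables).} Using the identities \eqref{3.3} together with $\sin\theta_B/\cos\theta_B=\tan\theta_B=B$, I have
\[
|D|\le \sqrt{1+|D|^2}=\frac{\sqrt{1+D_1^2}}{\cos\theta_2},\qquad |B|\le \sqrt{1+B^2}=\frac{1}{\cos\theta_B}.
\]
Lemma 5.1(iv) bounds $|D_1|$ by $\|f^{in}\|_1+\|n\|_1$, so
\[
|D(s,x)|+|B(s,x)|\le \bigl(\sqrt{1+(\|f^{in}\|_1+\|n\|_1)^2}+1\bigr)\max\!\bigl(\sec\theta_2(s,x),\sec\theta_B(s,x)\bigr),
\]
which I further majorize by $2\sqrt{1+(\|f^{in}\|_1+\|n\|_1)^2}\max(\sec\theta_2,\sec\theta_B)$ since $\sqrt{1+c^2}\ge 1$.

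\textbf{Step 3 (controlling the cosines).} Denote by $\Theta(s)$ the expression inside the $\cos^{-1}$ on the right-hand side of the claimed bound. By Remark 5.1, on $[0,T_1)$ we have $\max(|\theta_2(s,x)|,|\theta_B(s,x)|)\le \Theta(s)<\pi/2$. Since cosine is decreasing on $[0,\pi/2)$ and positive there, it follows that
\[
\cos\theta_2(s,x),\ \cos\theta_B(s,x)\ \ge\ \cos\Theta(s)>0,
\]
and therefore
\[
\|D(s)\|_\infty+\|B(s)\|_\infty\le 2\sqrt{1+(\|f^{in}\|_1+\|n\|_1)^2}\,\sec\Theta(s).
\]
Substituting this into the inequality from Step~1 produces exactly the stated estimate for $P^\dag(t)$.

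\textbf{Anticipated obstacle.} Mechanically everything is routine once the transformed variables are available; the only conceptual point to guard is that the bound on $\theta_2,\theta_B$ coming from Remark 5.1 involves the momentum support $P(t)$ of $f$ (the solution of the modified system driven by $D^*,B^*$), not the support $P^\dag(t)$ of $f^\dag$. Preserving this asymmetry is the whole point of the lemma, as it sets up the contraction structure needed later for the iteration $f\mapsto f^\dag$; so I must make sure to apply Lemma 5.3 and Remark 5.1 with $P$, while the conclusion is phrased for $P^\dag$.
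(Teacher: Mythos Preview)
Your proof is correct and follows essentially the same route as the paper: reduce to bounding $\|D(s)\|_\infty+\|B(s)\|_\infty$ via \eqref{5.8}, then control both by $2\sqrt{1+(\|f^{in}\|_1+\|n\|_1)^2}\sec\Theta(s)$ using the transformed variables and Lemma~5.3/Remark~5.1. The only cosmetic difference is that the paper bounds $|D|^2$ componentwise from Lemmas~5.1 and~5.3 to obtain $(1+c^2)\sec^2\Theta$, while you go directly through the identity $\sqrt{1+|D|^2}=\sqrt{1+D_1^2}/\cos\theta_2$; both arrive at the same estimate.
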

\begin{proof}
By Lemma 5.1 and 5.3, we have
\begin{align*}
&|D_1(t,x)|\leq ||f^{in}||_1+||n||_1,\\
&|D_2(t,x)|\leq\sqrt{1+(||f^{in}||_1+||n||_1)^2}\tan\left(
||\theta_2^{in}||_\infty+||\theta_B^{in}||_\infty
+\int_0^t ||n||_\infty+3\pi ||f^{in}||_\infty P^2(\tau)\,d\tau\right),\\
&|B(t,x)|\leq\tan\left(
||\theta_2^{in}||_\infty+||\theta_B^{in}||_\infty
+\int_0^t ||n||_\infty+3\pi ||f^{in}||_\infty P^2(\tau)\,d\tau\right).
\end{align*}
Hence, we obtain
\[
\begin{aligned}
|D(t,x)|^2&\leq\Big(1+(||f^{in}||_1+||n||_1)^2\Big)\Big(1+\tan^2\Big(\cdots\Big)\Big)\\
&=\Big(1+(||f^{in}||_1+||n||_1)^2\Big)\cos^{-2}\Big(\cdots\Big),
\end{aligned}
\]
and therefore,
\[
||D(t)||_\infty+||B(t)||_\infty\leq 2\sqrt{1+(||f^{in}||_1+||n||_1)^2}\cos^{-1}\Big(\cdots\Big).
\]
We now use the inequality \eqref{5.8} for $f^\dag$ and $P^\dag$ to get the desired result.
\end{proof}
%%%%%%%%%%%%%%%%%%%%%%%%%%%%%%%%%%%%%%%%%%%%%%%%%%%%%%%%%%%%%%%%%%
%
% Subsection 5.2. A priori estimates for derivatives of the solutions
%
%%%%%%%%%%%%%%%%%%%%%%%%%%%%%%%%%%%%%%%%%%%%%%%%%%%%%%%%%%%%%%%%%%
\subsection{A priori estimates for derivatives of the solutions}
In this part, we estimate the derivatives of $f$, $\alpha$, and $\beta$.
In Lemma 5.3, we introduced a finite interval $[0,T_1)$ on which we could estimate
the field quantities. To estimate their derivatives, we need an additional interval $[0,T_2)$
which is defined as follows:
\[
T_2:=\sup\left\{t:||\theta_2^{in}||_\infty+||\theta_B^{in}||_\infty+\int_0^t||n||_\infty+3\pi||f^{in}||_\infty
P^2(s)\,ds<\arctan\frac{1}{P(t)}\right\}.
\]
Note that $T_2>0$ because of the assumption on initial data A3. We remind that the assumption A3 gives the separation between
the Vlasov and the Born-Infeld characteristics at $t=0$. Hence, we can see that the interval $[0,T_2)$ is the maximal
interval on which the Vlasov and the Born-Infeld characteristics are separated: since $\alpha$ and $\beta$ are bounded by
the LHS of the inequality in the definition of $T_2$ (see Remark 5.1), we have on $[0,T_2)$
\[
\max\{|\alpha(t,x)|,|\beta(t,x)|\}<\arctan\frac{1}{P(t)},
\]
which is equivalent to
\[
\hat{P}(t)<\cos(\max\{|\alpha(t,x)|,|\beta(t,x)|\})
=\min\{\cos\alpha(t,x),\cos\beta(t,x)\}.
\]
Hence, we have
\[
-\cos\beta(t,x)<\hat{v}_1<\cos\alpha(t,x)
\]
for any $x$ and $v$ satisfying $f(t,x,v)\neq 0$.

We now fix a closed and bounded interval $[0,T]\subset [0,T_1)\cap [0,T_2)$,
and all the following estimates in this subsection will be
considered on $[0,T]$. Note that $P(T)$ is finite, and $C_T$ will denote a positive
constant depending only on $T$ and $P(T)$ and vary from line to line.
We first estimate $x$ and $v$ derivatives of $f$.
%%%%%%%%%%%%%%%%%%%%%%%%%%%%
%
% Lemma 5.5
%
%%%%%%%%%%%%%%%%%%%%%%%%%%%%
\begin{lemma}
Suppose that $D^*$ and $B^*$ are $\mathcal{C}^1$ functions on $[0,T]$,
and consider the following Vlasov equation for $f^\dag$:
\begin{align*}
\partial_tf^\dag+\hat{v}_1\partial_xf^\dag+(D_1+\hat{v}_2B,D_2-\hat{v}_1B)\cdot\nabla_vf^\dag=0,
\end{align*}
where $D$ and $B$ are given by the system \eqref{5.1} and \eqref{4.2}--\eqref{4.6}, with the same initial
data $f^\dag(0)=f^{in}$. Then we have the following estimate for $\nabla_{x,v}f^\dag$:
\[
||\nabla_{x,v}f^\dag(t)||_\infty\leq ||\nabla_{x,v}f^{in}||_\infty
+C_T\int_0^t\left(1+||\partial_x\alpha(s)||_\infty
+||\partial_x\beta(s)||_\infty\right)||\nabla_{x,v}f^\dag(s)||_\infty\,ds,
\]
where $C_T$ is a positive constant depending on $T$, $P(T)$, and initial data.
\end{lemma}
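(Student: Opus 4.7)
My plan is to represent $f^\dag$ via its own characteristics and differentiate. Let $(X^\dag(s;t,x,v),V^\dag(s;t,x,v))$ denote the solution of the ODE system \eqref{5.2} with the true fields $(D,B)$ substituted for $(D^*,B^*)$; these characteristics are well defined and $\mathcal{C}^1$ on $[0,T]$ because Lemma 5.3 provides $\mathcal{C}^1$ regularity of $D,B$ there, and, by repeating the argument of Lemma 5.5, the corresponding momentum support of $f^\dag$ remains bounded by a constant $C_T$. Writing $f^\dag(t,x,v)=f^{in}(X^\dag(0),V^\dag(0))$ and differentiating the Vlasov equation for $f^\dag$ with respect to $x$ and each $v_i$ produces three transport equations along the same $f^\dag$-characteristics, whose source terms are
\[
-\partial_x(D_1+\hat v_2B)\,\partial_{v_1}f^\dag-\partial_x(D_2-\hat v_1B)\,\partial_{v_2}f^\dag
\]
in the $x$-equation and, in the $v_i$-equations, bounded combinations of $\partial_{v_i}\hat v_j$ and the fields $D,B$, all of which are controlled by $C_T$ on the support of $f^\dag$ thanks to Lemma 5.1 and Lemma 5.3.

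Integrating these three transport equations along $(X^\dag(s),V^\dag(s))$ from $0$ to $t$, taking absolute values and then the supremum over $(x,v)$, yields
\[
||\nabla_{x,v}f^\dag(t)||_\infty\leq ||\nabla_{x,v}f^{in}||_\infty+C_T\int_0^t\bigl(1+||\partial_xD_2(s)||_\infty+||\partial_xB(s)||_\infty\bigr)||\nabla_{x,v}f^\dag(s)||_\infty\,ds,
\]
where $\partial_xD_1=\rho$ has already been absorbed into $C_T$ via Lemma 5.1. The next step is to convert $\partial_xD_2$ and $\partial_xB$ into the quantities $\partial_x\alpha,\partial_x\beta$ appearing in the claim. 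From \eqref{4.6} the chain rule gives
\[
\partial_xD_2=\frac{D_1\rho}{\sqrt{1+D_1^2}}\tan\theta_2+\sqrt{1+D_1^2}\,\sec^2\theta_2\,\partial_x\theta_2,\qquad\partial_xB=\sec^2\theta_B\,\partial_x\theta_B,
\]
and since $[0,T]\subset[0,T_1)$, Remark 5.1 ensures $|\theta_2|+|\theta_B|<\pi/2$ uniformly, so both $\sec^2\theta_2$ and $\sec^2\theta_B$ are controlled by $C_T$. Using $\theta_2=(\alpha+\beta)/2$ and $\theta_B=(\beta-\alpha)/2$ then gives $||\partial_xD_2||_\infty+||\partial_xB||_\infty\leq C_T(1+||\partial_x\alpha||_\infty+||\partial_x\beta||_\infty)$, and substituting this into the previous inequality yields the stated bound.

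The only subtle point is the unboundedness of the Jacobian factors $\sec^2\theta_2,\sec^2\theta_B$ relating $(D_2,B)$ to the transformed variables, which on the face of it precludes a uniform chain-rule bound for $\partial_xD_2,\partial_xB$. The restriction $[0,T]\subset[0,T_1)\cap[0,T_2)$ built into the lemma is precisely what removes this obstacle, as on that interval Remark 5.1 confines the transformed unknowns strictly inside $|\theta_2|+|\theta_B|<\pi/2$ and prevents the secants from blowing up. Everything else is routine: the characteristic-integration step for $\nabla_{x,v}f^\dag$ together with the uniform support bound provided by the $f^\dag$-analogue of Lemma 5.5.
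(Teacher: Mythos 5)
Your proposal is correct and follows essentially the same route as the paper's own proof: differentiate the Vlasov equation for $f^\dag$ in $x$ and $v$, integrate the resulting transport equations along the $f^\dag$-characteristics to obtain a Gr{\"o}nwall-type inequality in terms of $||\partial_xD||_\infty$ and $||\partial_xB||_\infty$, and then use $\partial_xD_1=\rho$ (bounded via Lemma 5.1), the chain rule applied to \eqref{4.6}, and the uniform bound $|\theta_2|+|\theta_B|<\pi/2$ on $[0,T]\subset[0,T_1)$ from Remark 5.1 to replace $\partial_xD_2,\partial_xB$ by $C_T(1+||\partial_x\alpha||_\infty+||\partial_x\beta||_\infty)$. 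Your closing remark about the secant factors being the genuinely nontrivial point, controlled precisely because $[0,T]$ is chosen strictly inside $[0,T_1)$, correctly identifies why the transformed variables make this estimate tractable; this matches the role assigned to \eqref{5.9}--\eqref{5.11} in the paper.
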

\begin{proof}
By direct differentiation with respect to $x$ and $v$, we obtain
\begin{align*}
&\partial_t(\partial_xf^\dag)+\hat{v}_1\partial_x(\partial_xf^\dag)
+(D+(\hat{v}_2,-\hat{v}_1)B)\cdot\nabla_v(\partial_xf^\dag)\\
&\hspace{5cm}=-(\partial_xD+(\hat{v}_2,-\hat{v}_1)\partial_xB)\cdot\nabla_vf^\dag,\\
&\partial_t(\partial_{v_i}f^\dag)+\hat{v}_1\partial_x(\partial_{v_i}f^\dag)
+(D+(\hat{v}_2,-\hat{v}_1)B)\cdot\nabla_v(\partial_{v_i}f^\dag)\\
&\hspace{5cm}=-(\partial_{v_i}\hat{v}_1)(\partial_xf^\dag)-(\partial_{v_i}\hat{v}_2,-\partial_{v_i}\hat{v}_1)B\cdot\nabla_vf^\dag,
\end{align*}
where $i=1,2$, and
by taking the characteristic curve \eqref{5.2}, we have
\[
||\nabla_{x,v}f^\dag(t)||_\infty\leq ||\nabla_{x,v}f^{in}||_\infty+C_T\int_0^t\left(1+||\partial_xD(s)||_\infty
+||\partial_xB(s)||_\infty\right)||\nabla_{x,v}f^\dag(s)||_\infty\,ds,
\]
where we used $|B|\leq C_T$ by Lemma 5.3.
By \eqref{4.5} and Lemma 5.1, we obtain
\begin{equation}\label{5.9}
|\partial_xD_1(t,x)|\leq|\rho(t,x)|\leq C_T.
\end{equation}
We take $x$ derivative on \eqref{4.6} to have
\[
\partial_xD_2=\frac{D_1\partial_xD_1}{\sqrt{1+D_1^2}}\tan\theta_2
+\sqrt{1+D_1^2}\frac{\partial_x\theta_2}{\cos^2\theta_2}.
\]
Since $|\theta_2(t,x)|<\frac{\pi}{2}$ on $[0,T_1)$, we obtain
\begin{equation}\label{5.10}
|\partial_xD_2(t,x)|\leq C_T(1+|\partial_x\theta_2(t,x)|)\leq C_T(1+||\partial_x\alpha(t)||_\infty+||\partial_x\beta(t)||_\infty).
\end{equation}
Similarly, we obtain
\begin{equation}\label{5.11}
|\partial_xB(t,x)|\leq C_T(||\partial_x\alpha(t)||_\infty+||\partial_x\beta(t)||_\infty),
\end{equation}
and this completes the proof.
\end{proof}
We now estimate the $x$ derivatives of $\alpha$ and $\beta$. In the next lemma,
we introduce new quantities in order to estimate $\partial_x\alpha$ and $\partial_x\beta$ as in \cite{Li},
but the lemma is proved only when the Vlasov and the Born-Infeld
characteristics are well separated. We will see in the proof that it is necessary
to introduce the interval $[0,T_2)$ and restrict all the arguments to it.
%%%%%%%%%%%%%%%%%%%%%%%%%%%%%%%%%%%%%%%%%%%%
%
% Lemma 5.6
%
%%%%%%%%%%%%%%%%%%%%%%%%%%%%%%%%%%%%%%%%%%%%
\begin{lemma}
Suppose that $D^*$ and $B^*$ are $\mathcal{C}^1$ functions on $[0,T]$,
and consider the following quantities $u$ and $w$:
\[
u:=(-\cos\alpha-\cos\beta)\partial_x\alpha\quad\mbox{and}\quad
w:=(\cos\alpha+\cos\beta)\partial_x\beta.
\]
Then, they are bounded on $[0,T]$
by a positive constant $C_T$ which depends only on $T$, $P(T)$, and initial data.
\end{lemma}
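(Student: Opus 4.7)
The plan is to derive transport equations for $u$ and $w$ along the $\alpha$- and $\beta$-characteristics \eqref{5.4}--\eqref{5.5} of the hyperbolic system \eqref{4.2}--\eqref{4.3}, and then close a Gronwall-type estimate using that $\cos\alpha+\cos\beta$ stays bounded and bounded away from zero on $[0,T]\subset[0,T_1)$ and that the Vlasov and field characteristics remain separated on $[0,T_2)$.

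First I would differentiate $u=-(\cos\alpha+\cos\beta)\partial_x\alpha$ and use \eqref{4.2}--\eqref{4.3} to eliminate time derivatives. Writing $R$ for the common right-hand side of \eqref{4.2}--\eqref{4.3}, a direct calculation reveals that the multiplier $(\cos\alpha+\cos\beta)$ is precisely the one for which all cross terms $\partial_x\alpha\,\partial_x\beta$ and all second-order terms $\partial_x^2\alpha$ cancel exactly, leaving
\begin{equation*}
\partial_t u-\cos\beta\,\partial_x u=(\sin\alpha+\sin\beta)R\,\partial_x\alpha-(\cos\alpha+\cos\beta)\partial_x R.
\end{equation*}
Substituting $\partial_x\alpha=-u/(\cos\alpha+\cos\beta)$ and the identity $(\sin\alpha+\sin\beta)/(\cos\alpha+\cos\beta)=\tan\theta_2$ yields
\begin{equation*}
\partial_t u-\cos\beta\,\partial_x u=-(\tan\theta_2)Ru-(\cos\alpha+\cos\beta)\partial_x R,
\end{equation*}
and an entirely symmetric computation gives
\begin{equation*}
\partial_t w+\cos\alpha\,\partial_x w=-(\tan\theta_2)Rw+(\cos\alpha+\cos\beta)\partial_x R.
\end{equation*}

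Integrating these along $\xi(\tau)$ and $\eta(\tau)$ respectively, the task reduces to bounding the forcing term $(\cos\alpha+\cos\beta)\partial_x R$. On $[0,T]\subset[0,T_1)$, Remark 5.1 and Lemma 5.2 bound $|\tan\theta_2|,|R|,|\cos\alpha+\cos\beta|$ by $C_T$; crucially, $\cos\alpha+\cos\beta=2\cos\theta_2\cos\theta_B\geq c_T>0$, converting bounds on $|u|,|w|$ into bounds on $|\partial_x\alpha|,|\partial_x\beta|$ and hence on $|\partial_x\theta_2|,|\partial_x\theta_B|$. Expanding $\partial_x R$ via \eqref{4.4}--\eqref{4.5} then produces terms proportional to $|u|+|w|$, together with $\partial_x D_1=\rho$ (controlled by Lemma 5.1) and $\partial_x\rho,\partial_x j$, which are $v$-integrals of $\partial_x f$ over a compact momentum support of size at most $P(T)$. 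Altogether,
\begin{equation*}
|\partial_x R(t,x)|\leq C_T\bigl(1+|u(t,x)|+|w(t,x)|+||\partial_x f(t)||_\infty\bigr).
\end{equation*}

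The main obstacle is closing the Gronwall loop with $||\partial_x f||_\infty$. By an argument parallel to Lemma 5.5 applied to the Vlasov equation \eqref{5.1} driven by the external fields $D^*,B^*$, $||\partial_x f(t)||_\infty$ satisfies an integral inequality whose forcing is governed by $||\partial_x D^*||_\infty+||\partial_x B^*||_\infty$, a quantity controlled by the fixed-point iteration and absorbed into $C_T$. Combining this with the characteristic-integrated inequalities for $u$ and $w$, a joint Gronwall inequality on $||u(t)||_\infty+||w(t)||_\infty+||\partial_x f(t)||_\infty$ then yields the stated uniform bound on $[0,T]$. The restriction $T<T_2$ is essential: it is exactly the condition that prevents a resonance between the Vlasov characteristic speed $\hat v_1$ and the field characteristic speeds $\pm\cos\beta,\cos\alpha$, ensuring the lower bound $\cos\alpha+\cos\beta\geq c_T$ survives and the inversion $\partial_x\alpha=-u/(\cos\alpha+\cos\beta)$ used throughout remains legitimate.
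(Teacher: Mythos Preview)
Your derivation of the transport equations for $u$ and $w$ is correct, but the closure step has a real gap. You bound $\partial_x\rho,\partial_xj$ through $\|\partial_xf\|_\infty$ and then control the latter by Gr{\"o}nwall on the Vlasov equation \eqref{5.1}. Since $f$ in \eqref{5.1} is transported by the \emph{external} fields $D^*,B^*$, that Gr{\"o}nwall forcing is $\|\partial_xD^*\|_\infty+\|\partial_xB^*\|_\infty$, so your final bound on $u,w$ depends on the $\mathcal{C}^1$ norm of $D^*,B^*$ and not merely on $T$, $P(T)$, and initial data as the lemma asserts. The phrase ``controlled by the fixed-point iteration and absorbed into $C_T$'' does not repair this: in the scheme \eqref{6.1}--\eqref{6.4} one would get $\|u^n\|_\infty+\|w^n\|_\infty$ bounded in terms of $\|\partial_xD^{n-1}\|_\infty+\|\partial_xB^{n-1}\|_\infty$, hence in terms of $\|u^{n-1}\|_\infty+\|w^{n-1}\|_\infty$, and because the dependence through Gr{\"o}nwall is exponential the recursion does not close uniformly in $n$.

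The paper avoids $\partial_xf$ altogether by the Glassey--Strauss device: one writes
\[
\partial_x=\frac{S-T}{\hat v_1+\cos\beta},\qquad S=\partial_t+\hat v_1\partial_x,\quad T=\partial_t-\cos\beta\,\partial_x,
\]
so that in $\int_0^t\Omega_i\,\partial_xj_i(\tau,\xi(\tau))\,d\tau$ the $Sf$ piece becomes a $v$-divergence via \eqref{5.1} and the $Tf$ piece is a total $\tau$-derivative along $\xi$. After integration by parts only zeroth-order quantities (bounded by Lemmas 5.1--5.3) and $\|w\|_\infty$ survive, yielding a closed Gr{\"o}nwall system for $\|u\|_\infty+\|w\|_\infty$ with constants depending only on $T$, $P(T)$, and initial data. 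This is also where the restriction $T<T_2$ actually enters: it guarantees $\hat v_1+\cos\beta\neq 0$ on the support of $f$, making the $T$--$S$ splitting legitimate. Your stated reason for $T_2$---to keep $\cos\alpha+\cos\beta\geq c_T>0$---is a misreading; that lower bound already follows from $T<T_1$ via Remark~5.1, and in your argument as written the condition $T<T_2$ plays no role at all.
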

\begin{proof}
By direct calculations, we have
\begin{align*}
&\partial_tu=(\sin\alpha\partial_t\alpha+\sin\beta\partial_t\beta)\partial_x\alpha
+(-\cos\alpha-\cos\beta)\partial_x\partial_t\alpha,\\
&\partial_xu=(\sin\alpha\partial_x\alpha+\sin\beta\partial_x\beta)\partial_x\alpha
+(-\cos\alpha-\cos\beta)\partial_x^2\alpha,
\end{align*}
and
\begin{equation}\label{5.12}
\begin{aligned}
&\partial_t\partial_x\alpha+\sin\beta\partial_x\beta\partial_x\alpha-\cos\beta\partial_x^2\alpha\\
&=(\partial_xk_0\sin\theta_B+k_0\cos\theta_B\partial_x\theta_B
+\partial_xk_1\sin\theta_2+k_1\cos\theta_2\partial_x\theta_2
+\partial_xk_2\cos\theta_2-k_2\sin\theta_2\partial_x\theta_2)\cos\theta_2\\
&\hspace{0.5cm}-(k_0\sin\theta_B+k_1\sin\theta_2+k_2\cos\theta_2)\sin\theta_2\partial_x\theta_2.
\end{aligned}
\end{equation}
We now derive an equation for $u$ as follows:
\begin{align*}
&\partial_tu-\cos\beta\partial_xu\\
&=\sin\alpha\partial_t\alpha\partial_x\alpha+\sin\beta\partial_t\beta\partial_x\alpha
-\cos\alpha\partial_x\partial_t\alpha-\cos\beta\partial_x\partial_t\alpha\\
&\hspace{0.5cm}-\cos\beta\sin\alpha(\partial_x\alpha)^2-\cos\beta\sin\beta\partial_x\beta\partial_x\alpha
+\cos\alpha\cos\beta\partial_x^2\alpha+\cos^2\beta\partial_x^2\alpha\\
&=\sin\alpha\partial_x\alpha(\partial_t\alpha-\cos\beta\partial_x\alpha)+\sin\beta\partial_x\alpha(\partial_t\beta
+\cos\alpha\partial_x\beta)\\
&\hspace{0.5cm}-(\cos\alpha+\cos\beta)(\partial_t\partial_x\alpha+\sin\beta\partial_x\beta\partial_x\alpha-\cos\beta\partial_x^2\alpha).
\end{align*}
Since the inhomogeneous terms of \eqref{4.2} and \eqref{4.3} are same, we have
\begin{align*}
&\partial_tu-\cos\beta\partial_xu\\
&=(\sin\alpha+\sin\beta)\partial_x\alpha(k_0\sin\theta_B+k_1\sin\theta_2+k_2\cos\theta_2)\cos\theta_2\\
&\hspace{0.5cm}-(\cos\alpha+\cos\beta)(\partial_xk_0\sin\theta_B+\partial_xk_1\sin\theta_2+\partial_xk_2\cos\theta_2)\cos\theta_2\\
&\hspace{0.5cm}-(\cos\alpha+\cos\beta)(k_0\cos\theta_B\partial_x\theta_B
+k_1\cos\theta_2\partial_x\theta_2-k_2\sin\theta_2\partial_x\theta_2)\cos\theta_2\\
&\hspace{0.5cm}+(\cos\alpha+\cos\beta)(k_0\sin\theta_B+k_1\sin\theta_2+k_2\cos\theta_2)\sin\theta_2\partial_x\theta_2\\
&=: I_1+I_2+I_3+I_4,
\end{align*}
where we used \eqref{5.12}. For $I_1$, we have
\begin{align*}
I_1&=-\frac{\sin\alpha+\sin\beta}{\cos\alpha+\cos\beta}u(k_0\sin\theta_B+k_1\sin\theta_2+k_2\cos\theta_2)\cos\theta_2\\
&=-u(k_0\sin\theta_2\sin\theta_B+k_1\sin^2\theta_2+k_2\cos\theta_2\sin\theta_2)\\
&=-\frac{u}{2}\Big(k_0(\cos\alpha-\cos\beta)+k_1(1-\cos(2\theta_2))+k_2\sin(2\theta_2)\Big),
\end{align*}
where we used trigonometric identities. For $I_2$, we have
\begin{equation}\label{5.13}
\begin{aligned}
I_2&=-2\cos\theta_2\cos\theta_B(\partial_xk_0\sin\theta_B+\partial_xk_1\sin\theta_2+\partial_xk_2\cos\theta_2)\cos\theta_2\\
&=-2(\cos^2\theta_2\cos\theta_B\sin\theta_B\partial_xk_0
+\cos^2\theta_2\sin\theta_2\cos\theta_B\partial_xk_1
+\cos^3\theta_2\cos\theta_B\partial_xk_2).
\end{aligned}
\end{equation}
For $I_3$ and $I_4$, we note that
\[
\frac{u+w}{2}=(\cos\alpha+\cos\beta)\partial_x\theta_B\quad\mbox{and}\quad
\frac{w-u}{2}=(\cos\alpha+\cos\beta)\partial_x\theta_2.
\]
Hence we have
\begin{align*}
&I_3=-\left(k_0\cos\theta_B\frac{u+w}{2}+k_1\cos\theta_2\frac{w-u}{2}-k_2\sin\theta_2\frac{w-u}{2}\right)\cos\theta_2,\\
&I_4=\frac{w-u}{2}(k_0\sin\theta_B+k_1\sin\theta_2+k_2\cos\theta_2)\sin\theta_2,
\end{align*}
and then $I_3+I_4$ is written as follows:
\begin{align*}
&I_3+I_4\\
%&=\frac{u}{2}\Big(k_0\left(-\cos\theta_2\cos\theta_B-\sin\theta_2\sin\theta_B\right)
%+k_1\left(\cos^2\theta_2-\sin^2\theta_2\right)
%+k_2\left(-\cos\theta_2\sin\theta_2-\sin\theta_2\cos\theta_2\right)\Big)\\
%&\hspace{0.5cm}+\frac{w}{2}\Big(k_0\left(-\cos\theta_2\cos\theta_B+\sin\theta_2\sin\theta_B\right)
%+k_1\left(-\cos^2\theta_2+\sin^2\theta_2\right)
%+k_2\left(\cos\theta_2\sin\theta_2+\sin\theta_2\cos\theta_2\right)\Big)\\
&=\frac{u}{2}\Big(-k_0\cos\alpha+k_1\cos(2\theta_2)-k_2\sin(2\theta_2)\Big)
+\frac{w}{2}\Big(-k_0\cos\beta-k_1\cos(2\theta_2)+k_2\sin(2\theta_2)\Big).
\end{align*}
We write $I_1+I_3+I_4$ as follow for simplicity:
\begin{align*}
&I_1+I_3+I_4\\
&=u\left(k_0\left(-\cos\alpha+\frac{1}{2}\cos\beta\right)
+k_1\left(-\frac{1}{2}+\cos(2\theta_2)\right)
-k_2\sin(2\theta_2)\right)\\
&\hspace{0.5cm}+w\left(-\frac{1}{2}k_0\cos\beta-\frac{1}{2}k_1\cos(2\theta_2)+\frac{1}{2}k_2\sin(2\theta_2)\right)\\
&=:K_1u+K_2w,
\end{align*}
where $K_1$ and $K_2$ are elementary functions of $k_0$, $k_1$, $k_2$, $\alpha$, and $\beta$.
We now control the derivative terms $\partial_xk_i$, $i=0,1,2$, in $I_2$.
By direct calculations, we have
\begin{equation}\label{5.14}
\begin{aligned}
\partial_xk_0&=-\frac{D_1}{1+D_1^2}\partial_x\rho-\rho\left(\frac{1}{(1+D_1^2)^2}(\partial_xD_1(1+D_1^2)
-2D_1^2\partial_xD_1)\right)\\
&=-\frac{D_1}{1+D_1^2}\partial_x\rho-\rho^2\frac{1-D_1^2}{(1+D_1^2)^2},
\end{aligned}
\end{equation}
and similarly,
\begin{equation}\label{5.15}
\partial_xk_1=\frac{D_1}{1+D_1^2}\partial_xj_1+j_1\rho\frac{1-D_1^2}{(1+D_1^2)^2},
\end{equation}
\begin{equation}\label{5.16}
\partial_xk_2=-\frac{1}{\sqrt{1+D_1^2}}\partial_xj_2+j_2\rho\frac{D_1}{(1+D_1^2)^{3/2}}.
\end{equation}
If we plug the above results into \eqref{5.13}, then we can see that $I_2$ is written as follows:
\[
I_2=\Omega_0\partial_x\rho+\Omega_1\partial_xj_1+\Omega_2\partial_xj_2+L_0+L_1+L_2,
\]
where $\Omega_i$, $i=0,1,2$, are elementary functions of $\alpha$, $\beta$, and $D_1$, while
$L_i$, $i=0,1,2$, are elementary functions of $\alpha$, $\beta$, $\rho$, $j_1$, $j_2$, and $D_1$.
Consequently, we obtain the following equation for $u$:
\begin{equation}\label{5.17}
\partial_tu-\cos\beta\partial_xu
=K_1u+K_2w+\Omega_0\partial_x\rho+\Omega_1\partial_xj_1+\Omega_2\partial_xj_2+L_0+L_1+L_2.
\end{equation}
By Lemma 5.3, the system \eqref{4.2}--\eqref{4.3} has unique $\mathcal{C}^1$ solutions on $[0,T]$,
hence the characteristic curve \eqref{5.4}--\eqref{5.5} are well defined on $[0,T]$.
Along the curve \eqref{5.4}, we can rewrite \eqref{5.14} as follows:
\begin{equation}\label{5.18}
\begin{aligned}
u(t,x)&=u^{in}(\xi(0))+\int_0^t(K_1u+K_2w+L_0+L_1+L_2)(\tau,\xi(\tau))\,d\tau\\
&\hspace{0.5cm}+\int_0^t(\Omega_0\partial_x\rho
+\Omega_1\partial_xj_1
+\Omega_2\partial_xj_2)(\tau,\xi(\tau))\,d\tau.
\end{aligned}
\end{equation}
For simplicity, the last three term of \eqref{5.15} will be denoted by
\[
J_i=\int_0^t\Omega_i(\tau,\xi(\tau))(\partial_x j_i)(\tau,\xi(\tau))\,d\tau
\]
for $i=0,1,2$ and $j_0=\rho$.\bigskip

To estimate each $J_i$, we use Glassey-Strauss' argument in \cite{GSt86}. We first introduce the
differential operators $T$ and $S$:
\[
T:=\partial_t-\cos\beta\partial_x\quad\mbox{and}\quad
S:=\partial_t+\hat{v}_1\partial_x,
\]
by which $x$ derivatives can be written as
\[
\partial_x=\frac{S-T}{\hat{v}_1+\cos\beta}.
\]
Note that the denominator above does not vanish on $[0,T]$ because we have $-\cos\beta(t,x)<\hat{v}_1<\cos\alpha(t,x)$
on $[0,T_2)$ for $x$ and $v$ satisfying $f(t,x,v)\neq 0$.
In other words, the assumption A3 and the restriction to $[0,T_2)$ enable us to apply the argument of \cite{GSt86}.
$J_0$ term is estimated as follows:
\begin{align*}
J_0&=\int_0^t\Omega_0(\tau,\xi(\tau))(\partial_x\rho)(\tau,\xi(\tau))\,d\tau\\
&=\int_0^t\int_{\bbr^2}\Omega_0(\tau,\xi(\tau))(\partial_xf)(\tau,\xi(\tau),v)\,dv\,d\tau
-\int_0^t\Omega_0(\tau,\xi(\tau))(\partial_xn)(\xi(\tau))\,d\tau\\
&=: J_{01}+J_{02}.
\end{align*}
We rewrite $J_{01}$ in terms of $T$ and $S$ as follows:
\begin{eqnarray}
J_{01}&=&\int_0^t\int_{\bbr^2}\Omega_0(\tau,\xi(\tau))
\left(\frac{S-T}{\hat{v}_1+\cos\beta}\right)f(\tau,\xi(\tau),v)\,dv\,d\tau\cr
&=&\int_0^t\int_{\bbr^2}\Omega_0(\tau,\xi(\tau))
\frac{1}{\hat{v}_1+\cos\beta(\tau,\xi(\tau))}(Sf)(\tau,\xi(\tau),v)\,dv\,d\tau\cr
&&{}-\int_0^t\int_{\bbr^2}\Omega_0(\tau,\xi(\tau))
\frac{1}{\hat{v}_1+\cos\beta(\tau,\xi(\tau))}(Tf)(\tau,\xi(\tau),v)\,dv\,d\tau\cr
&=&\int_0^t\int_{\bbr^2}
\frac{\Omega_0(\tau,\xi(\tau))}{\hat{v}_1+\cos\beta(\tau,\xi(\tau))}
\nabla_v\cdot\Big[(-D-(\hat{v}_2,-\hat{v}_1)B)f(\tau,\xi(\tau),v)\Big]\,dv\,d\tau\cr
&&{}-\int_0^t\int_{\bbr^2}
\frac{\Omega_0(\tau,\xi(\tau))}{\hat{v}_1+\cos\beta(\tau,\xi(\tau))}
\partial_\tau\Big[f(\tau,\xi(\tau),v)\Big]\,dv\,d\tau\cr
&=&\int_0^t\int_{\bbr^2}
\nabla_v\left[\frac{\Omega_0(\tau,\xi(\tau))}{\hat{v}_1+\cos\beta(\tau,\xi(\tau))}\right]\cdot
(D+(\hat{v}_2,-\hat{v}_1)B)f(\tau,\xi(\tau),v)\,dv\,d\tau\label{5.19}\\
&&{}+\int_0^t\int_{\bbr^2}
\partial_\tau\left[\frac{\Omega_0(\tau,\xi(\tau))}{\hat{v}_1+\cos\beta(\tau,\xi(\tau))}\right]
f(\tau,\xi(\tau),v)\,dv\,d\tau\label{5.20}\\
&&{}-\int_{\bbr^2}\left(\frac{\Omega_0(t,x)}{\hat{v}_1+\cos\beta(t,x)}f(t,x,v)-
\frac{\Omega_0(0,\xi(0))}{\hat{v}_1+\cos\beta^{in}(\xi(0))}f^{in}(\xi(0),v)\right)\,dv.\label{5.21}
\end{eqnarray}
We remind that
\[
\Omega_0=2\cos^2\theta_2\cos\theta_B\sin\theta_B\frac{D_1}{1+D_1^2},
\]
and we need the followings, which can be verified by direct calculations:
\[
\partial_{v_i}\left[\frac{1}{\hat{v}_1+\cos\beta}\right]
=-\frac{1}{\sqrt{1+|v|^2}}\frac{1}{(\hat{v}_1+\cos\beta)^2}\left(\delta_{1i}
-\frac{v_1 v_i}{1+|v|^2}\right),
\]
\[
\partial_\tau\Big[\theta_2(\tau,\xi(\tau))\Big]=
-\frac{1}{2}w(\tau,\xi(\tau)),
\]
\[
\partial_\tau\Big[\theta_B(\tau,\xi(\tau))\Big]=
-\frac{1}{2}w(\tau,\xi(\tau))+ inhom,
\]
\[
\partial_\tau\Big[D_1(\tau,\xi(\tau))\Big]=
-j_1(\tau,\xi(\tau))-\cos\beta(\tau,\xi(\tau))\rho(\tau,\xi(\tau)),
\]
where `$inhom$' is the inhomogeneous term of \eqref{4.2} evaluated at $(\tau,\xi(\tau))$.
Note that $D$, $B$, $\rho$, $j$, $k_0$, $k_1$, and $k_2$ are bounded by a constant $C_T$
on $[0,T]$ by Lemma 5.1--5.3.
By applying the above results to \eqref{5.16}--\eqref{5.18}, we obtain the estimate for $J_{01}$.
\begin{eqnarray*}
|J_{01}|&\leq&C_T\int_0^t\int_{|v|\leq P(\tau)}\frac{1}{|\hat{v}_1+\cos\beta(\tau,\xi(\tau))|^2}\,dv\,d\tau\cr
&&{}+C_T\int_0^t\int_{|v|\leq P(\tau)}\frac{|w(\tau,\xi(\tau))|}{|\hat{v}_1+\cos\beta(\tau,\xi(\tau))|^2}\,dv\,d\tau\cr
&&{}+C\int_{|v|\leq P(t)}\frac{1}{|\hat{v}_1+\cos\beta(t,x)|}\,dv
+C\cr
&\leq& C_T+C_T\int_0^t||w(\tau)||_\infty\,d\tau.
\end{eqnarray*}
In the last inequality, we used the fact that $-\cos\beta(s,x)<\hat{v}_1<\cos\alpha(s,x)$ on $[0,T_3)$ for any $x$ and $v$ satisfying
$f(s,x,v)\neq 0$. $J_{02}$ is clearly bounded by $C_T$: $|J_{02}|\leq C_T$, and $J_1$ and $J_2$ are estimated
by the same way:
\[
|J_i|\leq C_T+C_T\int_0^t||w(\tau)||_\infty\,d\tau,\qquad i=1,2.
\]
The other terms, $K_1$, $K_2$, $L_0$, $L_1$, and $L_2$ in \eqref{5.14} are easily bounded by $C_T$ on $[0,T]$.
Therefore, we obtain the following estimate for $u$ from \eqref{5.15}:
\[
||u(t)||_\infty\leq C_T+C_T\int_0^t||u(\tau)||_\infty+||w(\tau)||_\infty\,d\tau.
\]
In the same way, we have the following estimate for $w$ after long calculations:
\[
||w(t)||_\infty\leq C_T+C_T\int_0^t||u(\tau)||_\infty+||w(\tau)||_\infty\,d\tau.
\]
We apply the Gr{\"o}nwall inequality to obtain the desired result,
and this completes the proof.
\end{proof}
%%%%%%%%%%%%%%%%%%%%%%%%%%%%%%%%%%%%%%%%%%
%
% Lemma 5.7
%
%%%%%%%%%%%%%%%%%%%%%%%%%%%%%%%%%%%%%%%%%%
\begin{lemma}
Suppose that $D^*$ and $B^*$ are $\mathcal{C}^1$ functions on $[0,T]$,
and consider the following quantities:
\[
||\partial_x\alpha(t)||_\infty,\quad
||\partial_x\beta(t)||_\infty,\quad
||\partial_xD(t)||_\infty,\quad
||\partial_xB(t)||_\infty,\quad
||\nabla_{x,v}f^\dag(t)||_\infty,
\]
where $f^\dag$ is given by Lemma 5.5.
Then, they are bounded on $[0,T]$ by a positive constant $C_T$
which depends only on $T$, $P(T)$, and initial data.
\end{lemma}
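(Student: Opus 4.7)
The plan is to chain together Lemma 5.5 and Lemma 5.6 in the obvious order, using only that the restriction of all arguments to $[0,T]\subset[0,T_1)\cap[0,T_2)$ gives us a quantitative separation of $|\theta_2|+|\theta_B|$ from $\pi/2$.

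The first step is to extract bounds on $\partial_x\alpha$ and $\partial_x\beta$ from the auxiliary quantities $u$ and $w$ of Lemma 5.6. Since $[0,T]$ is a closed subinterval of $[0,T_1)$, Remark 5.1 together with the definition of $T_1$ provides a constant $M_T<\pi/2$, depending only on $T$, $P(T)$, and initial data, such that
\[
|\theta_2(t,x)|+|\theta_B(t,x)|\le M_T\quad\text{for all }(t,x)\in[0,T]\times\bbr.
\]
In particular $|\theta_2|,|\theta_B|\le M_T$, so that
\[
\cos\alpha+\cos\beta=2\cos\theta_2\cos\theta_B\ge 2\cos^2 M_T>0
\]
uniformly on $[0,T]\times\bbr$. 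This pointwise lower bound on $\cos\alpha+\cos\beta$ is precisely what lets me divide the bounds $\|u(t)\|_\infty,\|w(t)\|_\infty\le C_T$ from Lemma 5.6 by $\cos\alpha+\cos\beta$ to conclude
\[
\|\partial_x\alpha(t)\|_\infty+\|\partial_x\beta(t)\|_\infty\le C_T \qquad\text{on }[0,T].
\]

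In the second step I feed this back into the field. From \eqref{5.9} I already have $|\partial_x D_1|\le|\rho|\le C_T$ by Lemma 5.1, and estimates \eqref{5.10}--\eqref{5.11} in the proof of Lemma 5.5 give
\[
|\partial_x D_2(t,x)|+|\partial_x B(t,x)|\le C_T\bigl(1+\|\partial_x\alpha(t)\|_\infty+\|\partial_x\beta(t)\|_\infty\bigr),
\]
so the previous step immediately yields $\|\partial_x D(t)\|_\infty+\|\partial_x B(t)\|_\infty\le C_T$ on $[0,T]$.

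Finally, inserting these field derivative bounds into the integral inequality of Lemma 5.5 gives
\[
\|\nabla_{x,v}f^\dag(t)\|_\infty\le\|\nabla_{x,v}f^{in}\|_\infty+C_T\int_0^t\|\nabla_{x,v}f^\dag(s)\|_\infty\,ds,
\]
and Gr\"onwall's inequality closes the estimate with a constant still of the form $C_T$.

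The only nonroutine point is the first step: one must verify that passing to the closed interval $[0,T]\subset[0,T_1)$ really produces a uniform lower bound on $\cos\alpha+\cos\beta$, since otherwise the quantities $u$ and $w$ controlled by Lemma 5.6 would not give control of $\partial_x\alpha$ and $\partial_x\beta$ themselves. This is the substantive role played by the a priori bound of Remark 5.1; once this is in hand, everything else is chaining the previous lemmas and a single Gr\"onwall step.
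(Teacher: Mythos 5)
Your proposal is correct and takes essentially the same route as the paper's proof: invert the bounds on $u$ and $w$ from Lemma 5.6 by bounding $\cos\alpha+\cos\beta=2\cos\theta_2\cos\theta_B$ away from zero, then propagate through \eqref{5.9}--\eqref{5.11} and a Gr\"onwall step on Lemma 5.5. You are slightly more explicit than the paper about why the lower bound on $\cos\theta_2\cos\theta_B$ is uniform on $[0,T]$ (via the monotone a priori bound of Remark 5.1 evaluated at $T<T_1$), which is a legitimate and welcome clarification rather than a different argument.
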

\begin{proof}
By Lemma 5.6, we have the following estimate on $[0,T]$:
\[
C_T\geq(\cos\alpha(t,x)+\cos\beta(t,x))|\partial_x\alpha(t,x)|
=2\cos\theta_2(t,x)\cos\theta_B(t,x)|\partial_x\alpha(t,x)|.
\]
On the other hand, we have on $[0,T_1)$ (see Remark 5.1)
\[
|\theta_2(t,x)|<\frac{\pi}{2},\quad|\theta_B(t,x)|<\frac{\pi}{2}.
\]
Therefore, we obtain on $[0,T]$
\[
\cos\theta_2(t,x)\geq\frac{1}{C_T},\quad\cos\theta_B(t,x)\geq\frac{1}{C_T},
\]
and this gives $||\partial_x\alpha(t)||_\infty\leq C_T$.
Similarly, we obtain the boundedness of $||\partial_x\beta(t)||_\infty$.
The boundedness of $\partial_xD$ and $\partial_xB$ is obtained by \eqref{5.9}--\eqref{5.11} as follows:
\[
||\partial_xD(t)||_\infty+||\partial_xB(t)||_\infty
\leq C_T(1+||\partial_x\alpha(t)||_\infty+||\partial_x\beta(t)||_\infty)\leq C_T.
\]
By using Gr{\"o}nwall's inequality on the result of Lemma 5.5, we can see that
$||\nabla_{x,v}f^\dag(t)||_\infty$ is bounded by $C_T$.
This completes the proof.
\end{proof}
%%%%%%%%%%%%%%%%%%%%%%%%%%%%%%%%%%%%%%%%%%%%%%%%%%%%%%%%%%%%%%%%%%%%%%%%%%%%
%
% Section 6. Iteration scheme and proof of convergence
%
%%%%%%%%%%%%%%%%%%%%%%%%%%%%%%%%%%%%%%%%%%%%%%%%%%%%%%%%%%%%%%%%%%%%%%%%%%%%
\section{Iteration scheme and proof of convergence}\setcounter{equation}{0}
In this section, we prove the main theorem.
We first introduce the iteration scheme for \eqref{4.1}--\eqref{4.6} and show that the sequence
converges to a $\mathcal{C}^1$ function by using the a priori estimates obtained in
the previous section.
%%%%%%%%%%%%%%%%%%%%%%%%%%%%%%%%%%%%%%%%%%%%%%%%%%
%
% Subsection 6.1. Iteration
%
%%%%%%%%%%%%%%%%%%%%%%%%%%%%%%%%%%%%%%%%%%%%%%%%%%
\subsection{Iteration}
We use a standard iteration.
We take sequences $f^{n}$, $\theta_2^{n}$, and $\theta_B^{n}$, $n=1,2,\cdots$, as follows.
Define $f^{0}:=f^{in}$, $\theta_2^{0}:=\theta_2^{in}$, and $\theta_B^{0}:=\theta_B^{in}$, or equivalently
$\alpha^0:=\alpha^{in}=\theta_2^{in}-\theta_B^{in}$ and $\beta^0:=\beta^{in}=\theta_2^{in}+\theta_B^{in}$.
For given $(n-1)$-th
step, we define $n$-th step as follows: $f^{n}$ is taken as the solution of the following equation:
\begin{equation}\label{6.1}
\begin{aligned}
&\partial_tf^{n}+\hat{v}_1\partial_xf^{n}+(D^{n-1}_1+\hat{v}_2B^{n-1},D^{n-1}_2-\hat{v}_1B^{n-1})\cdot\nabla_vf^{n}=0,\\
&f^{n}(0,x,v)=f^{in}(x,v),
\end{aligned}
\end{equation}
from which we take $\rho^{n}$, $j^{n}$, and $D_1^{n}$ as follows:
\begin{equation}\label{6.2}
\begin{aligned}
&\rho^{n}(t,x)=\int_{\bbr^2}f^{n}(t,x,v)\,dv-n(x)\quad\mbox{and}\quad D^{n}_1(t,x)=\int_{-\infty}^x\rho^{n}(t,y)\,dy,\\
&j^{n}(t,x)=\int_{\bbr^2}\hat{v}f^{n}(t,x,v)\,dv,
\end{aligned}
\end{equation}
and then we get $k_i^{n}$, $i=0,1,2$, from $\rho^n$, $j^n$, and $D_1^n$ by using \eqref{4.4}.
$\theta^{n}_2$ and $\theta^{n}_B$ are taken as the solution of
the following inhomogeneous quasilinear hyperbolic system:
\begin{equation}\label{6.3}
\begin{aligned}
&\partial_t\alpha^{n}-(\cos\beta^{n})\partial_x\alpha^{n}=\cos\theta_2^{n}(k_0^{n}\sin\theta_B^{n}+k_1^{n}\sin\theta_2^{n}+k_2^{n}\cos\theta_2^{n}),\\
&\partial_t\beta^n+(\cos\alpha^n)\partial_x\beta^n=\cos\theta_2^n(k_0^n\sin\theta_B^n+k_1^n\sin\theta_2^n+k_2^n\cos\theta_2^n),\\
&\theta_2^n(0,x)=\theta_2^{in}(x)\quad\mbox{and}\quad \theta_B^n(0,x)=\theta_B^{in}(x),
\end{aligned}
\end{equation}
where $\alpha^{n}=\theta_2^n-\theta_B^n$ and $\beta^n=\theta_2^n+\theta_B^n$.
Finally, $D_2^n$ and $B^n$ are obtained:
\begin{equation}\label{6.4}
D_2^n=\sqrt{1+(D_1^n)^2}\tan\theta_2^n\quad\mbox{and}\quad
B^n=\tan\theta_B^n,
\end{equation}
and this completes the $n$-th step.
%%%%%%%%%%%%%%%%%%%%%%%%%%%%%%%%%%%%%%%%%%%%%%%%%%%%%%%%%%%
%
% 6.2. Some remarks
%
%%%%%%%%%%%%%%%%%%%%%%%%%%%%%%%%%%%%%%%%%%%%%%%%%%%%%%%%%%%
\subsection{Some remarks} We remark that the a priori estimates obtained
in the previous section can be applied to each $n$-th step uniformly on $n$.
In the previous section, we first defined the momentum support
and then obtained an integral inequality for the momentum support in Lemma 5.4.
We now redefine $P(t)$ as the solution
of the following integral equation:
\begin{equation}\label{6.5}
\begin{aligned}
&P(t)= P+2\sqrt{1+(||f^{in}||_1+||n||_1)^2}\\
&\hspace{3cm}\times\int_0^t
\cos^{-1}\left(||\theta_2^{in}||_\infty+||\theta_B^{in}||_\infty
+\int_0^s||n||_\infty+3\pi||f^{in}||_\infty P^2(\tau)\,d\tau\right)\,ds,
\end{aligned}
\end{equation}
where $P$, $f^{in}$, $n$, $\theta_2^{in}$, and $\theta_B^{in}$ are given initial conditions satisfying the assumptions A1--A3.
We define a time interval $[0,{\mathfrak T}_0)$ as the maximal interval on which the solution of \eqref{6.5} exists.
Then, $\mathfrak{T}_1$ and $\mathfrak{T}_2$ are also redefined as follows:
\[
\mathfrak{T}_1:=\sup\left\{t:||\theta_2^{in}||_\infty+||\theta_B^{in}||_\infty+\int_0^t||n||_\infty
+3\pi||f^{in}||_\infty P^2(s)\,ds<\frac{\pi}{2}\right\},
\]
\[
\mathfrak{T}_2:=\sup\left\{t:||\theta_2^{in}||_\infty+||\theta_B^{in}||_\infty+\int_0^t||n||_\infty+3\pi||f^{in}||_\infty
P^2(s)\,ds<\arctan\frac{1}{P(t)}\right\}.
\]
We now take
\[
I:=[0,\mathfrak{T}_0)\cap [0,\mathfrak{T}_1)\cap [0,\mathfrak{T}_2).
\]

On the other hand, the momentum support of $f^{n+1}$ is defined as follows:
\[
P_n(t):=\sup\left\{|v|:f^{n+1}(s,x,v)\neq 0,\ 0\leq s\leq t,\ x\in\bbr,\ v\in\bbr^2\right\}.
\]
If we compare the result of Lemma 5.4 with \eqref{6.5} and use the mathematical induction on $n$,
then we can see that each $P_n(t)$ is bounded by $P(t)$, i.e.,
\begin{equation}\label{6.6}
P_n(t)\leq P(t)
\end{equation}
for any $n$. This implies that the existence interval of $P_n$, say $[0,T_{0,n})$, contains that of $P$, i.e.,
\begin{equation}\label{6.7}
[0,\mathfrak{T}_0)\subset [0,T_{0,n}).
\end{equation}

Consider now the iteration functions $f^n$, $\alpha^n$, and $\beta^n$. The $0$-th iteration functions are clearly
$\mathcal{C}^1$ on $I$. Assume that $n$-th iteration functions are of class $\mathcal{C}^1$ on $I$. Then,
$D^n$ and $B^n$ are $\mathcal{C}^1$, and the solutions $X^n$ and $V^n$ of the following characteristic system
exist on $I$,
\begin{equation}\label{6.8}
\begin{aligned}
&\frac{d}{ds}X^n(s)=\hat{V}_1^n(s),\quad X^n(t)=x,\\
&\frac{d}{ds}V^n(s)=D^n(s,X^n(s))+(\hat{V}^n_2(s),-\hat{V}^n_1(s))B^n(s,X^n(s)),\quad V^n(t)=v,
\end{aligned}
\end{equation}
and therefore $f^{n+1}(t,x,v)=f^{in}(X^n(0),V^n(0))$ is a $\mathcal{C}^1$ solution on $I$.
We now obtain $(n+1)$-th step of the quasilinear hyperbolic system \eqref{6.3} with $k^{n+1}_i$, $i=0,1,2$,
which are $\mathcal{C}^1$ on $I$ and satisfy
\[
||k_0^{n+1}(t)||_\infty+||k_1^{n+1}(t)||_\infty+||k_2^{n+1}(t)||_\infty\leq ||n||_\infty+3\pi||f^{in}||_\infty P^2_n(t).
\]
Hence, by Corollary 3.1,
the $(n+1)$-th step of \eqref{6.3} has $\mathcal{C}^1$ solutions on a time interval $[0,T_{1,n})$ which is defined by
\[
T_{1,n}:=\sup\left\{t:||\theta_2^{in}||_\infty+||\theta_B^{in}||_\infty+\int_0^t||n||_\infty
+3\pi||f^{in}||_\infty P^2_n(s)\,ds<\frac{\pi}{2}\right\}.
\]
By the definition of $\mathfrak{T}_1$ and \eqref{6.6}, we can see that
\begin{equation}\label{6.9}
[0,\mathfrak{T}_1)\subset [0,T_{1,n}),
\end{equation}
and therefore
$\alpha^{n+1}$ and $\beta^{n+1}$ are $\mathcal{C}^1$ solutions
on $I$. $D^{n+1}$ and $B^{n+1}$ are clearly $\mathcal{C}^1$ on $I$ due to \eqref{6.2} and \eqref{6.4},
and this completes the $(n+1)$-th step. To summarize, the iteration functions are well defined as $\mathcal{C}^1$ functions
and exist on $I$ uniformly on $n$.

We finally consider the separation between the Vlasov and the Born-Infeld characteristics for each $n$-th step.
A time interval $[0,T_{2,n})$ is defined as we did in Section 5.2.
\[
T_{2,n}:=\sup\left\{t:||\theta_2^{in}||_\infty+||\theta_B^{in}||_\infty+\int_0^t||n||_\infty
+3\pi||f^{in}||_\infty P^2_n(s)\,ds<\arctan\frac{1}{P_n(t)}\right\}.
\]
By \eqref{6.6}, we can see that
\begin{equation}\label{6.10}
[0,\mathfrak{T}_2)\subset [0,T_{2,n}),
\end{equation}
and by the same argument we have
\[
-\cos\beta^{n+1}(t,x)<\hat{v}_1<\cos\alpha^{n+1}(t,x)
\]
for any $x$ and $v$ satisfying $f^{n+1}(t,x,v)\neq 0$ on $[0,\mathfrak{T}_2)$ for any $n$.
Hence, on $[0,\mathfrak{T}_2)$ the Vlasov and the Born-Infeld
characteristics of each $n$-th step are well separated uniformly on $n$.

Consequently, we have \eqref{6.7}, \eqref{6.9}, and \eqref{6.10}, i.e., $I\subset [0,T_{1,n})\cap [0,T_{2,n})\cap [0,T_{3,n})$
for any $n$, and
all the a priori estimates obtained in Section 5 are applied to iteration functions on $I$.
We now fix a closed and bounded interval $[0,T]\subset I$, and on this time interval the following quantities are
bounded by a positive constant $C_T$ which does not depend on $n$:
\begin{equation}\label{6.11}
\nabla^i_{x,v}f^n,\quad
\partial^i_x\rho^n,\quad
\partial^i_xj^n,\quad
\partial^i_x\alpha^n,\quad
\partial^i_x\beta^n,\quad
\partial^i_xD^n,\quad
\partial^i_xB^n,
\end{equation}
where $i=0,1$ and $n\geq 0$ are integers.
%%%%%%%%%%%%%%%%%%%%%%%%%%%%%%%%%%%%%%%%%%%%%%%%%%%%%%%%%%%%%%%
%
% 6.3. Proof of convergence
%
%%%%%%%%%%%%%%%%%%%%%%%%%%%%%%%%%%%%%%%%%%%%%%%%%%%%%%%%%%%%%%%
\subsection{Proof of the convergence}
We now show that the iteration functions $\{f^n\}$, $\{\alpha^n\}$, and $\{\beta^n\}$
are Cauchy sequences in $\mathcal{C}^1$. The proof is straightforward and will be given by a sequence of lemmas.
We divide it into two parts: the estimates for the iteration functions themselves and for their derivatives.
Remind that we fixed a compact interval $[0,T]\subset I$, and all the following lemmas and arguments will be
considered only on $[0,T]$. The quantities in \eqref{6.11} will be roughly estimated by $C_T$.
%%%%%%%%%%%%%%%%%%%%%%%%%%%%%%%%%%%%%%%%%%%%%%%%%%%%%%%%%%%%%%%%%
%
% Subsubsection 6.3.1. Estimates for the iteration functions
%
%%%%%%%%%%%%%%%%%%%%%%%%%%%%%%%%%%%%%%%%%%%%%%%%%%%%%%%%%%%%%%%%%
\subsubsection{Estimates for the iteration functions}
We first show that the sequences $\{f^n\}$, $\{\alpha^n\}$, and $\{\beta^n\}$
converge to continuous functions.
%%%%%%%%%%%%%%%%%%%%%%%%%%%%%%%%%%%%%%%%%%%%%%%%%%
%
% Lemma 6.1
%
%%%%%%%%%%%%%%%%%%%%%%%%%%%%%%%%%%%%%%%%%%%%%%%%%%
\begin{lemma}
Consider the iteration functions \eqref{6.1}--\eqref{6.4}.
Then we have
\[
||f^n(t)-f^m(t)||_\infty\leq C_T\int_0^t||D^{m-1}(s)-D^{n-1}(s)||_\infty+
||B^{m-1}(s)-B^{n-1}(s)||_\infty\,ds.
\]
\end{lemma}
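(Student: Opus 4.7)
The plan is to use the method of characteristics to represent both $f^n$ and $f^m$ in terms of the initial datum $f^{in}$ composed with their respective characteristic flows, then control the difference by estimating how the flows differ when driven by $(D^{n-1},B^{n-1})$ versus $(D^{m-1},B^{m-1})$.

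First I would invoke the characteristic system \eqref{6.8}: let $(X^{n-1}(s;t,x,v),V^{n-1}(s;t,x,v))$ solve
\begin{align*}
\tfrac{d}{ds}X^{n-1}&=\hat{V}_1^{n-1},\qquad X^{n-1}(t)=x,\\
\tfrac{d}{ds}V^{n-1}&=D^{n-1}(s,X^{n-1})+(\hat{V}_2^{n-1},-\hat{V}_1^{n-1})B^{n-1}(s,X^{n-1}),\qquad V^{n-1}(t)=v,
\end{align*}
and analogously for the superscript $m-1$. Since both $f^n$ and $f^m$ take the same initial data $f^{in}$ and are conserved along their respective flows, we have $f^n(t,x,v)=f^{in}(X^{n-1}(0),V^{n-1}(0))$ and $f^m(t,x,v)=f^{in}(X^{m-1}(0),V^{m-1}(0))$. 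Using $f^{in}\in\mathcal{C}^1_c$ from assumption A1,
\[
|f^n(t,x,v)-f^m(t,x,v)|\leq \|\nabla_{x,v}f^{in}\|_\infty\bigl(|X^{n-1}(0)-X^{m-1}(0)|+|V^{n-1}(0)-V^{m-1}(0)|\bigr).
\]

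Next I would estimate the difference of the flows. Setting $\Delta X:=X^{n-1}-X^{m-1}$ and $\Delta V:=V^{n-1}-V^{m-1}$, subtract the two characteristic ODEs and split each field difference in the standard way,
\[
D^{n-1}(s,X^{n-1})-D^{m-1}(s,X^{m-1})=\bigl[D^{n-1}(s,X^{n-1})-D^{n-1}(s,X^{m-1})\bigr]+\bigl[D^{n-1}(s,X^{m-1})-D^{m-1}(s,X^{m-1})\bigr],
\]
and similarly for the $B$-term. The first bracket is controlled by $\|\partial_xD^{n-1}(s)\|_\infty|\Delta X|$, the second by $\|D^{n-1}(s)-D^{m-1}(s)\|_\infty$, and the map $v\mapsto\hat v$ is globally $1$-Lipschitz. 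By \eqref{6.11}, on $[0,T]$ the quantities $\|D^{n-1}\|_\infty$, $\|B^{n-1}\|_\infty$, $\|\partial_xD^{n-1}\|_\infty$, $\|\partial_xB^{n-1}\|_\infty$ are all bounded by $C_T$ uniformly in $n$. Hence, writing $\tau=t-s$ (so that the flows are integrated backward from time $t$ to time $0$), we obtain
\[
|\Delta X(s)|+|\Delta V(s)|\leq C_T\int_s^t\bigl(|\Delta X(\sigma)|+|\Delta V(\sigma)|\bigr)\,d\sigma+C_T\int_s^t\bigl(\|D^{n-1}-D^{m-1}\|_\infty+\|B^{n-1}-B^{m-1}\|_\infty\bigr)(\sigma)\,d\sigma,
\]
using $\Delta X(t)=\Delta V(t)=0$. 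A Gr{\"o}nwall argument applied backward in $s$ then gives
\[
|\Delta X(0)|+|\Delta V(0)|\leq C_T\int_0^t\bigl(\|D^{n-1}(\sigma)-D^{m-1}(\sigma)\|_\infty+\|B^{n-1}(\sigma)-B^{m-1}(\sigma)\|_\infty\bigr)\,d\sigma.
\]

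Substituting this back into the bound for $|f^n-f^m|$ and taking the supremum over $(x,v)$ yields the claimed inequality, after absorbing $\|\nabla_{x,v}f^{in}\|_\infty$ into $C_T$. There is no real obstacle here: the argument is a direct transcription of the standard Vlasov-characteristic stability estimate. The only point that requires care is the uniform-in-$n$ boundedness of the coefficients and of $\partial_xD^{n-1},\partial_xB^{n-1}$ on $[0,T]$, but this was precisely prepared in Section 5 and summarized in \eqref{6.11}, so the Lipschitz constants driving the Gr{\"o}nwall argument are independent of $n$ and $m$.
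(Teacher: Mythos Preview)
Your proof is correct, but the paper takes a slightly more direct route. Rather than comparing the two characteristic flows, the paper subtracts the Vlasov equations for $f^n$ and $f^m$ to obtain a transport equation for $f^n-f^m$ along the $(X^{n-1},V^{n-1})$-characteristics, with forcing term $-\bigl[(D^{n-1}-D^{m-1})+(\hat v_2,-\hat v_1)(B^{n-1}-B^{m-1})\bigr]\cdot\nabla_v f^m$; integrating along those characteristics and invoking the uniform bound $\|\nabla_v f^m(s)\|_\infty\le C_T$ from \eqref{6.11} gives the result in one line, with no Gr{\"o}nwall step. Your argument instead compares the flows $Z^{n-1}$ and $Z^{m-1}$ directly, controls their difference by a Gr{\"o}nwall inequality driven by the Lipschitz bounds $\|\partial_x D^{n-1}\|_\infty,\|\partial_x B^{n-1}\|_\infty\le C_T$, and then uses only $\|\nabla_{x,v}f^{in}\|_\infty$. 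Both approaches rely on the same uniform-in-$n$ derivative bounds collected in \eqref{6.11}; the paper's version is shorter, while yours is the standard flow-stability estimate and makes the dependence on the regularity of initial data more explicit.
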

\begin{proof}
For any $n$ and $m$, we have
\[
\begin{aligned}
&\partial_tf^{n}+\hat{v}_1\partial_xf^{n}+(D^{n-1}+(\hat{v}_2,-\hat{v}_1)B^{n-1})\cdot\nabla_vf^{n}=0,\\
&\partial_tf^{m}+\hat{v}_1\partial_xf^{m}+(D^{m-1}+(\hat{v}_2,-\hat{v}_1)B^{m-1})\cdot\nabla_vf^{m}=0,
\end{aligned}
\]
and by direct subtraction and using \eqref{6.8} we have
\[
\begin{aligned}
&\hspace{-0.5cm}|f^n(t,x,v)-f^m(t,x,v)|\\
&\leq \int_0^t\Big(|D^{m-1}(s,X^{n-1}(s))-D^{n-1}(s,X^{n-1}(s))|\\
&\hspace{1cm}+|B^{m-1}(s,X^{n-1}(s))-B^{n-1}(s,X^{n-1}(s))|\Big)||\nabla_vf^m(s)||_\infty\,ds.
\end{aligned}
\]
By taking supremum, we get the desired result.
\end{proof}
%%%%%%%%%%%%%%%%%%%%%%%%%%%%%%%%%%%%%%%%%%%%%%%%%%%
%
% Lemma 6.2
%
%%%%%%%%%%%%%%%%%%%%%%%%%%%%%%%%%%%%%%%%%%%%%%%%%%%
\begin{lemma}
Consider the iteration functions \eqref{6.1}--\eqref{6.4}. Then we have
\[
\begin{aligned}
&\hspace{-0.5cm}||D^m(t)-D^n(t)||_\infty+||B^m(t)-B^n(t)||_\infty\\
&\leq C_T\Big(||f^m(t)-f^n(t)||_\infty+||\alpha^m(t)-\alpha^n(t)||_\infty
+||\beta^m(t)-\beta^n(t)||_\infty\Big).
\end{aligned}
\]
\end{lemma}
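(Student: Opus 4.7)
The plan is to reduce each of the three field quantities $D_1$, $D_2$, and $B$ to the right-hand side data using the explicit formulas \eqref{6.2}, \eqref{6.4}, and the transformation identities $\theta_2^n = (\alpha^n+\beta^n)/2$, $\theta_B^n = (\beta^n-\alpha^n)/2$, and to control the nonlinear functions $\tan$ and $\sqrt{1+(\cdot)^2}$ by Lipschitz estimates on the relevant compact sets guaranteed by Lemma 5.7 (as extended to the iterates in Section 6.2).

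First I would handle $D_1^m - D_1^n$. Since the neutralizing background cancels, one has $\rho^m - \rho^n = \int_{\bbr^2} (f^m - f^n)\, dv$, where the integrand is supported in $|v| \leq P(t)$ by \eqref{6.6}. Because of the finite propagation speed $|\hat v_1|\leq 1$ of the Vlasov characteristics \eqref{6.8}, the $x$-support of $f^n(t,\cdot,\cdot)$ is contained in a bounded interval whose size depends only on $P$ and $T$. Integrating in $y$ over this compact interval gives $||D_1^m(t) - D_1^n(t)||_\infty \leq C_T ||f^m(t)-f^n(t)||_\infty$.

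Next I would handle $B^m - B^n = \tan\theta_B^m - \tan\theta_B^n$. By Remark 5.1 applied to the iterates we have $|\theta_B^n(t,x)| < \pi/2$ on $[0,T]$, and the proof of Lemma 5.7 (applied in the iterated form) gives a uniform lower bound $\cos\theta_B^n \geq 1/C_T$. Hence $\tan$ is $C_T$-Lipschitz on the range of $\theta_B^n$ and $|B^m - B^n| \leq C_T|\theta_B^m - \theta_B^n| \leq \tfrac{C_T}{2}(|\alpha^m-\alpha^n| + |\beta^m-\beta^n|)$.

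For $D_2^m - D_2^n = \sqrt{1+(D_1^m)^2}\tan\theta_2^m - \sqrt{1+(D_1^n)^2}\tan\theta_2^n$, I would add and subtract $\sqrt{1+(D_1^m)^2}\tan\theta_2^n$ to split the difference into two pieces: a tangent-difference term, bounded by the same Lipschitz argument (using $\cos\theta_2^n \geq 1/C_T$ and $\sqrt{1+(D_1^m)^2} \leq C_T$ by Lemma 5.1(iv)), and a square-root-difference term. For the latter, the elementary bound $|\sqrt{1+a^2}-\sqrt{1+b^2}| \leq |a-b|$ (derivative $a/\sqrt{1+a^2}$ has modulus $\leq 1$) combined with $|\tan\theta_2^n| \leq C_T$ reduces it to $|D_1^m-D_1^n|$, already controlled in Step~1. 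Writing $\theta_2^m-\theta_2^n = ((\alpha^m+\beta^m)-(\alpha^n+\beta^n))/2$ and summing the three contributions yields the claimed estimate.

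The main obstacle, such as it is, is making sure the Lipschitz constants for $\tan$ and for $\sqrt{1+(\cdot)^2}$ are uniform in $n$. That uniformity is exactly what Section 6.2 supplies: by \eqref{6.6}, the iterated momentum supports $P_n(t)$ are dominated by the fixed profile $P(t)$, so the analogues of Remark 5.1 and Lemma 5.7 for the iterates yield uniform-in-$n$ bounds $|\theta_2^n|,|\theta_B^n| < \pi/2$ with $\cos\theta_2^n, \cos\theta_B^n \geq 1/C_T$ on $[0,T]$, and $|D_1^n| \leq ||f^{in}||_1 + ||n||_1$. Once these are in hand the remaining calculation is elementary.
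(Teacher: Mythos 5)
Your proposal is correct and follows the same route as the paper: bound $D_1^m-D_1^n$ directly from the integral formula using the uniform momentum and spatial support, then use the explicit formulas \eqref{6.4} together with Lipschitz bounds for $\tan$ and $\sqrt{1+(\cdot)^2}$ (justified by the uniform-in-$n$ estimates $\cos\theta_2^n,\cos\theta_B^n\geq 1/C_T$ on $[0,T]$), and finally convert $\theta$-differences to $\alpha,\beta$-differences via \eqref{6.13}. The paper states these steps more tersely; your write-up simply makes the add-and-subtract and Lipschitz details explicit.
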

\begin{proof}
By \eqref{6.2}, we have
\[
D_1^m(t,x)-D_1^m(t,x)=\int_{-\infty}^x\int_{\bbr^2}f^m(t,y,v)-f^n(t,y,v)\,dv\,dy.
\]
We use \eqref{6.6} to have
\begin{equation}\label{6.12}
|D_1^m(t,x)-D_1^m(t,x)|\leq C_T||f^m(t)-f^n(t)||_\infty.
\end{equation}
By \eqref{6.4}, we have
\[
|D_2^m(t,x)-D_2^m(t,x)|\leq C_T(||f^m(t)-f^n(t)||_\infty+||\theta_2^m(t)-\theta_2^n(t)||_\infty),
\]
and
\[
|B^m(t,x)-B^m(t,x)|\leq C_T||\theta_B^m(t)-\theta_B^n(t)||_\infty.
\]
Note that
\begin{equation}\label{6.13}
||\theta_i^m(t)-\theta_i^n(t)||_\infty\leq ||\alpha^m(t)-\alpha^n(t)||_\infty+
||\beta^m(t)-\beta^n(t)||_\infty,\quad i\in\{2,B\},
\end{equation}
and therefore we get the desired result.
\end{proof}
%%%%%%%%%%%%%%%%%%%%%%%%%%%%%%%%%%%%%%%%%%%%%%%%%%%
%
% Lemma 6.3
%
%%%%%%%%%%%%%%%%%%%%%%%%%%%%%%%%%%%%%%%%%%%%%%%%%%%
\begin{lemma}
Consider the iteration functions \eqref{6.1}--\eqref{6.4}. Then we have
\[
\sum_{i=0}^2||k_i^m(t)-k_i^n(t)||_\infty\leq C_T||f^m(t)-f^n(t)||_\infty.
\]
\end{lemma}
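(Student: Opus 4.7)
The plan is to express each $k_i^m - k_i^n$ as a smooth function of the differences $\rho^m - \rho^n$, $j^m - j^n$, and $D_1^m - D_1^n$, and then bound all three of these by $\|f^m(t) - f^n(t)\|_\infty$. Once this is reduced to controlling differences of velocity moments of $f^m - f^n$, the estimate follows from the fact that the iterates have uniformly compactly supported momentum (by \eqref{6.6}) and uniformly compactly supported spatial profile.

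First I would establish the three pointwise bounds
\[
|\rho^m(t,x) - \rho^n(t,x)| \leq C_T\,\|f^m(t) - f^n(t)\|_\infty, \qquad |j^m(t,x) - j^n(t,x)| \leq C_T\,\|f^m(t) - f^n(t)\|_\infty,
\]
\[
|D_1^m(t,x) - D_1^n(t,x)| \leq C_T\,\|f^m(t) - f^n(t)\|_\infty.
\]
The first two follow from \eqref{6.2} by integrating only over $|v| \leq P(t) \leq P(T)$, where $P(t)$ bounds the momentum supports uniformly in $n$ thanks to \eqref{6.6}; note that $|\hat v| \leq 1$ for the $j$ estimate. The third follows as in \eqref{6.12}: the neutralizing background cancels, so $D_1^m - D_1^n = \int_{-\infty}^x \int_{\bbr^2} (f^m - f^n)\, dv\, dy$, and since $f^{in}$ is supported in $|x| \leq P$, finite-speed propagation of the Vlasov characteristics \eqref{6.8} together with the uniform field bounds keeps the $x$-support of each $f^n(t,\cdot,\cdot)$ inside a $t$-dependent compact set of size $C_T$, so the $y$-integral is effectively over a bounded interval.

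Next I would use that $k_0, k_1, k_2$ are smooth functions of $(\rho, j_1, j_2, D_1)$ on a region where $(\rho, j, D_1)$ are uniformly bounded by Lemma 5.1 (and this bound is iteration-uniform by \eqref{6.6}). In particular, the maps
\[
(\rho, D_1) \mapsto -\frac{\rho D_1}{1+D_1^2}, \qquad (j_1, D_1) \mapsto \frac{j_1 D_1}{1+D_1^2}, \qquad (j_2, D_1) \mapsto -\frac{j_2}{\sqrt{1+D_1^2}}
\]
are Lipschitz on any bounded set, with Lipschitz constant controlled by $C_T$. Thus
\[
|k_i^m(t,x) - k_i^n(t,x)| \leq C_T\bigl(|\rho^m - \rho^n| + |j^m - j^n| + |D_1^m - D_1^n|\bigr)(t,x),
\]
and combining with the previous step and summing over $i=0,1,2$ yields the claim.

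The only step requiring any care is the $D_1$ difference, because it involves an unbounded spatial integral; everything else is a direct pointwise argument. But since the iteration preserves uniform compact support in $x$ (in the sense of propagation of support for $f^n$), this integral is effectively taken over a bounded interval of length $O(C_T)$, which absorbs into $C_T$. No additional estimate on $\alpha^n, \beta^n$ or the characteristic separation enters here, so the lemma is essentially a soft consequence of the uniform bounds collected in Section 5.
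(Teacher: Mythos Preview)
Your proposal is correct and follows essentially the same route as the paper: bound $\rho^m-\rho^n$, $j^m-j^n$ via the uniform momentum support \eqref{6.6}, bound $D_1^m-D_1^n$ via \eqref{6.12}, and conclude by the Lipschitz dependence of the $k_i$ on $(\rho,j,D_1)$. You spell out more carefully than the paper why the spatial integral in the $D_1$ estimate is effectively over a bounded interval (finite propagation speed of the Vlasov characteristics), which is indeed the point hiding behind the paper's terse citation of \eqref{6.6} and \eqref{6.12}.
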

\begin{proof}
Note that
\[
|j_i^m(t,x)-j_i^n(t,x)|\leq \int_{\bbr^2}|f^m(t,y,v)-f^n(t,y,v)|\,dv
\leq C_T||f^m(t)-f^n(t)||_\infty,\quad i\in\{0,1,2\},
\]
where we used \eqref{6.6}, and $\rho^n$ was denoted by $j_0$ for simplicity.
Together with \eqref{6.12}, we obtain the desired result.
\end{proof}
%%%%%%%%%%%%%%%%%%%%%%%%%%%%%%%%%%%%%%%%%%%%%%%%%%%
%
% Lemma 6.4
%
%%%%%%%%%%%%%%%%%%%%%%%%%%%%%%%%%%%%%%%%%%%%%%%%%%%
\begin{lemma}
Consider the iteration functions \eqref{6.1}--\eqref{6.4}. Then we have
\[
\begin{aligned}
&\hspace{-0.5cm}||\alpha^m(t)-\alpha^n(t)||_\infty+||\beta^m(t)-\beta^n(t)||_\infty\\
&\leq C_T\int_0^t||\alpha^m(\tau)-\alpha^n(\tau)||_\infty
+||\beta^m(\tau)-\beta^n(\tau)||_\infty+||f^m(\tau)-f^n(\tau)||_\infty\,d\tau.
\end{aligned}
\]
\end{lemma}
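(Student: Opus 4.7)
The plan is to use the integral formulas \eqref{5.6}--\eqref{5.7} applied at the $n$-th and $m$-th iteration levels, subtract them, and convert everything to Gr\"onwall form. Writing $\xi^n$ and $\eta^n$ for the characteristics associated with $\cos\beta^n$ and $\cos\alpha^n$ in \eqref{5.4}--\eqref{5.5}, we have
\[
\alpha^n(t,x)=\alpha^{in}(\xi^n(0))+\int_0^t\mathcal{I}^n(\tau,\xi^n(\tau))\,d\tau,
\]
where $\mathcal{I}^n=(k_0^n\sin\theta_B^n+k_1^n\sin\theta_2^n+k_2^n\cos\theta_2^n)\cos\theta_2^n$, and likewise for $\beta^n$ along $\eta^n$. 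Subtracting the $m$-th and $n$-th versions, the difference $\alpha^m(t,x)-\alpha^n(t,x)$ breaks naturally into (i) an initial-data piece $\alpha^{in}(\xi^m(0))-\alpha^{in}(\xi^n(0))$, (ii) a piece comparing $\mathcal{I}^m(\tau,\xi^m(\tau))$ to $\mathcal{I}^n(\tau,\xi^n(\tau))$, and an analogous decomposition holds for $\beta^m-\beta^n$ along $\eta^m,\eta^n$.

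To control (i), I would use $|\alpha^{in}(\xi^m(0))-\alpha^{in}(\xi^n(0))|\leq\|\partial_x\alpha^{in}\|_\infty|\xi^m(0)-\xi^n(0)|$, so the real work is bounding $|\xi^m(\tau)-\xi^n(\tau)|$. Using the ODE \eqref{5.4} and adding and subtracting $\cos\beta^m(\tau,\xi^n(\tau))$ one gets
\[
|\xi^m(\tau)-\xi^n(\tau)|\leq\int_\tau^t\bigl(\|\partial_x\beta^m(s)\|_\infty|\xi^m(s)-\xi^n(s)|+\|\beta^m(s)-\beta^n(s)\|_\infty\bigr)\,ds,
\]
and since $\|\partial_x\beta^m\|_\infty\leq C_T$ uniformly in $m$ by Lemma 5.7 and the uniform bounds in Section 6.2, a Gr\"onwall argument yields $|\xi^m(\tau)-\xi^n(\tau)|\leq C_T\int_0^t\|\beta^m(s)-\beta^n(s)\|_\infty\,ds$, and symmetrically for $\eta$.

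For (ii), I would split $\mathcal{I}^m(\tau,\xi^m(\tau))-\mathcal{I}^n(\tau,\xi^n(\tau))=[\mathcal{I}^m-\mathcal{I}^n](\tau,\xi^m(\tau))+[\mathcal{I}^n(\tau,\xi^m(\tau))-\mathcal{I}^n(\tau,\xi^n(\tau))]$. The second piece is bounded by $\|\partial_x\mathcal{I}^n\|_\infty|\xi^m(\tau)-\xi^n(\tau)|$, where $\|\partial_x\mathcal{I}^n\|_\infty\leq C_T$ thanks to the uniform derivative bounds in \eqref{6.11}. The first piece uses Lemma 6.3 to bound $\sum_i\|k_i^m-k_i^n\|_\infty$ by $\|f^m-f^n\|_\infty$, plus the Lipschitz bounds on $\sin,\cos$ applied to $\theta_2^m-\theta_2^n$ and $\theta_B^m-\theta_B^n$, which in turn are controlled by $\|\alpha^m-\alpha^n\|_\infty+\|\beta^m-\beta^n\|_\infty$ via the identity $\theta_2=(\alpha+\beta)/2$, $\theta_B=(\beta-\alpha)/2$. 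Running the same analysis for $\beta^m-\beta^n$ along $\eta$ and adding gives
\[
\|\alpha^m(t)-\alpha^n(t)\|_\infty+\|\beta^m(t)-\beta^n(t)\|_\infty\leq C_T\int_0^t\bigl(\|\alpha^m-\alpha^n\|_\infty+\|\beta^m-\beta^n\|_\infty+\|f^m-f^n\|_\infty\bigr)(\tau)\,d\tau,
\]
which is the claim.

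The main obstacle is the bookkeeping around the characteristic difference $\xi^m-\xi^n$: its bound in terms of $\|\beta^m-\beta^n\|_\infty$ requires its own internal Gr\"onwall step that relies crucially on the uniform a priori bounds on $\partial_x\alpha^n,\partial_x\beta^n$ established on $[0,T]\subset I$ in Lemma 5.7 and extended to all $n$ in \eqref{6.11}; without that uniformity the constant $C_T$ would depend on $n,m$ and the contraction scheme would fail. Everything else reduces to applying Lemmas 6.1--6.3 and elementary trigonometric Lipschitz estimates.
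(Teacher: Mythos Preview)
Your argument is correct, but it takes a longer route than the paper's. The paper does not use the integral representations \eqref{5.6}--\eqref{5.7} at both the $m$-th and $n$-th levels and then compare the two characteristics $\xi^m,\xi^n$. Instead it subtracts the two PDEs
\[
\partial_t(\alpha^m-\alpha^n)-\cos\beta^m\,\partial_x(\alpha^m-\alpha^n)
=(\cos\beta^m-\cos\beta^n)\,\partial_x\alpha^n+\bigl[\mathcal{I}^m-\mathcal{I}^n\bigr],
\]
and integrates the \emph{difference} along the single characteristic $\xi^m$ of \eqref{6.14}. Because $\alpha^m(0,\cdot)=\alpha^n(0,\cdot)=\alpha^{in}$, the initial contribution vanishes outright, so no estimate on $\xi^m(0)-\xi^n(0)$ is needed; one is left only with the $\|\beta^m-\beta^n\|_\infty\|\partial_x\alpha^n\|_\infty$ term and $\|\mathcal{I}^m-\mathcal{I}^n\|_\infty$, handled by \eqref{6.13} and Lemma~6.3 exactly as you do. Your approach instead spends an auxiliary Gr\"onwall step to control $|\xi^m-\xi^n|$ and requires the uniform bound on $\partial_x\mathcal{I}^n$; this is valid given \eqref{6.11}, but it is extra machinery that the paper's direct subtraction avoids entirely. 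Both methods rely on the same uniform $C_T$ bound on $\partial_x\alpha^n,\partial_x\beta^n$; the paper's route just uses it once (on $\partial_x\alpha^n$ in the commutator term) rather than repeatedly.
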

\begin{proof}
We use \eqref{6.3} for $m$ and $n$.
\[
\begin{aligned}
&\partial_t\alpha^{m}-(\cos\beta^{m})\partial_x\alpha^{m}=\cos\theta_2^{m}(k_0^{m}\sin\theta_B^{m}+k_1^{m}\sin\theta_2^{m}+k_2^{m}\cos\theta_2^{m}),\\
&\partial_t\alpha^{n}-(\cos\beta^{n})\partial_x\alpha^{n}=\cos\theta_2^{n}(k_0^{n}\sin\theta_B^{n}+k_1^{n}\sin\theta_2^{n}+k_2^{n}\cos\theta_2^{n}),
\end{aligned}
\]
After direct subtraction and applying the following characteristic curve:
\begin{equation}\label{6.14}
\frac{d}{d\tau}\xi^m(\tau)=-\cos\beta^m(\tau,\xi^m(\tau)),\quad \xi^m(t)=x,
\end{equation}
we make the same argument with Lemma 6.1. We use \eqref{6.13} and Lemma 6.3 to have
\[
\begin{aligned}
&\hspace{-0.5cm}|\alpha^m(t,x)-\alpha^n(t,x)|\\
&\leq C_T\int_0^t||\alpha^m(\tau)-\alpha^n(\tau)||_\infty
+||\beta^m(\tau)-\beta^n(\tau)||_\infty+||f^m(\tau)-f^n(\tau)||_\infty\,d\tau.
\end{aligned}
\]
By the same calculation, we obtain the estimate for $\beta^m-\beta^n$, and this completes the proof.
\end{proof}
%%%%%%%%%%%%%%%%%%%%%%%%%%%%%%%%%%%%%%%%%%%%%%%%%%%%%%
%
% Convergence to continuous functions
%
%%%%%%%%%%%%%%%%%%%%%%%%%%%%%%%%%%%%%%%%%%%%%%%%%%%%%%
\noindent{\bf Convergence to continuous functions.}
By Lemma 6.1, 6.2, and 6.4, it is proved that
the sequences $\{f^n\}$, $\{\alpha^n\}$, and $\{\beta^n\}$ converge to continuous functions.
For simplicity, we set
\[
\begin{aligned}
&f^{mn}(t):=||f^m(t)-f^n(t)||_\infty,\\
&\theta^{mn}(t):=||\alpha^m(t)-\alpha^n(t)||_\infty
+||\beta^m(t)-\beta^n(t)||_\infty.
\end{aligned}
\]
Then, the above results are written as follows:
\begin{align}
&f^{mn}(t)\leq C_T\int_0^t f^{(m-1)(n-1)}(s)+\theta^{(m-1)(n-1)}(s)\,ds,\label{6.15}\\
&\theta^{mn}(t)\leq C_T\int_0^t f^{mn}(s)+\theta^{mn}(s)\,ds.\label{6.16}
\end{align}
We apply Gr{\"o}nwall's inequality to \eqref{6.16} to have
\begin{equation}\label{6.17}
\theta^{mn}(t)\leq C_T\int_0^tf^{mn}(s)\,ds,
\end{equation}
and then apply it to \eqref{6.15} to have
\begin{eqnarray}
f^{mn}(t)&\leq& C_T\int_0^tf^{(m-1)(n-1)}(s)+\int_0^sf^{(m-1)(n-1)}(\tau)\,d\tau\,ds\cr
&=& C_T\int_0^tf^{(m-1)(n-1)}(s)\,ds+C_T\int_0^t\int^t_\tau f^{(m-1)(n-1)}(\tau)\,ds\,d\tau\cr
&\leq& C_T\int_0^tf^{(m-1)(n-1)}(s)\,ds.\label{6.18}
\end{eqnarray}
By iterating it, we obtain
\[
f^{mn}(t)\leq \frac{2}{k!}(C_T T)^k||f^{in}||_\infty\quad\mbox{for}\quad m,n>k,
\]
where the constant $C_T$ is the same one in the last inequality of \eqref{6.18}. This implies
that $\{f^n\}$ is a Cauchy sequence in $\mathcal{C}^0$ norm, so are $\{\alpha^n\}$ and $\{\beta^n\}$ due to \eqref{6.17}.
Consequently, the iteration functions converge to continuous functions $f$, $\alpha$, and $\beta$.
%%%%%%%%%%%%%%%%%%%%%%%%%%%%%%%%%%%%%%%%%%%%%%%%%%%%%%%%%%
%
% Subsubsection 6.3.2. Estimates for their derivatives
%
%%%%%%%%%%%%%%%%%%%%%%%%%%%%%%%%%%%%%%%%%%%%%%%%%%%%%%%%%%
\subsubsection{Estimates for their derivatives} In this part, we estimate the derivatives of
$f^n$, $\alpha^n$, and $\beta^n$ to confirm that the solution is $\mathcal{C}^1$.
We introduce a notation $\varepsilon_{nm}$ for a small positive quantity which depends on $n$ and $m$
such that $\varepsilon_{nm}$ tends to zero as $n$ and $m$ go to infinity.
This quantity may depend on $T$ but not on the other variables, and its value will vary from line to line.
We also use a handy notation for $(x,v)\in\bbr^1\times\bbr^2$ as $z=(x,v)\in\bbr^3$.
Note that
\[
|z|=|(x,v)|\leq |x|+|v|\leq \sqrt{3}|z|,
\]
where $|\cdot|$ is the usual Euclidean norm on $\bbr^d$.
For simplicity, $Z^n$ will denote $(X^n,V^n)$ in some places, i.e.,
\[
Z^n(s;t,z)=(X^n(s;t,x,v),V^n(s;t,x,v))\in\bbr^3.
\]
In this part, we follow the arguments and notations of \cite{KRST05}.
%%%%%%%%%%%%%%%%%%%%%%%%%%%%%%%%%%%%%%%%%%%%%%%%%%%
%
% Lemma 6.5
%
%%%%%%%%%%%%%%%%%%%%%%%%%%%%%%%%%%%%%%%%%%%%%%%%%%%
\begin{lemma}
Consider the characteristic system \eqref{6.8}.
\begin{equation*}
\begin{aligned}
&\frac{d}{ds}X^n(s)=\hat{V}_1^n(s),\quad X^n(t)=x,\\
&\frac{d}{ds}V^n(s)=F^n(s,X^n(s),V^n(s)),\quad V^n(t)=v,
\end{aligned}
\end{equation*}
where $F^n(t,z)=D^n(t,x)+(\hat{v}_2,-\hat{v}_1)B^n(t,x)$.
For any $n\neq m$, we have the following estimates:
\begin{align*}
(i)&\quad |X^n(s;t,z)-X^m(s;t,z)|+|V^n(s;t,z)-V^m(s;t,z)|\leq \varepsilon_{nm}.\\
(ii)&\quad |\nabla_zX^n(s;t,z)|+|\nabla_zV^n(s;t,z)|\leq C_T.\\
(iii)&\quad |\nabla_zX^n(s;t,z)-\nabla_zX^m(s;t,z)|+|\nabla_zV^n(s;t,z)-\nabla_zV^m(s;t,z)|\\
&\hspace{1cm}\leq \varepsilon_{nm}
+C_T\int_s^t |(\partial_xF^n)(\tau,X^n(\tau))-(\partial_xF^n)(\tau,X^m(\tau))|
+||\partial_xF^n(\tau)-\partial_xF^m(\tau)||_\infty\,d\tau,
\end{align*}
where $\varepsilon_{mn}$ is a small positive quantity which tends to zero as $n,m\rightarrow\infty$
and depends on $T$ but not on $x$ and $v$.
\end{lemma}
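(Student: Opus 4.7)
The three estimates are proved in sequence; the common mechanism is to subtract the characteristic or variational equations for indices $n$ and $m$, bound the resulting inhomogeneities using the uniform estimates \eqref{6.11} and the Cauchy convergence of the fields $D^n,B^n$ established in the preceding subsubsection, and close the argument by Gr\"onwall.

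For (i), I subtract the ODEs \eqref{6.8} for the two indices. Since $\hat v_1$ is $1$-Lipschitz in $v$, the $X$-equation yields $|\tfrac{d}{ds}(X^n-X^m)|\leq |V^n-V^m|$. For the $V$-equation I use the telescoping decomposition
\begin{align*}
F^n(s,X^n,V^n)-F^m(s,X^m,V^m)
&=\bigl[F^n(s,X^n,V^n)-F^n(s,X^m,V^m)\bigr]\\
&\quad{}+\bigl[F^n(s,X^m,V^m)-F^m(s,X^m,V^m)\bigr].
\end{align*}
The first bracket is controlled by $C_T(|X^n-X^m|+|V^n-V^m|)$ using the bounds on $\partial_xD^n$, $\partial_xB^n$ from \eqref{6.11}, the boundedness of $B^n$ (Lemma 5.3), and $|\nabla_v\hat v|\leq 1$. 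The second bracket is bounded by $||D^n-D^m||_\infty+||B^n-B^m||_\infty$, which by Lemma 6.2 combined with the Cauchy convergence of $\{f^n\},\{\alpha^n\},\{\beta^n\}$ is an $\varepsilon_{nm}$ uniform in $(t,z)$. Integrating from $s$ to $t$ and applying Gr\"onwall yields (i).

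For (ii), I differentiate \eqref{6.8} with respect to $z$ to obtain the linear variational system
\[
\tfrac{d}{ds}\nabla_zX^n=(\nabla_v\hat v_1)|_{V^n}\,\nabla_zV^n,\qquad
\tfrac{d}{ds}\nabla_zV^n=(\partial_xF^n)(s,X^n,V^n)\,\nabla_zX^n+(\nabla_vF^n)(s,X^n,V^n)\,\nabla_zV^n,
\]
with identity initial data at $s=t$. By \eqref{6.11}, Lemma 5.3, and $|\nabla_v\hat v|\leq 1$, the coefficient matrix is bounded by $C_T$ uniformly in $n$, so linear Gr\"onwall gives (ii). For (iii), I subtract the variational systems for $n$ and $m$ and split the principal term as
\begin{align*}
&(\partial_xF^n)(s,X^n)\,\nabla_zX^n-(\partial_xF^m)(s,X^m)\,\nabla_zX^m\\
&\quad=\bigl[(\partial_xF^n)(s,X^n)-(\partial_xF^n)(s,X^m)\bigr]\nabla_zX^n\\
&\qquad{}+\bigl[(\partial_xF^n)(s,X^m)-(\partial_xF^m)(s,X^m)\bigr]\nabla_zX^n
+(\partial_xF^m)(s,X^m)\bigl[\nabla_zX^n-\nabla_zX^m\bigr],
\end{align*}
with an analogous decomposition applied to the $\nabla_vF$ term and to the $\hat V_1^n-\hat V_1^m$ contribution in the $X$-equation. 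The middle bracket is $\varepsilon_{nm}$ by the same field-convergence argument as in (i); the third bracket is absorbed on the left by Gr\"onwall using (ii) and the uniform bound on $\partial_xF^m$; the first bracket produces precisely the integral term written in (iii).

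The main obstacle is the first bracket above: without a uniform bound on $\partial_x^2F^n$, which would require second $x$-derivatives of $D^n,B^n$ that are not contained in \eqref{6.11}, one cannot convert $|(\partial_xF^n)(\tau,X^n)-(\partial_xF^n)(\tau,X^m)|$ into $C_T|X^n-X^m|$ and close the Gr\"onwall loop at this stage. Hence this quantity must be left inside the integral in the conclusion of (iii), to be handled in a subsequent lemma via equicontinuity of the family $\{\partial_xF^n\}$ together with the uniform convergence from (i).
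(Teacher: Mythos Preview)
Your approach matches the paper's, and parts (i) and (ii) are handled correctly. In part (iii), however, there is a slip: you assert that the middle bracket
\[
\bigl[(\partial_xF^n)(s,X^m)-(\partial_xF^m)(s,X^m)\bigr]\nabla_zX^n
\]
is $\varepsilon_{nm}$ ``by the same field-convergence argument as in (i)''. That argument in (i) used only the Cauchy property of $\{D^n\},\{B^n\}$ in $L^\infty$; here you would need $\{\partial_xD^n\},\{\partial_xB^n\}$ to be Cauchy, which has not yet been established at this point---indeed, that is precisely what Lemmas~6.7 and 6.13--6.14 are subsequently set up to prove. (Contrast this with the analogous middle bracket for the $\nabla_vF$ term, which \emph{is} $\varepsilon_{nm}$, since $\nabla_vF^n-\nabla_vF^m$ involves only $B^n-B^m$ and not its derivative.) The correct handling, which the paper carries out and which the statement of the lemma itself reflects, is simply to bound this bracket by $C_T\|\partial_xF^n(\tau)-\partial_xF^m(\tau)\|_\infty$ and leave it inside the integral alongside the first bracket. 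Your closing paragraph correctly explains why the \emph{first} bracket must remain in the integral; exactly the same reasoning forces the middle bracket to stay there too. With this single correction your argument is complete and coincides with the paper's.
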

\begin{proof}
$(i)$ We consider the characteristic systems for $n$ and $m$ together with the same initial data,
$Z^n(t)=Z^m(t)=z$.
By integrating the systems from $s$ to $t$, we have
\[
|X^n(s;t,z)-X^m(s;t,z)|\leq \int_s^t|V^n(\tau;t,z)-V^m(\tau;t,z)|\,d\tau,
\]
and
\begin{align*}
&\hspace{-0.5cm}|V^n(s;t,z)-V^m(s;t,z)|\cr
&\leq \int_s^t|F^n(\tau,X^n(\tau))-F^n(\tau,X^m(\tau))|
+|F^n(\tau,X^m(\tau))-F^m(\tau,X^m(\tau))|\,d\tau\cr
&\leq C_T\int_s^t|X^n(\tau;t,z)-X^m(\tau;t,z)|\,d\tau+\varepsilon_{nm},
\end{align*}
where we used the fact that $\nabla_xF^n$ is bounded and $\{F^n\}$ is a Cauchy sequence.
We combine the above two inequality to have the following estimate:
\[
|X^n(s;t,z)-X^m(s;t,z)|\leq \varepsilon_{nm}+C_T\int_s^t|X^n(\tau;t,z)-X^m(\tau;t,z)|\,d\tau,
\]
where the constant $\varepsilon_{nm}$ is another small constant depending on $T$,
but it still satisfies the property in the statement of the lemma. By applying the Gr{\"o}nwall inequality again,
we obtain the desired result for $X^n$,
\[
|X^n(s;t,z)-X^m(s;t,z)|\leq \varepsilon_{nm}e^{C_T(t-s)}\leq \varepsilon_{nm},
\]
and for $V^n$,
\[
|V^n(s;t,z)-V^m(s;t,z)|\leq \varepsilon_{nm},
\]
and this proves the first result.\bigskip

\noindent $(ii)$ By direct differentiation with respect to $x$, we obtain
\begin{align}
\partial_x\dot{X}^n(s;t,z)&=\nabla_{V^n}\hat{V}_1^n\cdot\partial_xV^n(s;t,z),\label{6.19}\\
\partial_x\dot{V}^n(s;t,z)&=(\partial_xD^n)(s,X^n(s;t,z))\partial_xX^n(s;t,z)\label{6.20}\\
&\hspace{0.5cm}+(\nabla_{V^n}\hat{V}^n_2,-\nabla_{V^n}\hat{V}^n_1)\partial_xV^n(s;t,z) B^n(s,X^n(s;t,z))\cr
&\hspace{0.5cm}+(\hat{V}^n_2,-\hat{V}^n_1)(\partial_xB^n)(s,X^n(s;t,z))\partial_xX^n(s;t,z),\nonumber
\end{align}
where the dots denote $s$ derivatives.
Since $\partial_xX^n(t)=1$ and $\partial_xV^n_i(t)=0$, $i=1,2$,
by integrating from $s$ to $t$ we obtain
\begin{align*}
&|\partial_xX^n(s)|\leq 1+C\int_s^t|\partial_xV^n(\sigma)|\,d\sigma,\\
&|\partial_xV^n(s)|\leq C_T\int_s^t|\partial_xX^n(\sigma)|+|\partial_xV^n(\sigma)|\,d\sigma,
\end{align*}
and then Gr{\"o}nwall's inequality gives the desired result after we estimate $v$ derivative quantities by
using the same arguments.\bigskip

\noindent $(iii)$ To prove the third estimate, we use \eqref{6.19}--\eqref{6.20} for $n$ and $m$.
Note that the map $v\mapsto\hat{v}$ is a smooth function with a bounded $\mathcal{C}^1$ norm.
\begin{align*}
&\hspace{-0.5cm}|\partial_xX^n(t)-\partial_xX^m(t)|\\
&\leq\int_s^t|\nabla_{V^n}\hat{V}^n_1-\nabla_{V^m}\hat{V}^m_1||\partial_xV^n(\tau)|
+|\nabla_{V^m}\hat{V}^m_1||\partial_xV^n(\tau)-\partial_xV^m(\tau)|\,d\tau\\
&\leq\varepsilon_{nm}+C\int_s^t|\partial_xV^n(\tau)-\partial_xV^m(\tau)|\,d\tau,
\end{align*}
where we used $(i)$ and $(ii)$. By the same way, we have
\begin{align*}
&\hspace{-0.5cm}|\partial_xV^n(t)-\partial_xV^m(t)|\\
&\leq\varepsilon_{nm}+ C_T\int_s^t|(\partial_xD^n)(\tau,X^n(\tau))-(\partial_xD^m)(\tau,X^m(\tau))|
+|\partial_xX^n(\tau)-\partial_xX^m(\tau)|\,d\tau\\
&\hspace{0.5cm}+C_T\int_s^t|\partial_xV^n(\tau)-\partial_xV^m(\tau)|
+|B^n(\tau,X^n(\tau))-B^m(\tau,X^m(\tau))|\,d\tau\\
&\hspace{0.5cm}+C_T\int_s^t|(\partial_xB^n)(\tau,X^n(\tau))-(\partial_xB^m)(\tau,X^m(\tau))|\,d\tau,
\end{align*}
where we used $(i)$ and $(ii)$ together with the fact that the quantities in \eqref{6.11} are bounded.
In the above inequality, we can see that
\begin{align*}
&\hspace{-0.5cm}|(\partial_xD^n)(\tau,X^n(\tau))-(\partial_xD^m)(\tau,X^m(\tau))|\\
&\leq |(\partial_xD^n)(\tau,X^n(\tau))-(\partial_xD^n)(\tau,X^m(\tau))|
+|(\partial_xD^n)(\tau,X^m(\tau))-(\partial_xD^m)(\tau,X^m(\tau))|\\
&\leq |(\partial_xD^n)(\tau,X^n(\tau))-(\partial_xD^n)(\tau,X^m(\tau))|
+||\partial_xD^n(\tau)-\partial_xD^m(\tau)||_\infty.
\end{align*}
By the same arguments, we have the similar estimate for $\partial_xB^n-\partial_xB^m$.
\begin{align*}
&\hspace{-0.5cm}|(\partial_xB^n)(\tau,X^n(\tau))-(\partial_xB^m)(\tau,X^m(\tau))|\\
&\leq |(\partial_xB^n)(\tau,X^n(\tau))-(\partial_xB^n)(\tau,X^m(\tau))|
+||\partial_xB^n(\tau)-\partial_xB^m(\tau)||_\infty.
\end{align*}
For $B^n-B^m$ term, we have
\begin{align*}
&\hspace{-0.5cm}|B^n(\tau,X^n(\tau))-B^m(\tau,X^m(\tau))|\\
&\leq|B^n(\tau,X^n(\tau))-B^n(\tau,X^m(\tau))|+|B^n(\tau,X^m(\tau))-B^m(\tau,X^m(\tau))|\\
&\leq \varepsilon_{nm},
\end{align*}
where we used $(i)$, boundedness of $\nabla_xB^n$, and the Cauchy property of $\{B^n\}$.
We combine the above results to obtain
\begin{align*}
&\hspace{-0.5cm}|\partial_xV^n(t)-\partial_xV^m(t)|\\
&\leq\varepsilon_{nm}+ C_T\int_s^t|\partial_xX^n(\tau)-\partial_xX^m(\tau)|\,d\tau
+|\partial_xV^n(\tau)-\partial_xV^m(\tau)|\,d\tau\\
&\hspace{0.5cm}+C_T\int_s^t |(\partial_xF^n)(\tau,X^n(\tau))-(\partial_xF^n)(\tau,X^m(\tau))|
+||\partial_xF^n(\tau)-\partial_xF^m(\tau)||_\infty\,d\tau.
\end{align*}
By applying Gr{\"o}nwall's inequality, we obtain the estimate for $x$ derivatives.
The estimate for $v$ derivatives is verified by the similar calculations, and we obtain the desired result.
\end{proof}
%%%%%%%%%%%%%%%%%%%%%%%%%%%%%%%%%%%%%%%%%%%%%%%%%%%
%
% Lemma 6.6
%
%%%%%%%%%%%%%%%%%%%%%%%%%%%%%%%%%%%%%%%%%%%%%%%%%%%
\begin{lemma}
Consider the following two characteristic equations:
\begin{align}
&\frac{d}{d\tau}\xi^n(\tau)=-\cos\beta^n(\tau,\xi^n(\tau)),\quad \xi^n(t)=x,\label{6.21}\\
&\frac{d}{d\tau}\eta^n(\tau)=\cos\alpha^n(\tau,\eta^n(\tau)),\quad \eta^n(t)=x.\label{6.22}
\end{align}
For any $n\neq m$, we have
\begin{align*}
(i)&\quad |\xi^n(\tau;t,x)-\xi^m(\tau;t,x)|+|\eta^n(\tau;t,x)-\eta^m(\tau;t,x)|\leq \varepsilon_{nm}.\\
(ii)&\quad |\partial_x\xi^n(\tau;t,x)|+|\partial_x\eta^n(\tau;t,x)|\leq C_T.\\
(iii)&\quad |\partial_x\xi^n(\tau;t,x)-\partial_x\xi^m(\tau;t,x)|\\
&\hspace{1cm}\leq \varepsilon_{nm}+C_T\int_\tau^t
|(\partial_x\beta^n)(s,\xi^n(s))-(\partial_x\beta^n)(s,\xi^m(s))|
+||\partial_x\beta^n(s)-\partial_x\beta^m(s)||_\infty\,ds.\\
(iv)&\quad |\partial_x\eta^n(\tau;t,x)-\partial_x\eta^m(\tau;t,x)|\\
&\hspace{1cm}\leq \varepsilon_{nm}+C_T\int_\tau^t
|(\partial_x\alpha^n)(s,\eta^n(s))-(\partial_x\alpha^n)(s,\eta^m(s))|
+||\partial_x\alpha^n(s)-\partial_x\alpha^m(s)||_\infty\,ds,
\end{align*}
where $\varepsilon_{mn}$ is a small positive quantity which tends to zero as $n,m\rightarrow\infty$
and depends on $T$ but not on $x$ and $v$.
\end{lemma}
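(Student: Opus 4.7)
The proof of Lemma 6.6 will closely parallel that of Lemma 6.5, since the structure of the characteristic ODEs \eqref{6.21}--\eqref{6.22} driven by $-\cos\beta^n$ and $\cos\alpha^n$ is analogous to \eqref{6.8}, only simpler because they are scalar ODEs with no $v$-dependence. Throughout the argument I will rely on two ingredients already in hand: first, the uniform bounds on derivatives listed in \eqref{6.11}, in particular $\|\partial_x\alpha^n(t)\|_\infty + \|\partial_x\beta^n(t)\|_\infty \leq C_T$; and second, the fact that $\{\alpha^n\}$ and $\{\beta^n\}$ are Cauchy in $\mathcal{C}^0$ on $[0,T]$, established at the end of Section 6.3.1, which supplies the quantity $\varepsilon_{nm}$.

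For $(i)$, I will subtract the integral form of \eqref{6.21} written for $n$ and for $m$ (both starting from $\xi^n(t)=\xi^m(t)=x$) and split
\[
\cos\beta^n(s,\xi^n(s))-\cos\beta^m(s,\xi^m(s))
=\bigl[\cos\beta^n(s,\xi^n(s))-\cos\beta^n(s,\xi^m(s))\bigr]
+\bigl[\cos\beta^n(s,\xi^m(s))-\cos\beta^m(s,\xi^m(s))\bigr].
\]
The first bracket is controlled by $\|\partial_x\beta^n\|_\infty |\xi^n-\xi^m|$ via the mean value theorem, while the second is bounded by $\|\beta^n-\beta^m\|_\infty \leq \varepsilon_{nm}$. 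Gr\"onwall's inequality then yields $|\xi^n-\xi^m|\leq\varepsilon_{nm}$, and the argument for $\eta^n-\eta^m$ is identical.

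For $(ii)$, I differentiate \eqref{6.21} in $x$ to obtain
\[
\frac{d}{d\tau}\partial_x\xi^n(\tau)=\sin\beta^n(\tau,\xi^n(\tau))\,(\partial_x\beta^n)(\tau,\xi^n(\tau))\,\partial_x\xi^n(\tau),
\quad \partial_x\xi^n(t)=1.
\]
Since $\|\partial_x\beta^n\|_\infty\leq C_T$ and $|\sin\beta^n|\leq 1$, Gr\"onwall gives $|\partial_x\xi^n(\tau)|\leq e^{C_T|t-\tau|}\leq C_T$ on $[0,T]$, and similarly for $\partial_x\eta^n$.

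For $(iii)$, I take the $x$-derivative ODEs for $\xi^n$ and $\xi^m$, subtract, and decompose each difference following the same pattern used in Lemma 6.5$(iii)$: separate the factor $(\partial_x\beta^n)(s,\xi^n(s))-(\partial_x\beta^m)(s,\xi^m(s))$ into the intermediate piece $(\partial_x\beta^n)(s,\xi^n(s))-(\partial_x\beta^n)(s,\xi^m(s))$, which is kept explicit in the statement, and $(\partial_x\beta^n)(s,\xi^m(s))-(\partial_x\beta^m)(s,\xi^m(s))$, which is absorbed into $\|\partial_x\beta^n-\partial_x\beta^m\|_\infty$. Differences of $\sin\beta^n$ evaluated along the two curves are handled by $(i)$ and the Cauchy property, contributing to $\varepsilon_{nm}$. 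Applying Gr\"onwall's inequality to the resulting integral inequality delivers $(iii)$, and $(iv)$ follows by the same procedure applied to \eqref{6.22}. The main bookkeeping obstacle is simply keeping track of the several mean-value-type decompositions so that the right-hand side of $(iii)$--$(iv)$ is expressed purely in terms of $\varepsilon_{nm}$ plus the two highlighted quantities; no new analytic difficulty arises beyond what was already handled in Lemma 6.5.
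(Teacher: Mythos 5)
Your proposal is correct and follows essentially the same route as the paper: integrate the scalar characteristic ODEs, split the differences of $\cos\beta$ (resp.\ $\partial_x\beta$, $\sin\beta$) via the Cauchy property of $\{\beta^n\}$ and the uniform bound on $\partial_x\beta^n$, and close with Gr\"onwall, with part $(ii)$ obtained from the differentiated equation exactly as in the paper. The only cosmetic difference is which intermediate evaluation point you freeze in the mean-value splitting, which does not affect the argument.
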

\begin{proof}
$(i)$ The proof is almost the same with that of the previous lemma. We first consider the characteristic equation \eqref{6.21}
for $n$ and $m$ together with the same initial data $\xi^n(t)=\xi^m(t)=x$.
By integrating the equation from $\tau$ to $t$, we have
\begin{align*}
&\hspace{-0.5cm}|\xi^n(\tau;t,x)-\xi^m(\tau;t,x)|\cr
&\leq\int_\tau^t|\cos\beta^n(s,\xi^n(s))-\cos\beta^m(s,\xi^n(s))|
+|\cos\beta^m(s,\xi^n(s))-\cos\beta^m(s,\xi^m(s))|\,ds\cr
&\leq \varepsilon_{nm}+C_T\int_\tau^t |\xi^n(s;t,x)-\xi^m(s;t,x)|\,ds,
\end{align*}
where we used the fact that $\{\beta^n\}$ is a Cauchy sequence and $\partial_x\beta^m$ is bounded.
By applying the Gr{\"o}nwall inequality, we obtain the desired result for $\xi^n$.
\[
|\xi^n(\tau;t,x)-\xi^m(\tau;t,x)|\leq\varepsilon_{nm}.
\]
We apply the same argument to \eqref{6.22} to obtain the second estimate for $\eta^n$.
\[
|\eta^n(\tau;t,x)-\eta^m(\tau;t,x)|\leq\varepsilon_{nm}.
\]\bigskip

\noindent$(ii)$ By direct differentiation with respect to $x$, we obtain
\begin{equation}\label{6.23}
\partial_x\dot{\xi}^n(\tau;t,x)=\sin\beta^n(\tau,\xi^n(\tau))(\partial_x\beta^n)(\tau,\xi^n(\tau))\partial_x\xi^n(\tau),
\end{equation}
which gives
\[
|\partial_x\xi^n(\tau)|
\leq 1+C_T\int_\tau^t|\partial_x\xi^n(s)|\,ds,
\]
and we obtain the desired result by Gr{\"o}nwall's inequality. The estimate for $\eta^n$ is given by the same
way, and we skip it.\bigskip

\noindent$(iii)$ The proof of the third estimate is almost the same with that of Lemma 6.5 $(iii)$: we use \eqref{6.23}
instead of \eqref{6.19}--\eqref{6.20}, consider the equation \eqref{6.23} for $n$ and $m$, and
use the result $(i)$, boundedness of $\partial_x\beta^n$, and the Cauchy property of $\{\beta^n\}$,
and then we obtain the desired result.\bigskip

\noindent$(iv)$ The last estimate is similarly verified as in $(iii)$, and we skip the proof.
\end{proof}
%%%%%%%%%%%%%%%%%%%%%%%%%%%%%%%%%%%%%%%%%%%%%%%%%%%
%
% Lemma 6.7
%
%%%%%%%%%%%%%%%%%%%%%%%%%%%%%%%%%%%%%%%%%%%%%%%%%%%
\begin{lemma}
Consider the iteration functions \eqref{6.1}--\eqref{6.4}. Then we have
\[
||\partial_x F^m(t)-\partial_xF^n(t)||_\infty\leq\varepsilon_{nm}+C_T
\Big(||\partial_x\alpha^m(t)-\partial_x\alpha^n(t)||_\infty
+||\partial_x\beta^m(t)-\partial_x\beta^n(t)||_\infty\Big),
\]
where $F^n(t,x,v)=D^n(t,x)+(\hat{v}_2,-\hat{v}_1)B^n(t,x)$ and
$\varepsilon_{mn}$ is a small positive quantity which tends to zero as $n,m\rightarrow\infty$
and depends on $T$ but not on $x$ and $v$.
\end{lemma}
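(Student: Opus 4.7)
\medskip

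The plan is to use the explicit formulas \eqref{6.4} for $D_2^n$ and $B^n$ together with the identity $\partial_x D_1^n = \rho^n$ (from \eqref{6.2}) in order to reduce the desired bound to differences of $\rho^n$, $D_1^n$, $\theta_2^n$, $\theta_B^n$, and their $x$-derivatives, and then absorb everything except the $\partial_x\alpha$ and $\partial_x\beta$ differences into the error $\varepsilon_{nm}$.

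First, since $\hat v$ is a bounded smooth function of $v$, it is enough to estimate
\[
\|\partial_xD_1^m-\partial_xD_1^n\|_\infty,\quad
\|\partial_xD_2^m-\partial_xD_2^n\|_\infty,\quad
\|\partial_xB^m-\partial_xB^n\|_\infty
\]
separately. The first is immediate from $\partial_xD_1^n=\rho^n$ and the argument of Lemma 6.3, which gives $\|\rho^m-\rho^n\|_\infty\leq C_T\|f^m-f^n\|_\infty$; combined with the Cauchy property of $\{f^n\}$ already proved in Section 6.3.1, this contributes only to $\varepsilon_{nm}$.

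Next, I would differentiate \eqref{6.4} in $x$, exactly as in \eqref{5.10}--\eqref{5.11}, to obtain
\[
\partial_xD_2^n=\frac{D_1^n\rho^n}{\sqrt{1+(D_1^n)^2}}\tan\theta_2^n
+\sqrt{1+(D_1^n)^2}\,\frac{\partial_x\theta_2^n}{\cos^2\theta_2^n},\qquad
\partial_xB^n=\frac{\partial_x\theta_B^n}{\cos^2\theta_B^n}.
\]
To bound $\partial_xD_2^m-\partial_xD_2^n$ and $\partial_xB^m-\partial_xB^n$, I would add and subtract terms one factor at a time, using: (a) the uniform bounds in \eqref{6.11} on $D_1^n$, $\rho^n$, $\partial_x\theta_2^n$, $\partial_x\theta_B^n$; (b) the uniform lower bounds $\cos\theta_2^n,\cos\theta_B^n\geq 1/C_T$ valid on $[0,T]\subset I\subset[0,\mathfrak T_1)$ (cf.\ Remark 5.1 and Lemma 5.7), so that $1/\cos^2\theta_2^n$ and its Lipschitz constant in $\theta_2$ are controlled; and (c) the elementary Lipschitz bounds of $\tan$, $1/\cos^2$, and $\sqrt{1+x^2}$ on compact sets. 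Every factor-difference produced by this splitting is either $|D_1^m-D_1^n|$, $|\rho^m-\rho^n|$, $|\theta_2^m-\theta_2^n|$, or $|\theta_B^m-\theta_B^n|$, each of which is $\leq\varepsilon_{nm}$ by Lemmas 6.1--6.4 and the Cauchy conclusion of Section 6.3.1; the only factor-difference that is \emph{not} small is $\partial_x\theta_2^m-\partial_x\theta_2^n$ (and $\partial_x\theta_B^m-\partial_x\theta_B^n$), multiplied by the bounded quantities above.

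Finally, using the linear relations $\theta_2^n=(\alpha^n+\beta^n)/2$ and $\theta_B^n=(\beta^n-\alpha^n)/2$, I get
\[
\|\partial_x\theta_2^m-\partial_x\theta_2^n\|_\infty
+\|\partial_x\theta_B^m-\partial_x\theta_B^n\|_\infty
\leq \|\partial_x\alpha^m-\partial_x\alpha^n\|_\infty
+\|\partial_x\beta^m-\partial_x\beta^n\|_\infty,
\]
which plugs directly into the displayed inequality and yields the statement. The only mildly subtle point — and the place that requires the uniform control from Section 5 — is step (b): one must know that $\cos\theta_2^n$ and $\cos\theta_B^n$ stay bounded away from zero uniformly in $n$, so that all the nonlinear factors appearing in the $x$-derivatives of $D_2^n$ and $B^n$ are Lipschitz with constants depending only on $T$ and initial data, and so that the resulting $C_T$ is independent of $n,m$.
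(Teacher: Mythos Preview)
Your proposal is correct and follows essentially the same route as the paper: both use $\partial_x D_1^n=\rho^n$ to handle the first component, differentiate the explicit formulas \eqref{6.4} for $D_2^n$ and $B^n$, telescope the resulting expressions, and absorb all zero-order differences into $\varepsilon_{nm}$ via the Cauchy property from Section~6.3.1, leaving only the $\partial_x\theta_2$ and $\partial_x\theta_B$ differences (rewritten in terms of $\partial_x\alpha$ and $\partial_x\beta$). Your explicit mention of the uniform lower bound on $\cos\theta_2^n,\cos\theta_B^n$ is a point the paper uses implicitly through the restriction $[0,T]\subset I$.
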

\begin{proof}
We note that the first component of $\partial_xD^m-\partial_xD^n$ is written as follows:
\[
\partial_xD^m_1-\partial_xD^n_1=\rho^m-\rho^n=\int f^m-f^n\,dv.
\]
Since $\{f^n\}$ is Cauchy and the momentum supports of $f^n$ are bounded by $P(T)$ uniformly on $n$,
we have
\[
|\partial_xD^m_1-\partial_xD^n_1|\leq\varepsilon_{nm}.
\]
We use \eqref{6.4} to estimate the second component of $\partial_xD^m-\partial_xD^n$, i.e.,
\[
\partial_xD^n_2=\frac{D_1^n\partial_xD^n_1}{\sqrt{1+(D_1^n)^2}}\tan\left(\frac{\alpha^n+\beta^n}{2}\right)
+\sqrt{1+(D_1^n)^2}\frac{1}{\cos^2\left(\frac{\alpha^n+\beta^n}{2}\right)}\frac{\partial_x\alpha^n+\partial_x\beta^n}{2}.
\]
By direct calculations, we have
\begin{align*}
&\hspace{-0.5cm}|\partial_xD^m_2-\partial_xD^n_2|\cr
&\leq C_T\sum_{i=0}^1\Big(|\partial_x^iD_1^m-\partial_x^iD_1^n|
+|\partial_x^i\alpha^m-\partial_x^i\alpha^n|
+|\partial_x^i\beta^m-\partial_x^i\beta^n|\Big)\cr
&\leq \varepsilon_{nm}+C_T\Big(|\partial_x\alpha^m-\partial_x\alpha^n|
+|\partial_x\beta^m-\partial_x\beta^n|\Big),
\end{align*}
and the same argument gives the estimate for $\partial_xB^m-\partial_xB^n$.
\[
|\partial_xB^m-\partial_xB^n|\leq \varepsilon_{nm}+C_T\Big(|\partial_x\alpha^m-\partial_x\alpha^n|
+|\partial_x\beta^m-\partial_x\beta^n|\Big).
\]
This completes the proof of the lemma.
\end{proof}
%%%%%%%%%%%%%%%%%%%%%%%%%%%%%%%%%%%%%%%%%%%%%%%%%%%
%
% Lemma 6.8
%
%%%%%%%%%%%%%%%%%%%%%%%%%%%%%%%%%%%%%%%%%%%%%%%%%%%
\begin{lemma}
Consider the characteristic system \eqref{6.8}.
\begin{equation*}
\begin{aligned}
&\frac{d}{ds}X^n(s)=\hat{V}_1^n(s),\quad X^n(t)=x,\\
&\frac{d}{ds}V^n(s)=F^n(s,X^n(s),V^n(s)),\quad V^n(t)=v,
\end{aligned}
\end{equation*}
where $F^n(t,z)=D^n(t,x)+(\hat{v}_2,-\hat{v}_1)B^n(t,x)$.
For any $z\neq z'$, we have the following estimates:
\begin{align*}
(i)&\quad|X^n(s;t,z)-X^n(s;t,z')|
+|V^n(s;t,z)-V^n(s;t,z')|
\leq C_T|z-z'|.\\
(ii)&\quad |\nabla_zX^n(s;t,z)-\nabla_zX^n(s;t,z')|+|\nabla_zV^n(s;t,z)-\nabla_zV^n(s;t,z')|\\
&\hspace{1cm}\leq C_T|z-z'|
+C_T\int_s^t|(\partial_xD^n)(\sigma,X^n(\sigma;t,z))-(\partial_xD^n)(\sigma,X^n(\sigma;t,z'))|\,d\sigma\\
&\hspace{1.5cm}+C_T\int_s^t|(\partial_xB^n)(\sigma,X^n(\sigma;t,z))-(\partial_xB^n)(\sigma,X^n(\sigma;t,z'))|\,d\sigma,
\end{align*}
where the constant $C_T$ does not depend on $n$.
\end{lemma}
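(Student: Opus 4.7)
The strategy is to mirror the approach of Lemma 6.5, but to compare the characteristic flow at two initial points $z$ and $z'$ rather than at two iteration steps $n$ and $m$. The uniform-in-$n$ bounds \eqref{6.11} on $D^n$, $B^n$, $\partial_x D^n$, $\partial_x B^n$ will take over the role previously played by the Cauchy property of the iterates. I would first prove (i) and then feed it into the variational argument for (ii).

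\textbf{For (i).} I would subtract the integrated forms of \eqref{6.8} at $z$ and at $z'$. Since $v\mapsto\hat v$ is smooth with uniformly bounded derivatives, and $\partial_x D^n$, $\partial_x B^n$ are uniformly bounded by \eqref{6.11}, one gets
\[
|X^n(s;t,z)-X^n(s;t,z')|\le|x-x'|+C\int_s^t|V^n(\sigma;t,z)-V^n(\sigma;t,z')|\,d\sigma,
\]
\[
|V^n(s;t,z)-V^n(s;t,z')|\le|v-v'|+C_T\int_s^t\bigl(|X^n(\sigma;t,z)-X^n(\sigma;t,z')|+|V^n(\sigma;t,z)-V^n(\sigma;t,z')|\bigr)d\sigma.
\]
Summing and applying Grönwall's inequality yields the stated Lipschitz bound, with $C_T$ depending only on $T$, $P(T)$, and initial data.

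\textbf{For (ii).} I would differentiate \eqref{6.8} with respect to $z$ to produce the variational system (of the same form as \eqref{6.19}--\eqref{6.20}), whose coefficients are $\nabla_v\hat V^n$, $(\partial_x D^n)(\sigma,X^n)$, $(\partial_x B^n)(\sigma,X^n)$, and $B^n(\sigma,X^n)$. Writing this linear system once at $z$ and once at $z'$, subtracting, and applying $ab-a'b'=(a-a')b+a'(b-b')$ to every product, the resulting differences split into two groups. The first group consists of coefficient differences that are smooth in $v$ (through $\hat v$) or involve the bounded quantities $D^n$, $B^n$ themselves; by part (i) each such difference is $\le C_T|z-z'|$, times one of $|\nabla_z X^n|$ or $|\nabla_z V^n|$, which are bounded by $C_T$ by Lemma 6.5(ii). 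What remains is $C_T$ times $|\nabla_z X^n(\sigma;t,z)-\nabla_z X^n(\sigma;t,z')|+|\nabla_z V^n(\sigma;t,z)-\nabla_z V^n(\sigma;t,z')|$, to which Grönwall applies.

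\textbf{The main obstacle.} The second group of terms is exactly
\[
(\partial_x D^n)(\sigma,X^n(\sigma;t,z))-(\partial_x D^n)(\sigma,X^n(\sigma;t,z'))
\]
and the analogous difference for $\partial_x B^n$. These cannot be collapsed into $C_T|z-z'|$: the a priori bounds of Section 5 only provide a uniform $L^\infty$ control on $\partial_x D^n$ and $\partial_x B^n$ themselves (via \eqref{5.9}--\eqref{5.11} and \eqref{6.11}), with no second spatial derivative bound available, so no Lipschitz constant for $\partial_x D^n$ or $\partial_x B^n$ in the $x$-variable is at our disposal. Accordingly these contributions must be retained in the integral form exactly as they appear in the statement. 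Combining the two groups and applying Grönwall to the sum $|\nabla_z X^n(s;t,z)-\nabla_z X^n(s;t,z')|+|\nabla_z V^n(s;t,z)-\nabla_z V^n(s;t,z')|$ over $[s,t]$ then produces the claimed estimate.
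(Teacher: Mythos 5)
Your proposal is correct and takes essentially the same approach as the paper: for (ii) it is exactly the paper's argument (differentiate the characteristic system as in \eqref{6.19}--\eqref{6.20}, subtract at $z$ and $z'$, bound the coefficient differences by $C_T|z-z'|$ using part (i), the uniform bounds \eqref{6.11} and Lemma 6.5(ii), keep the $\partial_xD^n$, $\partial_xB^n$ difference terms in the integrals since no second-derivative bound is available, and close with Gr\"onwall). The only cosmetic difference is in (i), where the paper simply applies the mean value theorem together with the bound $|\nabla_zX^n|+|\nabla_zV^n|\leq C_T$ from Lemma 6.5(ii), whereas you re-derive the Lipschitz estimate by a direct Gr\"onwall argument on the integrated characteristic equations; both are equally valid.
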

\begin{proof}
$(i)$ We apply the mean value theorem to $Z^n(s;t,z)-Z^n(s;t,z')$
and use Lemma 6.5 $(ii)$, and then the first estimate is obtained.\bigskip

\noindent $(ii)$ Let $z\neq z'$, and consider two characteristic curves for $(\sigma;t,z)$ and $(\sigma;t,z')$.
By using \eqref{6.19}--\eqref{6.20}, we have
\[
|\partial_xX^n(s;t,z)-\partial_xX^n(s;t,z')|\leq
C_T|z-z'|+C_T\int_s^t|\partial_xV^n(\sigma;t,z)-\partial_xV^n(\sigma;t,z')|\,d\sigma,
\]
where we used the result $(i)$ and Lemma 6.5 $(ii)$. In the same way, we obtain
\begin{align*}
&\hspace{-0.5cm}|\partial_xV^n(s;t,z)-\partial_xV^n(s;t,z')|\\
&\leq C_T\int_s^t|(\partial_xD^n)(\sigma,X^n(\sigma;t,z))-(\partial_xD^n)(\sigma,X^n(\sigma;t,z'))|\,d\sigma\\
&\hspace{0.5cm}+C_T\int_s^t|\partial_xX^n(s;t,z)-\partial_xX^n(s;t,z')|\,d\sigma\\
&\hspace{0.5cm}+C_T\int_s^t|z-z'|+|\partial_xV^n(s;t,z)-\partial_xV^n(s;t,z')|\,d\sigma\\
&\hspace{0.5cm}+C_T\int_s^t|(\partial_xB^n)(\sigma,X^n(\sigma;t,z))-(\partial_xB^n)(\sigma,X^n(\sigma;t,z'))|\,d\sigma,
\end{align*}
where we used $(i)$, Lemma 6.5 $(ii)$, and the fact that $\partial_xD^n$, $B^n$, and $\partial_xB^n$ are uniformly bounded.
Gr{\"o}nwall's inequality again gives the desired result after we estimate $v$ derivative quantities
by the same arguments.
\end{proof}
%%%%%%%%%%%%%%%%%%%%%%%%%%%%%%%%%%%%%%%%%%%%%%%%%%%
%
% Remark 6.1
%
%%%%%%%%%%%%%%%%%%%%%%%%%%%%%%%%%%%%%%%%%%%%%%%%%%%
\begin{remark}
We can see that $\partial_xD^n(t,x)-\partial_xD^n(t,x')$
and $\partial_xB^n(t,x)-\partial_xB^n(t,x')$ are estimated by the transformed variables
$\partial_x\alpha^n$ and $\partial_x\beta^n$.
If we use \eqref{6.2} and \eqref{6.4}, then the following inequality is easily obtained:
\begin{equation}\label{6.24}
\begin{aligned}
&\hspace{-0.5cm}|\partial_xD^n(t,x)-\partial_xD^n(t,x')|
+|\partial_xB^n(t,x)-\partial_xB^n(t,x')|\\
&\leq C_T(|x-x'|+|\partial_x\alpha^n(t,x)-\partial_x\alpha^n(t,x')|
+|\partial_x\beta^n(t,x)-\partial_x\beta^n(t,x')|).
\end{aligned}
\end{equation}
Hence, the second result of Lemma 6.8 implies the following:
\begin{align*}
&\hspace{-0.5cm}|\nabla_zX^n(s;t,z)-\nabla_zX^n(s;t,z')|+|\nabla_zV^n(s;t,z)-\nabla_zV^n(s;t,z')|\\
&\leq C_T|z-z'|
+C_T\int_s^t|(\partial_x\alpha^n)(\sigma,X^n(\sigma;t,z))-(\partial_x\alpha^n)(\sigma,X^n(\sigma;t,z'))|\,d\sigma\\
&\hspace{0.5cm}+C_T\int_s^t|(\partial_x\beta^n)(\sigma,X^n(\sigma;t,z))-(\partial_x\beta^n)(\sigma,X^n(\sigma;t,z'))|\,d\sigma.
\end{align*}
\end{remark}
%%%%%%%%%%%%%%%%%%%%%%%%%%%%%%%%%%%%%%%%%%%%%%%%%%%
%
% Lemma 6.9
%
%%%%%%%%%%%%%%%%%%%%%%%%%%%%%%%%%%%%%%%%%%%%%%%%%%%
\begin{lemma}
Consider the characteristic equations \eqref{6.21} and \eqref{6.22}.
\begin{align*}
&\frac{d}{d\tau}\xi^n(\tau)=-\cos\beta^n(\tau,\xi^n(\tau)),\quad \xi^n(t)=x,\\
&\frac{d}{d\tau}\eta^n(\tau)=\cos\alpha^n(\tau,\eta^n(\tau)),\quad \eta^n(t)=x.
\end{align*}
For any $x\neq x'$, we have the following estimates:
\begin{align*}
(i)&\quad |\xi^n(s;t,x)-\xi^n(s;t,x')|+|\eta^n(s;t,x)-\eta^n(s;t,x')|\leq C_T|x-x'|.\\
(ii)&\quad |\partial_x\xi^n(s;t,x)-\partial_x\xi^n(s;t,x')|\\
&\hspace{1cm}\leq C_T|x-x'|
+C_T\int_s^t|\partial_x\beta^n(\tau,\xi^n(\tau;t,x))-\partial_x\beta^n(\tau,\xi^n(\tau;t,x'))|\,d\tau.\\
(iii)&\quad |\partial_x\eta^n(s;t,x)-\partial_x\eta^n(s;t,x')|\\
&\hspace{1cm}\leq C_T|x-x'|
+C_T\int_s^t|\partial_x\alpha^n(\tau,\eta^n(\tau;t,x))-\partial_x\alpha^n(\tau,\eta^n(\tau;t,x'))|\,d\tau,
\end{align*}
where the constant $C_T$ does not depend on $n$.
\end{lemma}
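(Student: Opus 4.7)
The plan is to mirror the approach used in Lemma 6.5 $(ii)$--$(iii)$ and Lemma 6.8, since the field characteristic equations \eqref{6.21}--\eqref{6.22} are scalar ODEs whose right-hand sides depend on a single field variable. All a priori bounds listed in \eqref{6.11} are uniform in $n$ on $[0,T]$, and this uniformity is precisely what makes the constant $C_T$ in the statement independent of $n$. Throughout, I treat $\xi^n$ first; the proof for $\eta^n$ is verbatim with $(\xi^n,\beta^n)$ replaced by $(\eta^n,\alpha^n)$.

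For $(i)$, I would integrate \eqref{6.21} from $\tau$ to $t$ at the two initial points $x$ and $x'$ and subtract, obtaining
\[
\xi^n(\tau;t,x)-\xi^n(\tau;t,x')=(x-x')+\int_\tau^t\bigl[\cos\beta^n(s,\xi^n(s;t,x))-\cos\beta^n(s,\xi^n(s;t,x'))\bigr]\,ds.
\]
Because $\cos$ is $1$-Lipschitz and $\|\partial_x\beta^n(s)\|_\infty\le C_T$ by \eqref{6.11}, the integrand is bounded by $C_T|\xi^n(s;t,x)-\xi^n(s;t,x')|$, so Grönwall's inequality yields $|\xi^n(\tau;t,x)-\xi^n(\tau;t,x')|\le C_T|x-x'|$, and the same computation applied to \eqref{6.22} using $\|\partial_x\alpha^n(s)\|_\infty\le C_T$ gives the $\eta^n$ bound.

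For $(ii)$, I would differentiate \eqref{6.21} in $x$ to obtain the linear ODE \eqref{6.23},
\[
\partial_x\dot\xi^n(\tau;t,x)=\sin\beta^n(\tau,\xi^n(\tau))\,(\partial_x\beta^n)(\tau,\xi^n(\tau))\,\partial_x\xi^n(\tau;t,x),\qquad \partial_x\xi^n(t;t,x)=1.
\]
Writing its integral form at $x$ and at $x'$, subtracting, and applying the identity $abc-a'b'c'=(a-a')bc+a'(b-b')c+a'b'(c-c')$ with $a=\sin\beta^n$, $b=\partial_x\beta^n$, $c=\partial_x\xi^n$, the integrand splits into three pieces. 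The $(a-a')bc$ piece is bounded by $C_T|\xi^n(s;t,x)-\xi^n(s;t,x')|\le C_T|x-x'|$ via part $(i)$ together with the uniform boundedness of $\sin$, $\partial_x\beta^n$ and $\partial_x\xi^n$ (Lemma 6.6 $(ii)$). The $a'b'(c-c')$ piece carries the difference $\partial_x\xi^n(s;t,x)-\partial_x\xi^n(s;t,x')$ with a uniformly bounded coefficient, so it feeds directly into a Grönwall loop. The middle piece $a'(b-b')c$, in which only $\partial_x\beta^n$ is evaluated at two different points, is the irreducible term; after absorbing $|\sin\beta^n|\,|\partial_x\xi^n|\le C_T$, it survives as the integral $C_T\int_\tau^t|(\partial_x\beta^n)(s,\xi^n(s;t,x))-(\partial_x\beta^n)(s,\xi^n(s;t,x'))|\,ds$ appearing in the statement. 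Grönwall then closes the argument. Item $(iii)$ follows by applying the identical procedure to \eqref{6.22}.

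The main obstacle --- mild, but worth flagging --- is that $\partial_x\beta^n$ and $\partial_x\alpha^n$ are only continuous in $x$, not differentiable, so the middle piece above cannot be further processed by a mean value theorem; this is exactly why that term must be carried along as the integral on the right-hand side rather than absorbed into the $C_T|x-x'|$ factor. The structure mirrors Lemma 6.8 $(ii)$ and Remark 6.2, and the resulting estimate will presumably be used, together with equicontinuity of $\{\partial_x\alpha^n\}$ and $\{\partial_x\beta^n\}$, to conclude that the limits $f$, $\alpha$, $\beta$ are of class $\mathcal{C}^1$.
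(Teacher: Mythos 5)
Your proof is correct and follows essentially the approach the paper intends: the paper itself simply states that the argument ``is straightforward and almost the same with that of Lemma 6.8'' and skips it, and your triple-product decomposition of the subtracted integrand in part $(ii)$ is exactly the scalar analogue of the argument given for Lemma 6.8 $(ii)$. The only cosmetic difference is in part $(i)$, where the paper's Lemma 6.8 $(i)$ invokes the mean value theorem together with the derivative bound of Lemma 6.6 $(ii)$, whereas you integrate the ODE and apply Gr\"onwall's inequality --- both routes are valid and give the same Lipschitz estimate.
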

\begin{proof}
The argument of the proof is straightforward and almost the same with that of Lemma 6.8, hence we skip the proof.
\end{proof}
%%%%%%%%%%%%%%%%%%%%%%%%%%%%%%%%%%%%%%%%%%%%%%%%%%%
%
% Lemma 6.10
%
%%%%%%%%%%%%%%%%%%%%%%%%%%%%%%%%%%%%%%%%%%%%%%%%%%%
\begin{lemma}
Consider the iteration functions \eqref{6.1}--\eqref{6.4}.
For any $z\neq z'$, we have
\begin{align*}
&\hspace{-0.5cm}|\nabla_zf^{n+1}(t,z)-\nabla_zf^{n+1}(t,z')|\\
&\leq C_T|(\nabla_zf^{in})(Z^n(0;t,z))
-(\nabla_zf^{in})(Z^n(0;t,z'))|\\
&\hspace{0.5cm}+C_T|z-z'|
+C_T\int_0^t|(\partial_x\alpha^n)(s,X^n(s;t,z))-(\partial_x\alpha^n)(s,X^n(s;t,z'))|\,ds\\
&\hspace{0.5cm}+C_T\int_0^t|(\partial_x\beta^n)(s,X^n(s;t,z))-(\partial_x\beta^n)(s,X^n(s;t,z'))|\,ds.
\end{align*}
\end{lemma}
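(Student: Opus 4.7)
The plan is to exploit the characteristic representation $f^{n+1}(t,z)=f^{in}(Z^n(0;t,z))$, obtained by integrating \eqref{6.1} along \eqref{6.8} with the conserved initial data $f^{in}$. Differentiating in $z=(x,v)$ via the chain rule gives
\[
\nabla_z f^{n+1}(t,z)=(\nabla_z f^{in})(Z^n(0;t,z))\,\nabla_z Z^n(0;t,z),
\]
so the $z\mapsto z'$ difference can be written as
\[
\nabla_z f^{n+1}(t,z)-\nabla_z f^{n+1}(t,z')=\mathrm{I}+\mathrm{II},
\]
where
\[
\mathrm{I}=\bigl[(\nabla_z f^{in})(Z^n(0;t,z))-(\nabla_z f^{in})(Z^n(0;t,z'))\bigr]\,\nabla_z Z^n(0;t,z),
\]
\[
\mathrm{II}=(\nabla_z f^{in})(Z^n(0;t,z'))\,\bigl[\nabla_z Z^n(0;t,z)-\nabla_z Z^n(0;t,z')\bigr].
\]

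First, I would bound $\mathrm{I}$ by the uniform estimate $|\nabla_z Z^n(0;t,z)|\leq C_T$ from Lemma 6.5\,(ii), which gives
\[
|\mathrm{I}|\leq C_T\,|(\nabla_z f^{in})(Z^n(0;t,z))-(\nabla_z f^{in})(Z^n(0;t,z'))|,
\]
producing the first term on the right-hand side of the claim. Next, using that $f^{in}\in\mathcal{C}^1_c$ so $\|\nabla_z f^{in}\|_\infty\leq C_T$, I would estimate
\[
|\mathrm{II}|\leq C_T\,|\nabla_z Z^n(0;t,z)-\nabla_z Z^n(0;t,z')|
\]
and then invoke Lemma 6.8\,(ii) with $s=0$. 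That lemma produces a bound with $\partial_x D^n$ and $\partial_x B^n$ differences along the characteristic, which by Remark 6.1 (i.e., inequality \eqref{6.24}) are controlled in terms of $|X^n(\sigma;t,z)-X^n(\sigma;t,z')|$ together with $\partial_x\alpha^n$ and $\partial_x\beta^n$ differences along the same characteristic. The Lipschitz bound in Lemma 6.8\,(i), $|X^n(\sigma;t,z)-X^n(\sigma;t,z')|\leq C_T|z-z'|$, absorbs the first contribution into $C_T|z-z'|$, leaving precisely the two integral terms in the statement.

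Combining these two estimates yields the desired inequality. The main technical point, and essentially the only place where anything non-routine happens, is the translation from Lemma 6.8\,(ii) (which is phrased in the ``physical'' variables $\partial_x D^n$, $\partial_x B^n$) into the ``transformed'' variables $\partial_x\alpha^n$, $\partial_x\beta^n$; this is exactly the role of Remark 6.1, and is why one must first push the field regularity through the relations \eqref{6.2} and \eqref{6.4} before applying it here. All remaining ingredients, namely the boundedness of $\nabla_z Z^n$, the $\mathcal{C}^1$ character of $f^{in}$, and the Lipschitz dependence of $X^n$ on initial data, are uniform in $n$ on $[0,T]\subset I$, as required to get a constant $C_T$ independent of $n$.
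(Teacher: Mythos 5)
Your proposal is correct and matches the paper's proof essentially step for step: the same characteristic representation $f^{n+1}(t,z)=f^{in}(Z^n(0;t,z))$, the same two-term decomposition of the difference of gradients, Lemma 6.5\,(ii) and boundedness of $\|\nabla_z f^{in}\|_\infty$ for the first term, and Remark 6.1 (which packages Lemma 6.8\,(ii) together with inequality \eqref{6.24} and the Lipschitz bound of Lemma 6.8\,(i)) for the second. The only difference is that you spell out the contents of Remark 6.1 where the paper simply cites it.
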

\begin{proof}
We only consider $x$ derivative of $f^{n+1}$ since the calculation for $v$ derivatives is almost same.
Since $f^{n+1}(t,z)=f^{in}(Z^n(0;t,z))$, we have
\begin{align*}
&\hspace{-0.5cm}|\partial_xf^{n+1}(t,z)-\partial_xf^{n+1}(t,z')|\\
&\leq |(\nabla_zf^{in})(Z^n(0;t,z))-(\nabla_zf^{in})(Z^n(0;t,z'))||\partial_xZ^n(0;t,z)|\\
&\hspace{0.5cm}+|(\nabla_zf^{in})(Z^n(0;t,z'))||\partial_xZ^n(0;t,z)-\partial_xZ^n(0;t,z')|\\
&\leq C_T|(\nabla_zf^{in})(Z^n(0;t,z))-(\nabla_zf^{in})(Z^n(0;t,z'))|\\
&\hspace{0.5cm}+C_T|z-z'|
+C_T\int_0^t|(\partial_x\alpha^n)(s,X^n(s;t,z))-(\partial_x\alpha^n)(s,X^n(s;t,z'))|\,ds\\
&\hspace{0.5cm}+C_T\int_0^t|(\partial_x\beta^n)(s,X^n(s;t,z))-(\partial_x\beta^n)(s,X^n(s;t,z'))|\,ds,
\end{align*}
where we used Lemma 6.5 $(ii)$, boundedness of $\mathcal{C}^1$ norm of initial data, and Remark 6.1.
This completes the proof.
\end{proof}
%%%%%%%%%%%%%%%%%%%%%%%%%%%%%%%%%%%%%%%%%%%%%%%%%%%
%
% Lemma 6.11
%
%%%%%%%%%%%%%%%%%%%%%%%%%%%%%%%%%%%%%%%%%%%%%%%%%%%
\begin{lemma}
Consider the iteration functions \eqref{6.1}--\eqref{6.4}. For any $x\neq x'$, we have
the following estimates:
\begin{align*}
(i)&\quad|\partial_x\alpha^{n}(t,x)-\partial_x\alpha^{n}(t,x')|\\
&\hspace{1cm}\leq C_T|(\partial_x\alpha^{in})(\xi^n(0;t,x))
-(\partial_x\alpha^{in})(\xi^n(0;t,x'))|\\
&\hspace{1.5cm}+C_T|x-x'|
+C_T\int_0^t|(\partial_x\alpha^n)(\tau,\xi^n(\tau;t,x))-(\partial_x\alpha^n)(\tau,\xi^n(\tau;t,x'))|\,d\tau\\
&\hspace{1.5cm}+C_T\int_0^t|(\partial_x\beta^n)(\tau,\xi^n(\tau;t,x))-(\partial_x\beta^n)(\tau,\xi^n(\tau;t,x'))|\,d\tau\\
&\hspace{1.5cm}+C_T\int_0^t\int|(\partial_xf^n)(\tau,\xi^n(\tau;t,x),v)
-(\partial_xf^n)(\tau,\xi^n(\tau;t,x'),v)|\,dv\,d\tau.\\
(ii)&\quad|\partial_x\beta^{n}(t,x)-\partial_x\beta^{n}(t,x')|\\
&\hspace{1cm}\leq C_T|(\partial_x\beta^{in})(\eta^n(0;t,x))
-(\partial_x\beta^{in})(\eta^n(0;t,x'))|\\
&\hspace{1.5cm}+C_T|x-x'|
+C_T\int_0^t|(\partial_x\alpha^n)(\tau,\eta^n(\tau;t,x))-(\partial_x\alpha^n)(\tau,\eta^n(\tau;t,x'))|\,d\tau\\
&\hspace{1.5cm}+C_T\int_0^t|(\partial_x\beta^n)(\tau,\eta^n(\tau;t,x))-(\partial_x\beta^n)(\tau,\eta^n(\tau;t,x'))|\,d\tau\\
&\hspace{1.5cm}+C_T\int_0^t\int|(\partial_xf^n)(\tau,\eta^n(\tau;t,x),v)
-(\partial_xf^n)(\tau,\eta^n(\tau;t,x'),v)|\,dv\,d\tau.
\end{align*}
\end{lemma}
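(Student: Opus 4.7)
The plan is to mimic closely the argument of Lemma 6.10 but for the field variables $\alpha^n$ and $\beta^n$, using the characteristic form of the equations \eqref{6.3} rather than \eqref{6.8}. I will prove only $(i)$ in detail; $(ii)$ follows the same pattern along $\eta^n$ instead of $\xi^n$.

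First, I integrate \eqref{6.3} along the characteristic $\xi^n$ from \eqref{6.21} to obtain the representation
\[
\alpha^n(t,x)=\alpha^{in}(\xi^n(0;t,x))+\int_0^t R^n(\tau,\xi^n(\tau;t,x))\,d\tau,
\]
where $R^n:=\cos\theta_2^n(k_0^n\sin\theta_B^n+k_1^n\sin\theta_2^n+k_2^n\cos\theta_2^n)$. Differentiating in $x$ gives
\[
\partial_x\alpha^n(t,x)=(\partial_x\alpha^{in})(\xi^n(0;t,x))\partial_x\xi^n(0;t,x)+\int_0^t (\partial_xR^n)(\tau,\xi^n(\tau;t,x))\partial_x\xi^n(\tau;t,x)\,d\tau.
\]
A direct computation, using $\theta_2^n=(\alpha^n+\beta^n)/2$, $\theta_B^n=(\beta^n-\alpha^n)/2$, $\partial_xD_1^n=\rho^n$, and formulas \eqref{5.14}--\eqref{5.16} for $\partial_xk_i^n$, shows that $\partial_xR^n$ is a sum of terms, each of which is a bounded $\mathcal{C}^1$ function of $(\alpha^n,\beta^n,\rho^n,j^n,D_1^n)$ multiplied by one of $\partial_x\alpha^n$, $\partial_x\beta^n$, $\partial_x\rho^n$, $\partial_xj_1^n$, $\partial_xj_2^n$, plus lower-order terms bounded by $C_T$. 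The $\partial_xj_i^n$ terms (with $j_0^n=\rho^n$) will be rewritten, as in the proof of Lemma 5.6, via
\[
\partial_x\rho^n(\tau,\xi^n(\tau))=\int_{\bbr^2}(\partial_xf^n)(\tau,\xi^n(\tau),v)\,dv-(\partial_xn)(\xi^n(\tau)),
\]
and similarly for $j^n$, which accounts for the last integral appearing in the statement.

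Next, I subtract the representation at $x$ and $x'$. For each term I split $A^n(\tau,\xi^n(\tau;t,x))\,G(\tau,x)-A^n(\tau,\xi^n(\tau;t,x'))\,G(\tau,x')$ in the standard way, adding and subtracting intermediate values so that bounded-coefficient factors pull out by $\|A^n\|_\infty\le C_T$ and Lipschitz differences of $A^n$ in $x$ give, via Lemma 6.9 $(i)$, a term controlled by $C_T|x-x'|$. The characteristic multiplier differences $\partial_x\xi^n(\tau;t,x)-\partial_x\xi^n(\tau;t,x')$ are handled by Lemma 6.9 $(ii)$--$(iii)$, which produces a term
\[
C_T|x-x'|+C_T\int_0^t|(\partial_x\beta^n)(s,\xi^n(s;t,x))-(\partial_x\beta^n)(s,\xi^n(s;t,x'))|\,ds
\]
already of the required form. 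The initial-data term yields the lead contribution $C_T|(\partial_x\alpha^{in})(\xi^n(0;t,x))-(\partial_x\alpha^{in})(\xi^n(0;t,x'))|$ after using $|\partial_x\xi^n|\le C_T$.

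Collecting everything, the remaining nontrivial pieces are precisely the oscillations of $\partial_x\alpha^n$, $\partial_x\beta^n$ and $\partial_xf^n$ along $\xi^n$, evaluated at $(\tau,\xi^n(\tau;t,x))$ versus $(\tau,\xi^n(\tau;t,x'))$, which are exactly the three integral terms listed in $(i)$. The main technical nuisance, though not conceptually hard, is bookkeeping: there are many product terms coming from $\partial_xR^n$ and each must be expanded and its spatial oscillation controlled; the role of $\partial_xD_1^n=\rho^n$ (which is bounded, hence contributes only $C_T|x-x'|$ through $(i)$ of Lemma 6.9) is crucial so that no uncontrolled term of the form $|\partial_xD_1^n(\tau,\xi^n(\tau;t,x))-\partial_xD_1^n(\tau,\xi^n(\tau;t,x'))|$ appears. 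Statement $(ii)$ is obtained by repeating the argument verbatim along $\eta^n$ and using parts $(i)$ and $(iv)$ of Lemma 6.9 in place of $(i)$ and $(iii)$.
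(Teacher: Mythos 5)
Your proposal is correct and follows essentially the same approach as the paper's (terse) proof: differentiate the integral representation \eqref{5.6} in $x$, expand the derivative of the source term via \eqref{5.12} and \eqref{5.14}--\eqref{5.16}, and bound the difference at $x$ versus $x'$ by splitting each product, invoking Lemma~6.9 for the characteristic map and the uniform bounds \eqref{6.11} on the iterates. The only quibble shared with the paper is that the $\partial_x n$ contribution inside $\partial_x\rho^n$ is silently absorbed into $C_T|x-x'|$, which strictly requires a modulus-of-continuity term since $n$ is only assumed $\mathcal{C}^1$, but this is harmless for the equicontinuity argument in Lemma~6.12.
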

\begin{proof}
We first consider $\partial_xj^n_i(t,x)-\partial_xj^n_i(t,x')$, $i=0,1,2$, where $j^n_0=\rho^n$,
which can be easily estimated by \eqref{6.2} as follows:
\[
|\partial_xj^n_i(t,x)-\partial_xj^n_i(t,x')|
\leq C|x-x'|+\int |\partial_xf^n(t,x,v)-\partial_xf^n(t,x',v)|\,dv
\]
for any $i=0,1,2$.
Then, we use \eqref{5.14}--\eqref{5.16} and the fact that the quantities in \eqref{6.11} are uniformly
bounded by $C_T$ to obtain
\[
|\partial_xk_i(t,x)-\partial_xk_i(t,x')|\leq
C_T|x-x'|+C_T\int |\partial_xf^n(t,x,v)-\partial_xf^n(t,x',v)|\,dv
\]
for any $i=0,1,2$.
The proof is now straightforward. We use \eqref{5.6}, take $x$ derivative on it,
and estimate it for $x$ and $x'$ by using Lemma 6.9
and the above inequality together with the fact that the quantities in \eqref{6.11}
are uniformly bounded by $C_T$, and then we obtain $(i)$. The second estimate
is obtained by the same argument, and this completes the proof.
\end{proof}
%%%%%%%%%%%%%%%%%%%%%%%%%%%%%%%%%%%%%%%%%%%%%%%%%%%
%
% Lemma 6.12
%
%%%%%%%%%%%%%%%%%%%%%%%%%%%%%%%%%%%%%%%%%%%%%%%%%%%
\begin{lemma}
Consider the iteration functions \eqref{6.1}--\eqref{6.4}.
The sets of derivatives of the iteration functions $\{\nabla_zf^n\}$, $\{\partial_x\alpha^n\}$,
and $\{\partial_x\beta^n\}$ are equicontinuous.
\end{lemma}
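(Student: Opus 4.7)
The strategy is to reduce the three-way equicontinuity to a single integral inequality for a combined modulus of continuity, then unwind it by iteration while controlling a geometric dilation of the spatial scale. For $t\in[0,T]$ and $\delta>0$ set
\[
\omega_n^f(t,\delta):=\sup_{|z-z'|\leq\delta}|\nabla_zf^n(t,z)-\nabla_zf^n(t,z')|,
\]
and define $\omega_n^\alpha(t,\delta)$, $\omega_n^\beta(t,\delta)$ analogously, with $\partial_x\alpha^n$, $\partial_x\beta^n$ and $|x-x'|\leq\delta$. Let
\[
\Omega(t,\delta):=\sup_{n\geq 0}\bigl(\omega_n^f(t,\delta)+\omega_n^\alpha(t,\delta)+\omega_n^\beta(t,\delta)\bigr).
\]
Equicontinuity of the three families is equivalent to $\Omega(t,\delta)\to 0$ as $\delta\to 0$, uniformly in $t\in[0,T]$, and this is what we aim to establish.

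Two ingredients feed the recursion. First, Lemma 6.8 $(i)$ and Lemma 6.9 $(i)$ yield the $C_T$-Lipschitz bounds $|Z^n(s;t,z)-Z^n(s;t,z')|\leq C_T|z-z'|$ and $|\xi^n(s;t,x)-\xi^n(s;t,x')|$, $|\eta^n(s;t,x)-\eta^n(s;t,x')|\leq C_T|x-x'|$. Second, the uniform compactness of the $v$-support of $f^n$ from Section 6.2 gives $\int|\partial_xf^n(s,x,v)-\partial_xf^n(s,x',v)|\,dv\leq C_T\,\omega_n^f(s,|x-x'|)$. Inserting these into Lemmas 6.10 and 6.11 and taking suprema over $|z-z'|\leq\delta$ and $|x-x'|\leq\delta$ produces, with
\[
h(\delta):=C_T\bigl(\omega(\nabla_zf^{in};C_T\delta)+\omega(\partial_x\alpha^{in};C_T\delta)+\omega(\partial_x\beta^{in};C_T\delta)\bigr)+C_T\delta,
\]
the master inequality
\[
\Omega(t,\delta)\leq h(\delta)+C_T\int_0^t\Omega(s,C_T\delta)\,ds.
\]
Since $f^{in},\theta_2^{in},\theta_B^{in}\in\mathcal{C}^1_c$, each constituent modulus in $h(\delta)$ tends to $0$ as $\delta\to 0$, while Lemma 5.7, applied uniformly in $n$ via Section 6.2, furnishes a constant $M$ with $\Omega(t,\delta)\leq M$ for all $t,\delta,n$.

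To unwind, put $F(t,\delta):=\sup_{0\leq s\leq t}\Omega(s,\delta)$, so that $F(t,\delta)\leq h(\delta)+C_Tt\,F(t,C_T\delta)$. Choose $T_0<1/(2C_T)$ and iterate $k$ times on $[0,T_0]$:
\[
F(T_0,\delta)\leq\sum_{j=0}^{k-1}(C_TT_0)^j h(C_T^j\delta)+(C_TT_0)^k M.
\]
Letting $k\to\infty$ kills the last term and leaves the convergent series $\sum_{j=0}^\infty(C_TT_0)^jh(C_T^j\delta)$. Given $\varepsilon>0$, first choose $J$ with $(\sup h)\sum_{j\geq J}(C_TT_0)^j<\varepsilon/2$, then pick $\delta_0>0$ so small that $h(C_T^{J-1}\delta)<\varepsilon/(2J)$ whenever $\delta<\delta_0$; it follows that $F(T_0,\delta)<\varepsilon$, i.e.\ $F(T_0,\delta)\to 0$ as $\delta\to 0$. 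To extend to $[0,T]$, cover it by finitely many subintervals of length $\leq T_0$ and restart the argument at each endpoint, now with the previously-proved smallness of $F$ at that endpoint playing the role of $h(\delta)$; this propagates $F(t,\delta)\to 0$ all the way to $t=T$. The main obstacle throughout is precisely the dilation $\delta\mapsto C_T\delta$ inside the right-hand side of the master inequality, which rules out any direct Gr\"onwall application; it is defeated by balancing the geometric decay $(C_TT_0)^j$ gained from shortening the time interval against the polynomial growth $C_T^j$ of the spatial scale, with uniform continuity of the compactly supported $\mathcal{C}^1$ initial data supplying the final smallness.
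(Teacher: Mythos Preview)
Your argument is correct, but it takes a noticeably more laborious route than the paper's. The point where the two diverge is your treatment of the dilation $\delta\mapsto C_T\delta$ in the master inequality. You declare that this ``rules out any direct Gr\"onwall application'' and then proceed to beat it by a short-time geometric iteration followed by a finite covering of $[0,T]$. The paper instead exploits the elementary subadditivity of a modulus of continuity,
\[
\varepsilon_n(t,N\delta)\leq N\,\varepsilon_n(t,\delta)\qquad(N\in\mathbb{N}),
\]
which follows from telescoping along equally spaced intermediate points. Choosing an integer $M\geq C_T$, this turns every occurrence of $\omega(\cdot,M\delta)$ into $M\,\omega(\cdot,\delta)$, the factor $M$ is absorbed into $C_T$, and the recursion becomes
\[
\varepsilon_{n+1}(t,\delta)\leq C_T\Bigl(\delta+\varepsilon_0(\delta)+\int_0^t\theta_n(s,\delta)\,ds\Bigr),\qquad
\theta_n(t,\delta)\leq C_T\Bigl(\delta+\theta_0(\delta)+\int_0^t\bigl(\theta_n+\varepsilon_n\bigr)(s,\delta)\,ds\Bigr),
\]
with no dilation at all. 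A single Gr\"onwall step on the second inequality, substituted into the first and iterated in $n$, then yields $\varepsilon_n(t,\delta)+\theta_n(t,\delta)\leq C_T\bigl(\delta+\varepsilon_0(\delta)+\theta_0(\delta)\bigr)$ uniformly in $n$ and $t\in[0,T]$, which is exactly equicontinuity. This is shorter and avoids both the small-time restriction $T_0<1/(2C_T)$ and the restart machinery; since subadditivity holds for any modulus of continuity on a convex domain, your extra generality is not needed here.
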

\begin{proof}
We use Lemma 6.8--6.11.
We first define the following quantities:
\begin{align*}
&\varepsilon_n(t,\delta):=\sup\{|\nabla_zf^n(t,z)-\nabla_zf^n(t,z')|: |z-z'|\leq\delta\},\\
&\theta_n(t,\delta):=\sup\{|\partial_x\alpha^n(t,x)-\partial_x\alpha^n(t,x')|
+|\partial_x\beta^n(t,x)-\partial_x\beta^n(t,x')|: |x-x'|\leq \delta\}.
\end{align*}
Note that for any positive integer $N$ we have
\[
\varepsilon_n(t,\delta)\leq\varepsilon_n(t,N\delta)
\leq N\varepsilon_n(t,\delta).
\]
On the other hand, we can see that for any $n$
\[
\varepsilon_n(0,\delta)=\varepsilon_0(0,\delta)=:\varepsilon_0(\delta)\quad\mbox{and}\quad
\theta_n(0,\delta)=\theta_0(0,\delta)=:\theta_0(\delta),
\]
which are determined by given $\mathcal{C}^1$ initial data.
By Lemma 6.8 and 6.9, we can choose a positive integer $M$ such that
\[
|Z^n(s;t,z)-Z^n(s;t,z')|
+|\xi^n(s;t,x)-\xi^n(s;t,x')|+|\eta^n(s;t,x)-\eta^n(s;t,x')|\leq M\delta
\]
for any $\delta>0$, $|z-z'|\leq\delta$, $n\geq 0$, and $0\leq s\leq t\leq T$.
We now fix a $\delta>0$ and
use $\varepsilon_n(t,\delta)$ and $\theta_n(t,\delta)$ to rewrite Lemma 6.10 and 6.11 as follows:
\begin{align*}
\varepsilon_{n+1}(t,\delta)&\leq C_T\varepsilon_0(M\delta)+C_T\delta
+C_T\int_0^t\theta_n(s,M\delta)\,ds\\
&\leq C_TM\varepsilon_0(\delta)+C_T\delta
+C_TM\int_0^t\theta_n(s,\delta)\,ds\\
&\leq C_T\left(\delta+\varepsilon_0(\delta)+\int_0^t\theta_n(s,\delta)\,ds\right),
\end{align*}
and similarly
\begin{align*}
\theta_n(t,\delta)&\leq 2C_T\theta_0(M\delta)+2C_T\delta+2C_T\int_0^t\theta_n(\tau,M\delta)\,d\tau
+2C_T\int_0^t\int_{|v|\leq P(\tau)}\varepsilon_n(\tau,M\delta)\,dv\,d\tau\\
&\leq C_T\left(\delta+\theta_0(\delta)+\int_0^t\theta_n(\tau,\delta)\,d\tau
+\int_0^t\varepsilon_n(\tau,\delta)\,d\tau\right).
\end{align*}
We apply Gr{\"o}nwall's inequality to the last inequality and then iterate the above two inequalities
to conclude that for large $n$ we have
\[
\varepsilon_n(t,\delta)+\theta_n(t,\delta)\leq C_T(\delta+\varepsilon_0(\delta)+\theta_0(\delta))
\]
for any $0\leq t\leq T$, and this implies that $\{\nabla_zf^n\}$, $\{\partial_x\alpha^n\}$,
and $\{\partial_x\beta^n\}$ are equicontinuous.
\end{proof}
%%%%%%%%%%%%%%%%%%%%%%%%%%%%%%%%%%%%%%%%%%%%%%%%%%%
%
% Lemma 6.13
%
%%%%%%%%%%%%%%%%%%%%%%%%%%%%%%%%%%%%%%%%%%%%%%%%%%%
\begin{lemma}
Consider the iteration functions \eqref{6.1}--\eqref{6.4}. Then we have
\begin{align*}
&\hspace{-0.5cm} ||\nabla_zf^{n+1}(t)-\nabla_zf^{m+1}(t)||_\infty\\
&\leq \varepsilon_{nm}+C_T\int_0^t||\partial_x\alpha^n(s)-\partial_x\alpha^m(s)||_\infty
+||\partial_x\beta^n(s)-\partial_x\beta^m(s)||_\infty\,ds,
\end{align*}
where $\varepsilon_{nm}$ is a small positive quantity which tends to zero as $n,m\rightarrow\infty$
and depends on $T$ but not on $x$ and $v$.
\end{lemma}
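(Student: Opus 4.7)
The plan is to start from the representation $f^{n+1}(t,z)=f^{in}(Z^n(0;t,z))$ and apply the chain rule, then reduce the difference of gradients to three pieces that are already controlled by the machinery built up in Lemmas 6.5, 6.7 and 6.12. Writing
\[
\nabla_z f^{n+1}(t,z)=(\nabla_z f^{in})(Z^n(0;t,z))\,\nabla_z Z^n(0;t,z),
\]
and subtracting the analogous expression for $m$, I would split
\[
\nabla_z f^{n+1}-\nabla_z f^{m+1}
=\bigl[(\nabla_z f^{in})(Z^n(0))-(\nabla_z f^{in})(Z^m(0))\bigr]\nabla_z Z^n(0)
+(\nabla_z f^{in})(Z^m(0))\bigl[\nabla_z Z^n(0)-\nabla_z Z^m(0)\bigr].
\]

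For the first piece, Lemma 6.5$(i)$ gives $|Z^n(0;t,z)-Z^m(0;t,z)|\le\varepsilon_{nm}$ uniformly in $z$, and $\nabla_z f^{in}$ is uniformly continuous by assumption A1, so this piece is itself an $\varepsilon_{nm}$ (after absorbing the factor $|\nabla_z Z^n(0)|\le C_T$ from Lemma 6.5$(ii)$). For the second piece I would invoke Lemma 6.5$(iii)$, which bounds $|\nabla_z Z^n(0)-\nabla_z Z^m(0)|$ by
\[
\varepsilon_{nm}+C_T\int_0^t |(\partial_x F^n)(\tau,X^n(\tau))-(\partial_x F^n)(\tau,X^m(\tau))|+\|\partial_x F^n(\tau)-\partial_x F^m(\tau)\|_\infty\,d\tau.
\]
The second integrand is handled by Lemma 6.7, which at once converts it into the desired $\|\partial_x\alpha^n-\partial_x\alpha^m\|_\infty+\|\partial_x\beta^n-\partial_x\beta^m\|_\infty$ plus an $\varepsilon_{nm}$.

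The main obstacle, and the only nontrivial point, is the first integrand $|(\partial_x F^n)(\tau,X^n)-(\partial_x F^n)(\tau,X^m)|$: since we only know $\partial_x F^n$ is bounded (not uniformly Lipschitz in $x$), we cannot estimate this by a constant times $|X^n-X^m|$. Here the equicontinuity result of Lemma 6.12 is essential. Indeed, from \eqref{6.24} the family $\{\partial_x F^n\}$ inherits equicontinuity in $x$ from $\{\partial_x\alpha^n\}$ and $\{\partial_x\beta^n\}$ (with a common modulus $\omega$ independent of $n$), and Lemma 6.5$(i)$ gives $|X^n(\tau)-X^m(\tau)|\le\varepsilon_{nm}$, so this integrand is bounded by $\omega(\varepsilon_{nm})$, which is again absorbed into $\varepsilon_{nm}$.

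Collecting the three estimates and taking the supremum over $z$ yields
\[
\|\nabla_z f^{n+1}(t)-\nabla_z f^{m+1}(t)\|_\infty\le \varepsilon_{nm}+C_T\int_0^t \|\partial_x\alpha^n(s)-\partial_x\alpha^m(s)\|_\infty+\|\partial_x\beta^n(s)-\partial_x\beta^m(s)\|_\infty\,ds,
\]
which is the claim. The argument for the $v$--derivatives is identical, so one only needs to spell out the $x$--derivative case and remark on the other.
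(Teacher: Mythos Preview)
Your proposal is correct and follows essentially the same route as the paper's proof: the same chain-rule splitting of $\nabla_z f^{n+1}-\nabla_z f^{m+1}$, the same appeal to Lemma~6.5$(i)$--$(iii)$ and Lemma~6.7, and the same use of the equicontinuity from Lemma~6.12 (via \eqref{6.24}) to absorb the term $|(\partial_xF^n)(\tau,X^n)-(\partial_xF^n)(\tau,X^m)|$ into $\varepsilon_{nm}$. Your identification of the latter as the only nontrivial point is exactly right.
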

\begin{proof}
Since $f^{n+1}(t,z)=f^{in}(Z^n(0;t,z))$, we have
\begin{align*}
&\hspace{-0.5cm}|\nabla_zf^{n+1}(t,z)-\nabla_zf^{m+1}(t,z)|\\
&\leq |(\nabla_zf^{in})(Z^n(0;t,z))-(\nabla_zf^{in})(Z^m(0;t,z))||\nabla_xZ^n(0;t,z)|\\
&\hspace{0.5cm}+|(\nabla_zf^{in})(Z^m(0;t,z))||\nabla_xZ^n(0;t,z)-\nabla_xZ^m(0;t,z)|.
\end{align*}
We use Lemma 6.5 $(ii)$, $(iii)$, and then \eqref{6.24} and Lemma 6.7 to obtain
\begin{align*}
&\hspace{-0.5cm}|\nabla_zf^{n+1}(t,z)-\nabla_zf^{m+1}(t,z)|\\
&\leq \varepsilon_{nm}+C_T|(\nabla_zf^{in})(Z^n(0;t,z))-(\nabla_zf^{in})(Z^m(0;t,z))|\\
&\hspace{0.5cm}+C_T\int_0^t|(\partial_x\alpha^n)(\tau,X^n(\tau))-(\partial_x\alpha^n)(\tau,X^m(\tau))|\,d\tau\\
&\hspace{0.5cm}+C_T\int_0^t|(\partial_x\beta^n)(\tau,X^n(\tau))-(\partial_x\beta^n)(\tau,X^m(\tau))|\,d\tau\\
&\hspace{0.5cm}+C_T\int_0^t||\partial_x\alpha^n(\tau)-\partial_x\alpha^m(\tau)||_\infty
+||\partial_x\beta^n(\tau)-\partial_x\beta^m(\tau)||_\infty\,d\tau,
\end{align*}
where we also used Lemma 6.5 $(i)$.
We now use the equicontinuity of $\{\nabla_zf^n\}$, $\{\partial_x\alpha^n\}$, and $\{\partial_x\beta^n\}$
in Lemma 6.12. Since $Z^n-Z^m$ is $\varepsilon_{nm}$, the second, third, and fourth quantities in the RHS
of the above inequality are also $\varepsilon_{nm}$ which is independent of $\partial_x\alpha^n$ and $\partial_x\beta^n$.
This completes the proof of the lemma.
\end{proof}
%%%%%%%%%%%%%%%%%%%%%%%%%%%%%%%%%%%%%%%%%%%%%%%%%%%
%
% Lemma 6.14
%
%%%%%%%%%%%%%%%%%%%%%%%%%%%%%%%%%%%%%%%%%%%%%%%%%%%
\begin{lemma}
Consider the iteration functions \eqref{6.1}--\eqref{6.4}. Then we have
\begin{align*}
&\hspace{-0.5cm} ||\partial_x\alpha^n(t)-\partial_x\alpha^m(t)||_\infty
+||\partial_x\beta^n(t)-\partial_x\beta^m(t)||_\infty\\
&\leq \varepsilon_{nm}+C_T\int_0^t||\partial_x\alpha^n(s)-\partial_x\alpha^m(s)||_\infty
+||\partial_x\beta^n(s)-\partial_x\beta^m(s)||_\infty\,ds\\
&\hspace{0.5cm}+C_T\int_0^t||\nabla_zf^n(s)-\nabla_zf^m(s)||_\infty\,ds,
\end{align*}
where $\varepsilon_{nm}$ is a small positive quantity which tends to zero as $n,m\rightarrow\infty$
and depends on $T$ but not on $x$ and $v$.
\end{lemma}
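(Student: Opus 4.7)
The plan is to follow the same template as Lemma 6.13, but starting from the integral formulae \eqref{5.6}--\eqref{5.7} applied to the $n$-th iterate rather than from the characteristic system. Concretely, for each $n$ I would write
\[
\alpha^n(t,x)=\alpha^{in}(\xi^n(0;t,x))+\int_0^t \cos\theta_2^n\Big(k_0^n\sin\theta_B^n+k_1^n\sin\theta_2^n+k_2^n\cos\theta_2^n\Big)\big(\tau,\xi^n(\tau;t,x)\big)\,d\tau,
\]
and analogously for $\beta^n$ along $\eta^n$, then differentiate in $x$. The $x$-derivative produces a factor $\partial_x\xi^n(\tau;t,x)$ on every term and, when it falls on the integrand, contributions from $\partial_x k_i^n$ and from $\partial_x\theta_2^n=(\partial_x\alpha^n+\partial_x\beta^n)/2$, $\partial_x\theta_B^n=(\partial_x\beta^n-\partial_x\alpha^n)/2$.

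Subtracting the analogous expression for $m$ and grouping, I would classify the resulting pieces into three types. First, the initial-data difference $(\partial_x\alpha^{in})(\xi^n(0))\,\partial_x\xi^n(0)-(\partial_x\alpha^{in})(\xi^m(0))\,\partial_x\xi^m(0)$ is $\varepsilon_{nm}$ by Lemma 6.6 $(i),(ii),(iii)$, together with uniform continuity of $\partial_x\alpha^{in}\in\mathcal{C}^1_c$ and the Cauchy property of $\{\beta^n\}$. Second, every term of the form $h^n(\tau,\xi^n(\tau))-h^n(\tau,\xi^m(\tau))$ with $h^n$ one of $k_i^n$, $\theta_2^n$, $\theta_B^n$, $\cos\theta_2^n$, etc., is $\varepsilon_{nm}$ by Lemma 6.6 $(i)$ and the uniform $\mathcal{C}^1$ bounds \eqref{6.11}; the harder subcase is when $h^n\in\{\partial_x\alpha^n,\partial_x\beta^n\}$, which I would instead handle via the equicontinuity supplied by Lemma 6.12 combined with $|\xi^n(\tau)-\xi^m(\tau)|\le\varepsilon_{nm}$.

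Third, the genuinely new terms $\|k_i^n-k_i^m\|_\infty$, $\|\partial_x k_i^n-\partial_x k_i^m\|_\infty$, $\|\partial_x\theta_2^n-\partial_x\theta_2^m\|_\infty$, and $\|\partial_x\theta_B^n-\partial_x\theta_B^m\|_\infty$ arise. The last two are exactly $\|\partial_x\alpha^n-\partial_x\alpha^m\|_\infty+\|\partial_x\beta^n-\partial_x\beta^m\|_\infty$ (up to a factor $1/2$), which produces the first integral on the right-hand side of the claimed inequality. For the $\partial_x k_i^n$ differences I would use the explicit formulas \eqref{5.14}--\eqref{5.16}: these express $\partial_x k_i^n$ in terms of $\partial_x\rho^n=\int\partial_x f^n\,dv-\partial_x n$, $\partial_x j^n=\int\hat v\,\partial_x f^n\,dv$, $\rho^n$, $j^n$, and $D_1^n$; combined with Lemma 6.1, Lemma 6.3, \eqref{6.6}, and the already-established Cauchy property $\|f^n-f^m\|_\infty=\varepsilon_{nm}$ from Lemma 6.1, this gives
\[
\|\partial_x k_i^n-\partial_x k_i^m\|_\infty\le \varepsilon_{nm}+C_T\|\nabla_z f^n-\nabla_z f^m\|_\infty,
\]
producing the second integral in the conclusion. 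The same computation runs for $\beta$ along $\eta^n,\eta^m$ via Lemma 6.6 $(ii),(iv)$, and summing the two inequalities yields the statement.

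The main obstacle is the bookkeeping in the second class of terms: $\partial_x\alpha^n$ and $\partial_x\beta^n$ evaluated on two different characteristics cannot be turned into the $L^\infty$ differences appearing in the conclusion, so Lemma 6.12 is essential here; absent that tool, the argument would close only if $\partial_x\alpha^n,\partial_x\beta^n$ were known a priori to be uniformly continuous, which is precisely what equicontinuity supplies. Once this step is handled, the remaining estimates are a careful but standard triangle-inequality splitting, and the resulting inequality together with Lemma 6.13 sets up a Grönwall iteration that completes the $\mathcal{C}^1$ convergence.
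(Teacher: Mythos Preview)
Your proposal is correct and follows essentially the same route as the paper: differentiate the integral representation \eqref{5.6} along $\xi^n$ (and \eqref{5.7} along $\eta^n$), subtract, invoke Lemma~6.6 for the characteristic differences, use the equicontinuity from Lemma~6.12 to absorb terms of the form $(\partial_x\alpha^n)(\tau,\xi^n)-(\partial_x\alpha^n)(\tau,\xi^m)$ and $(\partial_x k_i^n)(\tau,\xi^n)-(\partial_x k_i^n)(\tau,\xi^m)$ into $\varepsilon_{nm}$, and use \eqref{5.14}--\eqref{5.16} to bound $\|\partial_x k_i^n-\partial_x k_i^m\|_\infty$ by $\varepsilon_{nm}+C_T\|\nabla_z f^n-\nabla_z f^m\|_\infty$. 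One minor bookkeeping remark: the initial-data piece is not purely $\varepsilon_{nm}$, since Lemma~6.6~$(iii)$ feeds back a $\|\partial_x\beta^n-\partial_x\beta^m\|_\infty$ integral (and an equicontinuity term), but you already account for exactly these contributions elsewhere in your splitting, so the conclusion is unaffected.
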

\begin{proof}
As in the proof of Lemma 6.13, we use the equicontinuity of $\{\partial_x\alpha^n\}$ and $\{\partial_x\beta^n\}$.
We take $x$ derivative on \eqref{5.6}, and then apply Lemma 6.6 to have
\begin{align*}
&\hspace{-0.5cm}|\partial_x\alpha^n(t,x)-\partial_x\alpha^m(t,x)|\\
&\leq\varepsilon_{nm}+C_T\int_0^t||\partial_x\alpha^n(s)-\partial_x\alpha^m(s)||_\infty
+||\partial_x\beta^n(s)-\partial_x\beta^m(s)||_\infty\,ds\\
&\hspace{0.5cm}+C_T\sum_{i=0}^2\int_0^t
|(\partial_xk_i^n)(\tau,\xi^n(\tau))-(\partial_xk_i^n)(\tau,\xi^m(\tau))|\,d\tau\\
&\hspace{0.5cm}+C_T\sum_{i=0}^2\int_0^t
||(\partial_xk_i^n)(\tau)-(\partial_xk_i^m)(\tau)||_\infty\,d\tau.
\end{align*}
Since $\{\nabla_zf^n\}$ is equicontinuous, so are $\{\partial_xk^n_i\}$, $i=0,1,2$,
due to \eqref{5.14}--\eqref{5.16}. Hence, the second integral in the RHS of the above inequality is $\varepsilon_{nm}$
by Lemma 6.6 $(i)$. Moreover, \eqref{5.14}--\eqref{5.16} implies that the last quantity above
is bounded by $\varepsilon_{nm}$ and $||\nabla_zf^n(\tau)-\nabla_zf^m(\tau)||_\infty$,
and this gives the desired result.
\end{proof}
%%%%%%%%%%%%%%%%%%%%%%%%%%%%%%%%%%%%%%%%%%%%%%%%%%%%%%
%
% Convergence of the derivatives
%
%%%%%%%%%%%%%%%%%%%%%%%%%%%%%%%%%%%%%%%%%%%%%%%%%%%%%%
\noindent{\bf Convergence of the derivatives.}
We first define the following quantities for simplicity:
\begin{align*}
&f^{nm}_1(t):=||\nabla_zf^n(t)-\nabla_zf^m(t)||_\infty,\\
&\theta_1^{nm}(t):=||\partial_x\alpha^n(t)-\partial_x\alpha^m(t)||_\infty
+||\partial_x\beta^n(t)-\partial_x\beta^m(t)||_\infty.
\end{align*}
Then, Lemma 6.13 and 6.14 are rewritten as follows:
\begin{align*}
&f_1^{(n+1)(m+1)}(t)\leq\varepsilon_{nm}+C_T\int_0^t\theta_1^{nm}(s)\,ds,\\
&\theta_1^{nm}(t)\leq\varepsilon_{nm}+C_T\int_0^t\theta_1^{nm}(s)
+f_1^{nm}(s)\,ds,
\end{align*}
which give the following two integral inequalities:
\begin{align}
&f_1^{(n+1)(m+1)}(t)\leq\varepsilon_{nm}+C_T\int_0^tf_1^{nm}(s)\,ds,\label{6.25}\\
&\theta_1^{nm}(t)\leq\varepsilon_{nm}+C_T\int_0^tf_1^{nm}(s)\,ds.\label{6.26}
\end{align}
We fix the constant $C_T$ in \eqref{6.25}--\eqref{6.26} and
iterate \eqref{6.25} as in the previous results to obtain
\begin{align*}
&\hspace{-0.5cm}f^{(n+1)(m+1)}(t)\\
&\leq\varepsilon_{nm}+C_Tt\varepsilon_{(n-1)(m-1)}
+\frac{(C_Tt)^2}{2}\varepsilon_{(n-2)(m-2)}+\cdots+\frac{(C_Tt)^k}{k!}\varepsilon_{(n-k)(m-k)}\\
&\hspace{0.5cm}+C_T^{k+1}\int_0^t\frac{1}{k!}(t-s)^kf_1^{(n-k)(m-k)}(s)\,ds\\
&\leq e^{C_Tt}\max\{\varepsilon_{nm},\cdots,\varepsilon_{(n-k)(m-k)}\}
+\frac{(C_Tt)^{k+1}}{(k+1)!}||f_1^{(n-k)(m-k)}||_{L^\infty[0,T]}.
\end{align*}
Since $f_1^{(n-k)(m-k)}$ is bounded by a constant, which depends only on $T$, and the quantity
$\frac{(C_Tt)^{k+1}}{(k+1)!}$ converges to zero as $k\rightarrow\infty$, the above estimate
implies that $\{\nabla_zf^n\}$ is Cauchy on $[0,T]$. Finally, \eqref{6.26} implies that
$\{\partial_x\alpha^n\}$ and $\{\partial_x\beta^n\}$ are also Cauchy, and therefore
we conclude that the solution we constructed in Section 6.3.1 is $\mathcal{C}^1$.

\section*{Acknowledgements}
This research has been supported by the TJ Park Science Fellowship of POSCO TJ Park Foundation.
The author would like to thank Prof. Sophonie Blaise Tchapnda for his helpful advice on electromagnetic theory.

\end{document}